\documentclass[11pt,a4paper]{article}

\usepackage[utf8]{inputenc}
\usepackage[T1]{fontenc}
\usepackage{lmodern}
\usepackage[margin=1in]{geometry}
\usepackage{amsmath,amssymb,amsthm,mathtools}
\usepackage{microtype}
\usepackage{needspace}
\usepackage{booktabs}
\usepackage{enumitem}
\usepackage{float}
\newif\ifreviewalg
\IfFileExists{algorithm.sty}{\reviewalgtrue}{\reviewalgfalse}
\ifreviewalg
  \usepackage{algorithm}
  \usepackage[noend]{algpseudocode}
\else
  \floatstyle{ruled}\newfloat{algorithm}{htbp}{loa}\floatname{algorithm}{Algorithm}
  \newcounter{reviewalgdepth}
  \newenvironment{algorithmic}[1][1]{%
    \setcounter{reviewalgdepth}{0}%
    \begin{enumerate}[label=\arabic*:,leftmargin=2.2em,labelsep=.6em,itemsep=2pt,parsep=0pt,topsep=3pt]\small
  }{\end{enumerate}}
  \newcommand{\State}{\item\hspace*{\value{reviewalgdepth}em}}

  \newcommand{\For}[1]{\State\textbf{for} #1 \textbf{do}\stepcounter{reviewalgdepth}}
  \newcommand{\EndFor}{\addtocounter{reviewalgdepth}{-1}}
  
  \newcommand{\Return}{\textbf{return}\ }
\fi

\usepackage{tikz}
\usetikzlibrary{arrows.meta,positioning}
\usepackage{xcolor}
\usepackage[colorlinks=true,linkcolor=blue!50!black,citecolor=green!40!black,urlcolor=blue!50!black]{hyperref}
\usepackage[capitalise,nameinlink]{cleveref}
\hypersetup{hypertexnames=false}
\usepackage{thmtools}

\declaretheorem[name=Theorem,numberwithin=section]{theorem}
\declaretheorem[name=Lemma,sibling=theorem]{lemma}
\declaretheorem[name=Corollary,sibling=theorem]{corollary}

\declaretheorem[name=Fact,sibling=theorem]{fact}

\declaretheorem[name=Remark,sibling=theorem,style=remark]{remark}
\crefname{algorithm}{Algorithm}{Algorithms}

\usepackage{mathrsfs}
\newcommand{\eps}{\varepsilon}
\newcommand{\gf}{\gamma_{f}}
\newcommand{\N}{\mathbb{N}}
\newcommand{\R}{\mathbb{R}}
\newcommand{\E}{\mathbb{E}}
\newcommand{\Pp}{\mathcal{P}}
\newcommand{\Cc}{\mathscr{C}}
\newcommand{\Oof}{\mathcal{O}}
\newcommand{\id}{\mathrm{id}}
\newcommand{\lead}{\mathrm{lead}}
\newcommand{\cut}{\mathrm{cut}}
\newcommand{\LP}{\mathrm{LP}}
\newcommand{\OPT}{\mathrm{OPT}}
\newcommand{\HEAD}{\textsc{head}}
\newcommand{\TAIL}{\textsc{tail}}
\DeclareMathOperator{\dist}{dist}
\definecolor{saeedcolor}{rgb}{0.80,0.00,0.00}
\definecolor{sebicolor}{rgb}{0.00,0.50,0.15}

\title{Almost Optimal Constant-Round Approximation of \\Dominating Set in Graph Classes with Excluded Minors}

\author{
  Saeed Amiri\\
  Independent researcher
  \and
  Sebastian Siebertz\\
  University of Bremen
}

\date{\today}

\begin{document}
\maketitle

\begin{abstract}
For every fixed proper minor-closed class $\Cc$ and every $\eps>0$, we give a deterministic LOCAL algorithm that returns a dominating set of size at most $(2a(\Cc)+1+\eps)\gf(G)$ on every $G\in\Cc$.
Here $a(\Cc)$ is the supremum edge-to-vertex ratio in~$\Cc$, and $\gf(G)$ is the fractional domination number.
The class also admits a deterministic $(1+\eps)$-approximation for fractional dominating set and a randomized algorithm that always returns a dominating set and has expected size at most $(1+\eps)\gamma(G)$.
In each case, the number of rounds depends only on $\Cc$ and~$\eps$.
None of these algorithms requires the number of vertices or the maximum degree as part of the input.

For planar graphs, this gives the deterministic guarantee $(7+\eps)\gf(G)$.
Together with the lower bound of Hilke, Lenzen and Suomela, it determines the infimum of the deterministic constant-round approximation ratios for planar minimum dominating set as~$7$, settling a question that had remained open since their work.
The corresponding infima, measured against the integral optimum, are $7$ for graphs of Euler genus at most any fixed $g\ge0$, $2t-3$ for $K_t$-minor-free graphs with $3\le t\le9$, and $2r+1$ for graphs of treewidth or pathwidth at most any fixed $r\ge1$.
We also prove that, for every integer~\mbox{$r\ge1$}, no deterministic constant-round LOCAL algorithm achieves an approximation ratio below~\mbox{$2r+1$} on the $r$-th powers of paths, even when every vertex knows the number of vertices.
This gives a new proof that the limiting constants are optimal for planar graphs, graphs of bounded treewidth or pathwidth, and $K_t$-minor-free graphs with $3\le t\le9$. 
For triangle-free planar graphs, the corresponding infimum is $5$.
\end{abstract}

\paragraph{Disclosure of AI use.}
We have long been working on combining cluster-and-solve
algorithms~\cite{CH06,CHW08}, randomized star contraction~\cite{LR15},
Kuhn's derandomization principle for distributed linear
programs~\cite{Kuh05}, and local threshold rounding~\cite{BU17,Dvo19,HKOSV25}.
The recent result of Wawrzyniak~\cite{Waw26} motivated us to complete this work.
We then used AI agents to fill in technical details, polish the language,
and check the correctness of the proofs.

\section{Introduction}\label{sec:intro}

A \emph{dominating set} of a graph $G$ is a set $D\subseteq V(G)$ such that every vertex belongs to $D$ or has a neighbor in~$D$.
The minimum size of such a set is the \emph{domination number} $\gamma(G)$.
Finding a minimum dominating set is NP-hard even on planar graphs~\cite[Problem GT2]{GareyJohnson1979}.
In a distributed network, a dominating set can serve as a set of local coordinators, with every vertex either acting as a coordinator or being adjacent to one.
We study the problem in the LOCAL model~\cite{Linial92,Peleg00}, where the input graph is the communication network, vertices have distinct identifiers, and communication proceeds in synchronous rounds, with no bound on message size or limits on local computation.
In a constant number of rounds, a vertex can base its decision only on information within a bounded distance.
Algorithms with this restriction are usually called \emph{local algorithms}~\cite{NS95,Suomela13}.

On general graphs, a constant number of rounds does not suffice for a constant-factor approximation~\cite{KMW04,KMW16}.
Planar graphs admit such algorithms, and determining the best approximation ratio has taken a succession of improvements, as shown in \cref{tab:history}.
Lenzen, Oswald and Wattenhofer~\cite{LOW08} introduced the first constant-factor algorithm.
The full treatment by Lenzen, Pignolet and Wattenhofer~\cite{LPW13} gives ratio $130$, and Wawrzyniak~\cite{Waw14} improved the analysis to obtain $52$.
Heydt, Siebertz and Vigny obtained ratio $20$ in their 2021 preprint~\cite{HSV21} and improved it to $11+\eps$ in the published conference version~\cite{HSV22}.
Their subsequent joint work with Kublenz and Ossona de Mendez~\cite{HKOSV25} generalized the approach to sparse graph classes.
Most recently, Wawrzyniak~\cite{Waw26} obtained ratio $8+\eps$.
The constant $7$ was already within reach of earlier techniques, but only with
a round bound depending on the maximum degree; see \cref{sec:related}.
The $(11+\eps)$- and $(8+\eps)$-algorithms use a three-phase approach whose final phase rounds a fractional solution on a residual instance of bounded degree.
The lower bounds have a different origin: a Ramsey argument of Czygrinow, Ha{\'n}{\'c}kowiak and Wawrzyniak~\cite{CHW08} excludes every ratio below $5$, and a triangular-grid construction of Hilke, Lenzen and Suomela~\cite{HLS14} excludes every ratio below $7$.
Thus the remaining gap was between the lower bound $7$ and algorithms with ratio $8+\eps$.

We close this gap at the level of the infimum as a consequence of a general theorem for proper minor-closed classes.
These are the classes closed under taking minors that exclude at least one graph, and they include planar graphs, graphs of bounded Euler genus, and graphs of bounded treewidth or pathwidth.
For each such class, our approximation guarantee is determined by its edge density.
The resulting bounds have optimal limiting constants not only for planar graphs, but also for every fixed Euler-genus bound, for $K_t$-minor-free graphs with $3\le t\le9$, and for every fixed treewidth or pathwidth bound.

\begin{table}[t]
\centering\small
\begin{tabular}{@{}lp{0.56\textwidth}p{0.24\textwidth}@{}}
\toprule
ratio or barrier & reference & remark\\
\midrule
$130$ & Lenzen, Pignolet, Wattenhofer~\cite{LPW13} & full treatment of~\cite{LOW08}, first constant ratio\\
$52$ & Wawrzyniak~\cite{Waw14} & new analysis of~\cite{LPW13}\\
$11+\eps$ & Heydt, Siebertz, Vigny~\cite{HSV22} & LP in last phase\\
$8+\eps$ & Wawrzyniak~\cite{Waw26} & framework of~\cite{HKOSV25}\\
$7+\eps$ & this paper & against the fractional optimum\\
\midrule
$5$ & Czygrinow, Ha{\'n}{\'c}kowiak, Wawrzyniak~\cite{CHW08} & every smaller ratio excluded\\
$7$ & Hilke, Lenzen, Suomela~\cite{HLS14} & every smaller ratio excluded\\
\bottomrule
\end{tabular}
\caption{Deterministic constant-round approximation of minimum dominating set on planar graphs in the LOCAL model.}
\label{tab:history}
\end{table}

\subsection{Our results}

Let $\gf(G)$ denote the minimum weight of a fractional dominating set of~$G$.
Equivalently, $\gf(G)$ is the optimum value of the dominating-set linear program, whose formulation is given in \cref{sec:prelim}.
Since $\gf(G)\le\gamma(G)$, an approximation guarantee against $\gf(G)$ is at least as strong as the corresponding guarantee against~$\gamma(G)$.
For a proper minor-closed class $\Cc$, let~$a(\Cc)$ be the supremum of $|E(G)|/|V(G)|$ over its nonempty members.
If $\Cc$ excludes $K_t$ as a minor, then
$a(\Cc)=\Oof(t\sqrt{\log t})$ by the extremal bounds of
Kostochka and Thomason~\cite{Kos84,Tho01}.
More precisely, the supremum of $|E(G)|/|V(G)|$ over all
nonempty $K_t$-minor-free graphs is
$(\alpha+o(1))t\sqrt{\ln t}$ as $t\to\infty$,
where $\alpha=0.319086\ldots$~\cite{Tho01}.
We prove the following main theorem. 
\pagebreak

\begin{theorem}\label{thm:main}
Let $\Cc$ be a fixed proper minor-closed class and let $\eps>0$.
The following algorithms exist in the LOCAL model, each with a number of rounds depending only on $\Cc$ and~$\eps$.
\vspace{-1mm}
\begin{enumerate}[label=(\alph*)]
\item A randomized algorithm that always returns a dominating set $D$ and satisfies $\E|D|\le(1+\eps)\gamma(G)$ on every $G\in\Cc$.
\item A deterministic algorithm that returns a fractional dominating set of weight at most $(1+\eps)\gf(G)$ on every $G\in\Cc$.
\item A deterministic algorithm that returns a dominating set of size at most $(2a(\Cc)+1+\eps)\gf(G)$ on every $G\in\Cc$.
\end{enumerate}
\vspace{-1mm}
None of these algorithms requires the number of vertices or the maximum degree with the input.
\end{theorem}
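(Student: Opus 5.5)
The plan is to build all three algorithms on a common clustering step and then handle the three parts by different local procedures.

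\emph{Clustering step.} Fix $\delta>0$ depending on $\eps$. Compute in $f(\Cc,\delta)$ rounds a partition of $V(G)$ into clusters of bounded weak diameter. Only a $\delta$-fraction of an optimal (fractional) solution should be ``cut''. For proper minor-closed classes such low-diameter decompositions follow the cluster-and-solve paradigm of Czygrinow, Ha\'n\'ckowiak and Wawrzyniak~\cite{CHW08}. A deterministic version is obtained by iterating heavy-edge contractions, which is possible because minor-closed classes have bounded edge density and every contraction stays in $\Cc$. A randomized version uses the star contraction of Lenzen and Wattenhofer~\cite{LR15}.

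For the randomized part~(a), I would proceed as follows.
\begin{enumerate}
\item Charge boundary effects to a set $B$ of vertices near cluster borders, with $\E|B\cap D^*|\le\delta\gamma(G)$ for an optimal dominating set $D^*$.
\item Inside each cluster, compute by brute force a minimum set dominating the cluster's interior.
\item Add all of $B$'s undominated vertices themselves, or each border vertex's best local dominator.
\end{enumerate}
The expected excess is $O(\delta)\gamma(G)$. Because the output is always repaired to be dominating, correctness is unconditional and only the size bound is in expectation. No knowledge of $n$ or $\Delta$ is needed, since the number of contraction phases depends only on $\delta$ and $\Cc$.

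For part~(b), I would follow Kuhn's derandomization principle~\cite{Kuh05} for covering LPs. Solve the dominating-set LP exactly on each cluster, including a boundary margin, then average the local solutions over a bounded family of shifted clusterings. This averaging is what makes each LP constraint satisfied: every vertex is interior in most shifts. It also yields weight at most $(1+O(\delta))\gf(G)$, because the restriction of a global optimum is feasible for each local LP. The step I expect to be hardest is producing, deterministically and in constant rounds, enough clusterings so that each vertex is deep inside its cluster in a $(1-\delta)$-fraction of them. I would try a deterministic cluster-and-solve contraction hierarchy and take the clusters at different levels as the shifts.

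For part~(c), start from the fractional solution $x$ of~(b) and apply local threshold rounding~\cite{BU17,Dvo19,HKOSV25}.
\begin{enumerate}
\item Put into $D$ every vertex with $x_v\ge\theta$, for a suitable threshold such as $\theta=1/(2a+1)$.
\item For each still-undominated vertex $u$, add a designated neighbor, chosen e.g.\ as the neighbor of largest identifier in $N[u]$ that is not already taken.
\end{enumerate}
The count of vertices added in the second step is bounded by orienting the graph with out-degree at most $a(\Cc)$ and charging each added vertex to fractional mass. Each undominated $u$ has $\sum_{N[u]}x\ge1$ with all entries below $\theta$. A double-counting over edges, using $|E(H)|\le a(\Cc)|V(H)|$ on the subgraph spanned by high-$x$ vertices and their clients, gives $|D|\le(2a(\Cc)+1+O(\delta))\,\|x\|_1$. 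Combining this with~(b) yields~(c).
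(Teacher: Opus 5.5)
\textbf{Gaps in the clustering step and in the derandomization for (b).} Your part (c) is the paper's threshold rounding and is fine once (b) is available; the paper rounds at a rational $a'$ slightly above $a(\Cc)$ so that the comparison is exact.

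For (a), nothing in your plan makes the repair cost proportional to $\gamma(G)$. The clustering cannot see $D^*$. Star contraction run on $G$ itself only guarantees $\E[\cut(\Pp_L)]\le q^L|E(G)|$, and $|E(G)|/\gamma(G)$ is unbounded: on $K_{1,n}$ each leaf stays a singleton cluster with probability about $(3/4)^L$, while $\gamma=1$. Moreover, your step~3 pays per undominated border vertex (or per chosen dominator), so a bound on $\E|B\cap D^*|$ does not control it.

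The missing idea is the reduction through an initial deterministic $c$-approximate dominating set $D_0$:
\begin{itemize}
\item attach every vertex to a neighbor in $D_0$;
\item contract the resulting star-shaped groups to a graph $H\in\Cc$ with at most $a|D_0|\le ac\gamma(G)$ edges, and cluster $H$;
\item solve each $G[U]$ exactly and compare with $(D^*\cup B)\cap U$, where $B$ is the set of leaders of boundary groups.
\end{itemize}
A leader dominates its whole group inside its cluster, and $|B|\le 2\cut(\Pp_L)$, so $\E|B|\le 2acq^L\gamma(G)$.

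Your deterministic clustering is also not constant-round. Heavy-star selection needs a proper coloring of the pseudo-forest, which costs $\Oof(\log^*n)$ rounds. Worse, a deterministic constant-round clustering with boundary cost $\delta\gamma(G)$ would give a deterministic $(1+\delta)$-approximation, contradicting the known lower bounds (e.g.\ $7$ on planar graphs).

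For (b), the step you flag as hardest is left open, and your proposed fix fails. The levels of one contraction hierarchy are nested, so every edge cut at the top level is cut at every level, and its endpoints are never interior. Star contraction also does not make high-degree vertices interior with high probability: the centre of $K_{1,n}$ is interior with probability tending to $0$.

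No deterministic family of clusterings is needed. Kuhn's principle, as the paper uses it, works as follows:
\begin{itemize}
\item Take the fractional variant of the randomized algorithm, where each cluster $U$ solves the LP on $G[U]$ with a constraint for \emph{every} vertex of $U$. Then every coin outcome is feasible.
\item Each vertex $v$ enumerates all assignments of the finitely many coins in its bounded view and outputs $x_v=\E[y_v]$ exactly.
\item For each fixed assignment the clusters are globally consistent, so by linearity $x$ is a fractional dominating set with $|x|=\E|y|$.
\end{itemize}
For the weight bound, the restriction of an optimal $z$ is not feasible at the boundary vertices of such a cluster LP. The paper uses $z+\mathbf 1_B$, which gives $\E|y|\le\gf(G)+\E|B|$. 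Since $\E|B|$ is controlled only through $|D_0|\le c\gamma(G)$, you also need $\gamma(G)\le(2a+1)\gf(G)$, from the rounding lemma, to reach $(1+\eps)\gf(G)$.
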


The planar case gives the first main consequence.
Planar graphs have $a(\Cc)=3$, since every planar graph has at most $3|V(G)|$ edges and the edge-to-vertex ratios of planar triangulations tend to~$3$.
Together with the lower bound of Hilke, Lenzen and Suomela~\cite{HLS14}, \cref{thm:main}(c) therefore gives the following corollary.

\Needspace{10\baselineskip}
\begin{corollary}\label{cor:planar-intro}\label{cor:optimal}
For every $\eps>0$, a deterministic $O_\eps(1)$-round LOCAL algorithm returns a dominating set of size at most $(7+\eps)\gf(G)$ on every planar graph~$G$.
The algorithm requires neither the number of vertices, nor the maximum degree, nor a planar embedding with the input.
Hence, the infimum of the approximation ratios achievable by deterministic constant-round LOCAL algorithms for planar minimum dominating set is~$7$.
\end{corollary}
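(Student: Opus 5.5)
The corollary follows by combining \cref{thm:main}(c) with the known lower bound, so the plan has three short steps: compute $a(\Cc)$ for planar graphs, check the input assumptions, and match upper and lower bounds.

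\textbf{Step 1: the density of planar graphs.} Let $\Cc$ be the class of planar graphs. It is minor-closed, since minors of planar graphs are planar, and it is proper, since it excludes $K_5$. Euler's formula gives $|E(G)|\le 3|V(G)|-6<3|V(G)|$ for every planar $G$ with at least three vertices. Graphs with one or two vertices have at most one edge, so their ratio is at most $1/2$. Planar triangulations on $n$ vertices have exactly $3n-6$ edges, so the ratio $3-6/n$ tends to $3$. Hence $a(\Cc)=3$, and the guarantee of \cref{thm:main}(c) reads $(2\cdot 3+1+\eps)\gf(G)=(7+\eps)\gf(G)$.

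\textbf{Step 2: round complexity and input.} \cref{thm:main} says the number of rounds depends only on $\Cc$ and $\eps$. With $\Cc$ fixed as the planar graphs, this is $O_\eps(1)$. The theorem also says the algorithm needs neither $n$ nor the maximum degree. I still need to confirm that no planar embedding is required. The algorithm is designed for an arbitrary minor-closed class and is given only the graph, so it cannot use an embedding. Its analysis uses membership in $\Cc$ only through the density bound and closure under minors, and both are properties of the abstract graph. One point needs care here. If the construction contracts clusters, the density bound must hold for the contracted graphs. This is fine, because contractions of planar graphs are minors and therefore planar.

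\textbf{Step 3: the infimum.} Since $\gf(G)\le\gamma(G)$, Steps 1 and 2 give a ratio of $7+\eps$ against $\gamma(G)$ for every $\eps>0$. So the infimum of achievable ratios is at most $7$. For the matching lower bound I invoke the Hilke--Lenzen--Suomela result~\cite{HLS14}: no deterministic constant-round LOCAL algorithm achieves any ratio $7-\delta$ with $\delta>0$ on planar graphs. This is measured against $\gamma$, which is exactly the standard notion of approximation ratio used in the claim. Together the two bounds show the infimum is exactly $7$. The bound $7$ itself need not be attained, which is why the claim is about the infimum.

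\textbf{Main obstacle.} Little in this argument is substantive. The step needing the most care is making sure the lower bound of~\cite{HLS14} applies in the same model as the upper bound. Its constructions are planar triangular-grid-type graphs. It is stated for deterministic algorithms, and it is proved for algorithms that may even know $n$. Given these three properties, it covers every algorithm considered here, and in particular the one from Step 2, which uses strictly less information.
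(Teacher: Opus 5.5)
Your proposal is correct and follows the paper's own argument: you compute $a(\Cc)=3$ for planar graphs, read off the $(7+\eps)\gf(G)$ guarantee and the independence from $n$, maximum degree and embedding from \cref{thm:main}(c), use $\gf\le\gamma$ to bound the infimum from above, and invoke~\cite{HLS14} for the matching lower bound. Your claim that~\cite{HLS14} also covers algorithms that know $n$ is not needed, because the upper-bound algorithm uses no global parameters and any lower bound valid in the model of \cref{sec:prelim} suffices; if you do want a lower bound explicitly robust to knowledge of $n$, \cref{thm:path-lb} with $r=3$ provides one on planar path powers.
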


For $\eps\in(0,1]$, our planar algorithm runs in $T_0+\Oof((c_0/\eps)^\kappa)$ rounds, where $c_0=52$, $T_0$ is the round complexity of the initial constant-factor algorithm, and $\kappa=\ln3/\ln(24/23)<25.9$.
The number of rounds is independent of the order and maximum degree of the input graph, but grows as $\eps$ tends to zero.
Whether a deterministic constant-round algorithm can attain ratio exactly $7$ remains open.

The dependence on $a(\Cc)$ can be improved when the density bound has a fixed additive excess.
Define the \emph{asymptotic density} $d^*(\Cc)$ to be the infimum of the numbers $d\ge0$ for which there is a constant $b\ge0$ such that $|E(G)|\le d|V(G)|+b$ for every $G\in\Cc$.
We always have $d^*(\Cc)\le a(\Cc)$.
For example, the class of graphs of Euler genus at most a fixed $g$ has asymptotic density~$3$, even though its supremum density can be larger.

\begin{theorem}\label{thm:asym-intro}
For every fixed proper minor-closed class $\Cc$ and every $\eps>0$, a deterministic $O_{\Cc,\eps}(1)$-round LOCAL algorithm returns a dominating set of size at most $(2d^*(\Cc)+1+\eps)\gamma(G)$ on every $G\in\Cc$.
\end{theorem}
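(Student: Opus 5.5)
The plan is to reduce Theorem~\ref{thm:asym-intro} to the machinery behind \cref{thm:main}(c). The additive constant $b$ is then absorbed by treating small instances separately: either the local pieces are small, so they can be solved exactly, or the optimum is large relative to $b$. Fix $d>d^*(\Cc)$ with $2d+1\le 2d^*(\Cc)+1+\eps/2$, and let $b$ be the constant with $|E(H)|\le d|V(H)|+b$ for all $H\in\Cc$. The ratio $2a(\Cc)+1$ in \cref{thm:main}(c) should arise from a charging argument. After computing a $(1+\eps')$-approximate fractional solution by \cref{thm:main}(b) and rounding locally, each vertex of the output outside the threshold set is charged to an edge of an auxiliary minor of $G$ built on a set of size about $\gf(G)$ or $\gamma(G)$. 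The step where $a(\Cc)$ enters is the bound $|E(M)|\le a(\Cc)|V(M)|$ for that minor~$M$, which is in $\Cc$ because the class is minor-closed. Replacing this bound by $|E(M)|\le d|V(M)|+b$ gives output size at most $(2d+1+\eps/2)\cdot\OPT+2b$. Since the comparison is now against $\gamma(G)$ rather than $\gf(G)$, I would run the charging against an optimal integral dominating set, or simply use $\gf\le\gamma$ at the end.

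The remaining task is to eliminate the additive $2b$, which is the main obstacle. It is harmless when $\gamma(G)\ge 4b/\eps$, but the algorithm cannot know $\gamma(G)$ globally. A global bound also does not suffice, because the additive term must be paid per connected component. I would handle this componentwise and per cluster.

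First, every component of $G$ has its own dominating sets, and the charging minor can be taken componentwise, so the additive term is $b$ per component of $M$. Second, I would work with the clustering used in the algorithm: a decomposition into clusters of bounded weak diameter, with few inter-cluster edges, as in cluster-and-solve. Clusters whose local optimum is below $4b/\eps$ (a quantity a cluster can compute locally since its radius is bounded) are solved optimally by brute force inside the cluster. Clusters with a larger local optimum use the rounding, whose additive $b$ is then at most an $\eps/4$ fraction of that cluster's contribution. Components of $G$ of bounded radius are treated as such clusters.

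The delicate point is ensuring that the charging minor decomposes along these clusters, up to a small fraction of boundary edges. The interaction edges between clusters must be charged using the sparse-cut property of the clustering, with at most an $\eps$ fraction of $\gamma(G)$ cut. This is where I expect to spend the most effort. Summing over the clusters then yields $(2d^*(\Cc)+1+\eps)\gamma(G)$.
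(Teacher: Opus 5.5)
Your proposal has a genuine gap. It has one fixable weakness and one step that cannot work as described.

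\textbf{The rounding step.} You attribute the factor $2a+1$ to charging onto edges of an auxiliary minor on roughly $\OPT$ vertices. Threshold rounding does not work that way. The only such minor is the quotient $H$ on $D_0$, and it only bounds the clustering's boundary cost, which is already $O(q^L)$. In \cref{lem:rounding}, density enters through a fractional orientation of $G$ itself with out-degree at most $a$ at every vertex. That orientation exists because every subgraph is $a$-sparse. Under $|E|\le d|V|+b$ with $b>0$ it need not exist, so you cannot just replace one edge bound by another. What you need is an orientation of $G$ whose total excess $\sum_u\max\{0,\mathrm{out}_o(u)-d\}$ is at most $\Phi_d(G)\le b$, obtained by a max-flow/min-cut argument (\cref{lem:excess-orientation}). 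Rerunning the rounding with it gives $|D|\le(2d+1)|x|+2\Phi_d(G)/d$. This is repairable, but it is the actual lemma needed, and your sketch does not contain it.

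\textbf{Removing the additive term.} Your plan here cannot work. In the deterministic algorithm there are no clusters. The clustering is random, Kuhn's averaging replaces the output by the coordinatewise expectation $x$, and rounding is applied once to this global vector. Suppose you round per cluster inside the randomized algorithm and then derandomize. You only get another fractional vector, which must be rounded again at a further factor $2d+1$. A deterministic constant-round clustering with a sparse cut is exactly what is unavailable, which is why the paper randomizes. So the ``delicate point'' you flag, making a charging minor decompose along clusters, is not resolvable along these lines.

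The missing idea avoids clusters entirely. The algorithm uses no global parameters, so on every component $K$ it gives $|D\cap K|\le(2d+1)(1+\eps')\gamma(K)+\beta$ with $\beta$ a constant. A vertex can decide from its view of radius $R+1$ whether its component has diameter at most $R$. Such components are solved exactly by brute force. Every other component satisfies $\gamma(K)\ge(\operatorname{diam}(K)+1)/3>R/3$, so choosing $R$ large in terms of $\beta/\eps$ makes $\beta\le(\eps/4)\gamma(K)$. Summing over components gives $(2d+1+\eps/2)\gamma(G)$. You do mention treating bounded-radius components exactly. Without the lower bound $\gamma(K)\ge(\operatorname{diam}(K)+1)/3$, however, nothing in your argument absorbs the additive term on the remaining components.
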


Note that in \cref{thm:asym-intro} the approximation is with respect to the integral optimum $\gamma(G)$.
In contrast, \cref{thm:main}(c) gives the stronger guarantee against the fractional optimum, with~$a(\Cc)$ in place of $d^*(\Cc)$.
This yields the following consequences.

\begin{corollary}\label{cor:sharp-intro}
For every $\eps>0$, deterministic constant-round LOCAL algorithms have the following guarantees.
\begin{enumerate}[label=(\alph*)]
\item On graphs of Euler genus at most a fixed $g\ge0$, the computed dominating set has size at most $(7+\eps)\gamma(G)$.
For $g\le2$, the stronger bound $(7+\eps)\gf(G)$ holds.
\item On $K_t$-minor-free graphs, for each fixed $3\le t\le9$, the computed dominating set has size at most $(2t-3+\eps)\gf(G)$.
\item On graphs of treewidth or pathwidth at most a fixed $r\ge1$, the computed dominating set has size at most $(2r+1+\eps)\gf(G)$.
\end{enumerate}
Hence, the infima of the deterministic constant-round approximation ratios against the integral optimum are $7$, $2t-3$, and $2r+1$, respectively.
\end{corollary}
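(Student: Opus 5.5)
The plan is to derive \cref{cor:sharp-intro} from \cref{thm:main}(c) and \cref{thm:asym-intro} by computing $a(\Cc)$ or $d^*(\Cc)$ for each class. The lower bounds come from the path-power construction announced in the abstract, together with the result of Hilke, Lenzen and Suomela~\cite{HLS14}.

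\textbf{Part (a).} For graphs of Euler genus at most $g$, Euler's formula gives $|E(G)|\le 3|V(G)|+3(g-2)$ for simple graphs with $|V(G)|\ge3$. Hence $d^*(\Cc)\le3$, and \cref{thm:asym-intro} yields $(7+\eps)\gamma(G)$. For $g\le2$ we have $3(g-2)\le0$, so $|E|\le3|V|$ and therefore $a(\Cc)\le 3$. Graphs with at most two vertices satisfy this trivially. So \cref{thm:main}(c) gives $(7+\eps)\gf(G)$. For the matching lower bound, planar graphs lie in every genus class, so the bound $7$ of~\cite{HLS14} transfers.

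\textbf{Part (b).} Mader's theorem states that, for $3\le t\le 7$, $K_t$-minor-free graphs have at most $(t-2)|V|-\binom{t-1}{2}$ edges. Jørgensen ($t=8$) and Song--Thomas ($t=9$) extend this bound, with the exceptional graphs for $t=8,9$ still below ratio $t-2$. Hence $a(\Cc)\le t-2$ and the ratio is $2(t-2)+1=2t-3$. For the lower bound I would use the $(t-2)$-th power of a path. It has pathwidth $t-2$, so it is $K_t$-minor-free, because treewidth $\le t-2$ excludes $K_t$. The new path-power theorem then gives the bound $2(t-2)+1=2t-3$. For $t=3$ this is the forests case with ratio $3$.

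\textbf{Part (c).} A graph of treewidth at most $r$ is a subgraph of an $r$-tree, which has $r|V|-\binom{r+1}{2}$ edges once $|V|\ge r$. Smaller graphs have at most $\binom{|V|}{2}\le r|V|$ edges. So $a(\Cc)\le r$ and \cref{thm:main}(c) gives $2r+1$. Pathwidth $\le r$ is a subclass with the same bound. Both classes are minor-closed and proper. The $r$-th power of a path has pathwidth $r$, so the lower bound $2r+1$ applies to both classes.

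Finally, since $\gf\le\gamma$, every upper bound holds against the integral optimum, and the infima follow. The main obstacle is invoking the exact extremal edge bounds for $t=8,9$ correctly, including their exceptional graphs. The rest is routine, assuming the path-power lower bound is proved later in the paper.
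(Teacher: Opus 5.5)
Your proposal is correct and follows the paper's own route. Part (a) is \cref{thm:genus}, using the Euler bound of \cref{lem:genus-density}; your $d^*\le3$ suffices in place of the paper's equality. Parts (b) and (c) are \cref{cor:minors}: \cref{thm:main}(c) with the Mader, J{\o}rgensen and Song--Thomas and treewidth density bounds, plus the path powers of \cref{thm:path-lb} and~\cite{HLS14} for the lower bounds. There are two cosmetic points. First, Mader's bound $(t-2)|V|-\binom{t-1}{2}$ holds only for $|V|\ge t-1$, so in (b) you should add the trivial estimate $\binom{n}{2}\le(t-2)n$ for smaller $n$, as you did in (c). Second, your $r$-tree edge count in (c) is an equivalent substitute for the paper's degeneracy argument.
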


Thus the planar limiting constant remains optimal for every fixed Euler-genus bound, improving the previous genus-independent ratio $25+\eps$~\cite{Waw26,BDGPSW26}.
We also prove matching lower bounds for the excluded-clique and width bounds.
For every $r\ge1$, we prove that no deterministic constant-round algorithm achieves a ratio below $2r+1$ on the $r$-th powers of paths.
These graphs have pathwidth $r$ once their order is at least $r+1$, and exclude $K_{r+2}$ as a minor.
For $r=3$, these graphs are planar and have pathwidth three, so the same argument recovers the planar lower bound on this restricted family. The lower bounds hold even for algorithms in which every vertex knows the number of vertices.

For triangle-free planar graphs, \cref{thm:main}(b) and a sharper rounding bound give output size at most $(5+\eps)\gf(G)$, improving the previous ratio $6$~\cite{Waw25}.
Together with the square-grid lower bound~\cite[Section~4]{BCGW21}, this determines the infimum of the deterministic constant-round approximation ratios as $5$.
More generally, a planar graph of girth at least $g\ge3$ admits the guarantee $(2g/(g-2)+1+\eps)\gf(G)$.
We also obtain bounds for $K_{2,t}$-minor-free graphs and $K_{3,3}$-minor-free graphs, and asymptotic bounds for exclusion of an arbitrary clique.
Our results are summarized in \cref{tab:beyond}.

\begin{table}[t]
\centering\small
\begin{tabular}{@{}p{0.28\textwidth}p{0.32\textwidth}p{0.16\textwidth}p{0.14\textwidth}@{}}
\toprule
class & earlier guarantees & this paper & lower bound\\
\midrule
planar graphs & $8+\eps$~\cite{Waw26} & $7+\eps$ & $7-\eps$~\cite{HLS14}\\
triangle-free planar graphs & $6$~\cite{Waw25} & $5+\eps$ & $5-\eps$~\cite{BCGW21}\\
Euler genus $\le g$ & $25+\eps$ (genus-independent)~\cite{Waw26,BDGPSW26} & $7+\eps$ & $7-\eps$~\cite{HLS14}\\
$K_t$-minor-free, $4\le t\le9$ & $t^{\Oof(t^2\sqrt{\log t})}$~\cite{KSV21} & $2t-3+\eps$ & $2t-3-\eps$\\
$K_t$-minor-free, $t\ge10$ & $t^{\Oof(t^2\sqrt{\log t})}$~\cite{KSV21} & $2a_t+1+\eps$ & $2t-3-\eps$\\
treewidth or pathwidth $\le r$, $r\ge1$ & via $K_{r+2}$-minor-free graphs & $2r+1+\eps$ & $2r+1-\eps$\\
$K_{2,3}$-minor-free & $5$~\cite[Thm.~4.4]{BGPW25b} & $5+\eps$ & $5-\eps$~\cite{BCGW21}\\
$K_{2,t}$-minor-free, $t\ge4$ & $\min\{2t-1,50\}$~\cite{BGPW25b} & $t+2+\eps$ & $5-\eps$~\cite{BCGW21}\\
$K_{3,3}$-minor-free & $14+\eps$~\cite{BGPW25b,HKOSV25} & $7+\eps$ & $7-\eps$~\cite{HLS14}\\
proper minor-closed $\Cc$ & $O_\Cc(1)$~\cite{CHWW18,HKOSV25} & $2d^*(\Cc)+1+\eps$ & \\
\bottomrule
\end{tabular}
\caption{Deterministic constant-round approximation guarantees for minimum dominating set.
All ratios in the table are measured against $\gamma$.
Lower bounds without citations follow from \cref{thm:path-lb}.
For $K_{2,t}$-minor-free graphs, combining the available algorithms gives $\min\{t+2+\eps$, $2t-1,50\}$; for sufficiently small positive $\eps$, the new bound improves the displayed earlier guarantees exactly when $4\le t\le47$.
The $K_{3,3}$ bound $14+\eps$ is the case $t=3$, after rescaling $\eps$, of the guarantee $(2+\eps)(t+4)$ reported in~\cite{BGPW25b} from~\cite{HKOSV25}.
Here $a_t$ is the supremum edge density of $K_t$-minor-free graphs; the extremal bounds and the path powers give $a_t=t-2$ for $3\le t\le9$, while $a_t=\Theta(t\sqrt{\log t})$ in general~\cite{Kos84,Tho01}.}
\label{tab:beyond}
\end{table}

\subsection{Proof overview}

Our starting point is the approximation scheme of Czygrinow, Ha{\'n}{\'c}kowiak and Wawrzyniak~\cite[Theorem~3.4]{CHW08} and its extension to excluded-minor classes~\cite{CHWW18}.
Fix a proper minor-closed class~$\Cc$ with $a=a(\Cc)>0$.
We first compute a dominating set $D_0$ of size at most $c\gamma(G)$, where $c$ depends only on~$\Cc$, and partition the vertices into connected groups, one around each vertex of~$D_0$.
Contracting these groups gives a simple graph $H\in\Cc$ with at most $a|D_0|$ edges.
The algorithm then partitions $H$ into clusters of bounded diameter with few edges between clusters, and solves the domination problem inside the corresponding clusters of the original graph.
The extra cost is at most twice the number of edges between clusters in~$H$.
In the deterministic scheme there is a dependence on $n$ that comes from the clustering step: star selection uses a proper coloring of a pseudo-forest with a constant number of colors, obtained in $\Oof(\log^*n)$ rounds by the Cole-Vishkin algorithm~\cite{CV86}.
Known lower bounds prevent a deterministic approximation scheme from having constant round complexity~\cite{CHW08,LW08}.
\pagebreak

For the clustering step, we instead use a variant of the randomized
heavy-star contraction described by Levi and Ron~\cite{LR15},
building on the clustering method of Czygrinow, Ha{\'n}{\'c}kowiak
and Wawrzyniak~\cite{CHW08}.
Each current part tosses one fair coin at each level, and the
coin outcomes determine which of the selected edges are contracted.
Every quotient remains in $\Cc$, so the same density bound
applies at every level.
For our variant, conditioned on the preceding levels, the
remaining edge weight decreases in expectation by a factor of
at most $q=1-1/(8a)$.
Applied to the auxiliary graph obtained from $D_0$, this gives
an expected boundary cost of at most $2acq^L\gamma(G)$ after
$L$ levels.
For $0<\eps\le1$, choosing $L=O_{\Cc}(1+\log(1/\eps))$
makes this cost at most $\eps\gamma(G)$.
The cluster diameters depend only on $L$, which gives the
randomized constant-round algorithm in \cref{thm:main}(a).
For planar graphs, $a=3$ and $q=23/24$.

We remove the randomness using Kuhn's derandomization
principle for distributed linear programs~\cite[Theorem~2.32]{Kuh05}.
For the integral variant, this means setting $x_v$ to the
probability that the randomized algorithm selects~$v$.
For each fixed input, this probability depends on a neighborhood of bounded radius and finitely many fair coins, so $v$ can compute it exactly by enumerating their outcomes.
Every outcome is a dominating set, and taking expectations in each domination constraint shows that the vector $x$ is a fractional dominating set.
Moreover, $|x|$ equals the expected size of the randomized output.
Threshold rounding~\cite{BU17,Dvo19,HKOSV25} then returns a dominating set of size at most $(2a+1)|x|$ in one additional round, when the threshold can be compared exactly.
For planar graphs, the factor is~$7$.
More generally, a slightly larger rational density parameter allows exact comparisons while changing the final guarantee by an arbitrarily small amount.

Averaging the integral solutions gives a weight bound in terms of $\gamma(G)$.
To obtain the stronger guarantee against $\gf(G)$, we solve
the fractional domination problem inside each cluster and
apply the same averaging principle to these rational solutions.
The rounding lemma also implies $\gamma(G)\le(2a+1)\gf(G)$, so the expected boundary cost is at most $2a(2a+1)cq^L\gf(G)$.
Choosing~$L$ accordingly gives a fractional solution of weight at most $(1+\eps)\gf(G)$, proving \cref{thm:main}(b).
Applying threshold rounding with a suitably smaller error proves \cref{thm:main}(c).
The global optimum $\gf(G)$ serves only as a benchmark in this argument.
The algorithm solves LPs inside bounded-diameter clusters, and never needs to compute the global optimum.

For \cref{thm:asym-intro}, we refine the rounding argument to allow a bounded additive excess in the density bound.
This gives an additive constant in the output size.
Following the additive-error reduction of~\cite[full version, Proposition~D.2]{BDGPSW26}, we solve components of sufficiently small diameter exactly and absorb the additive cost on every remaining component using $\gamma(K)\ge(\operatorname{diam}(K)+1)/3$ for every nonempty connected graph~$K$.
This step gives the guarantee against the integral optimum and yields ratio $7+\eps$ for every fixed Euler genus.

\subsection{Related work}\label{sec:related}

The cluster-and-solve approach of Czygrinow and Ha{\'n}{\'c}kowiak~\cite{CH06} and Czygrinow, Ha{\'n}{\'c}kowiak and Wawrzyniak~\cite{CHW08} builds on earlier uses of separators and bounded-treewidth decompositions for planar optimization~\cite{LT80,Baker94}.
Czygrinow et al.~\cite{CHW08} already used randomization to obtain a constant-round approximation scheme for maximum independent set that succeeded with high probability, and discussed extensions to other problems.
Randomized star contraction with one fair coin per part also
appeared in the partition-oracle construction of Levi and
Ron~\cite{LR15} for excluded-minor graphs of bounded degree.
Our clustering uses a variant of this rule together with the
heaviest-edge pseudo-forest of~\cite[Fact~2.2]{CHW08}.
We give the analysis for dominating set directly on minor-closed
classes in \cref{sec:cluster}, keeping track of the expected
boundary cost and the local randomness needed for the subsequent
averaging step.

Approximating a covering LP and rounding it locally is also an established approach.
Kuhn and Wattenhofer~\cite{KW05} gave the first constant-round distributed algorithm for minimum dominating set on general graphs with a non-trivial expected approximation ratio, using randomized rounding of a locally computed LP solution and obtaining a ratio that depends on the maximum degree.
The covering-LP algorithm of Kuhn, Moscibroda and Wattenhofer~\cite{KMW06,KMW16}, specialized to the unweighted dominating-set LP, gives a deterministic $(1+\eps)$-approximation in $\Oof(\log^2(\max\{2,\Delta\})/\eps^4)$ rounds.
Together with the rounding bound of Bansal and Umboh~\cite{BU17}, improved by Dvo{\v{r}}{\'a}k~\cite{Dvo19}, this gave ratio $(2\alpha+1)(1+\eps)$ on graphs of arboricity at most~$\alpha$.
Dory, Ghaffari and Ilchi~\cite[Theorem~1]{DGI22} obtained this ratio deterministically in $\Oof((1+\log^+(\Delta/\alpha))/\eps)$ CONGEST rounds, even for weighted domination, assuming that $\Delta$ and $\alpha$ are known to all vertices. Here $\log^+z=\max\{0,\log z\}$.
For planar graphs, one can take $\alpha=3$ and rescale $\eps$ to obtain ratio $7+\eps$, but the round bound still depends on~$\Delta$.
Their lower bound on graphs of arboricity $2$ showed that some dependence on degree was unavoidable in that broader class: every constant-factor approximation required $\Omega(\log\Delta/\log\log\Delta)$ rounds on suitable such graph families.
Our use of sparse contractions permits a bound independent of degree on every fixed proper minor-closed class.

Other work addressed restricted message sizes.
Amiri~\cite{Ami21} gave an $\Oof(\alpha^2)$-approximation in $\Oof(\log n)$ CONGEST rounds.
Alipour and Farokhnejad~\cite{AF26} obtained ratio $\Oof(\alpha\log\Delta/\log\log\Delta)$ in $\Oof(\log\Delta/\log\log\Delta)$ rounds with one-bit messages, without prior knowledge of~$\alpha$ but with prior knowledge of~$\Delta$. As usual, bounded values of $\Delta$ receive a constant-size treatment.
In the planar algorithms of Heydt et al.~\cite{HKOSV25} and Wawrzyniak~\cite{Waw26}, fractional rounding was applied only after earlier phases had reduced the remaining instance to bounded degree.
The cost of those phases accounted for the difference between their guarantees and the factor $7$ from rounding.
\vspace{-2mm}

\paragraph{Subclasses of planar graphs.}
For outerplanar graphs, Bonamy, Cook, Groenland and Wesolek~\cite{BCGW21} gave a deterministic constant-round $5$-approximation and proved that no smaller ratio was possible.
For triangle-free planar graphs, Wawrzyniak~\cite{Waw25} obtained ratio $6$, improving the earlier bound $32$ of Alipour, Futuhi and Karimi~\cite{AFK20}.
Square-grid instances give the lower bound $5$ for triangle-free planar graphs~\cite[Section~4]{BCGW21}, as discussed in \cref{sec:subclasses}.
For planar graphs of girth at least $5$, Alipour and Jafari~\cite{AJ20} obtained ratio~$18$.
\vspace{-2mm}

\paragraph{Excluded minors and bounded expansion.}
Constant-factor constant-round algorithms are known for bounded-genus graphs~\cite{ASS16,ASS19,CHWW19}, graphs excluding a fixed topological minor~\cite{CHWW18}, and classes of bounded expansion~\cite{KSV21,HKOSV25}.
Amiri, Ossona de Mendez, Rabinovich and Siebertz~\cite{AORS18} also gave $\Oof(\log n)$-round constant-factor algorithms on bounded-expansion classes, including for distance-$r$ domination, and Amiri and
Wiederhake~\cite{AmiriW22} studied distance-$r$ covering problems on sparse
graphs of high girth.
The deterministic approximation scheme of~\cite{CHW08} was extended to bounded genus by Amiri and Schmid~\cite{AS16}, to sub-logarithmic expansion by Amiri, Schmid and Siebertz~\cite{ASS19}, and to excluded-minor classes by Czygrinow et al.~\cite{CHWW18}.
Bonamy, Gavoille, Picavet and Wesolek~\cite{BGPW25} obtained a constant approximation ratio independent of the Euler-genus bound.
Bonamy et al.~\cite{BDGPSW26} developed a general transfer theorem from planar graphs to bounded Euler genus and other classes with suitable local structure.
For $K_{2,t}$-minor-free graphs, Bonamy, Gavoille, Picavet and Wesolek~\cite{BGPW25b} gave a constant-round $50$-approximation and a three-round $(2t-1)$-approximation.
We compare these bounds with our results in \cref{sec:beyond}.

Without a restriction on the number of communication rounds, minimum dominating set admits a polynomial-time approximation scheme on planar graphs by Baker\textquotesingle s decomposition~\cite{Baker94}, and on $H$-minor-free graphs for every fixed $H$ by Grohe\textquotesingle s decomposition into graphs of almost bounded local treewidth~\cite{Grohe03}.

Feige, Hajiaghayi and Lee~\cite{FHL08} also gave a constant-factor approximation for treewidth on these graphs and used it to obtain polynomial-time approximation schemes for several optimization problems.
No such scheme exists in a constant number of rounds: by \cref{thm:path-lb}, the
achievable ratio is bounded away from~$1$ already on graphs of pathwidth~$r$.
\vspace{-2mm}

\paragraph{Derandomization.}
Network decompositions provide general tools for derandomization
in the LOCAL model~\cite{GHK18,RG20}.
For distributed linear programs, Kuhn~\cite[Theorem~2.32]{Kuh05}
showed that an always-feasible randomized algorithm could be
derandomized by replacing each output coordinate with its
expectation.
He also explicitly noted the application to fractional
dominating set.
We apply this principle here.
Feasibility is preserved by the linear domination constraints,
and the objective is preserved by linearity of expectation.
Under our finite-randomness assumptions, each expected
coordinate is exactly computable from a bounded local view.
This transformation adds only a constant number of rounds.

\subsection{Organization}
In \cref{sec:prelim,sec:rounding}, we introduce the model and recall the rounding bound.
The randomized clustering and domination algorithms are presented in \cref{sec:cluster,sec:rand}.
\Cref{sec:derand} applies Kuhn's derandomization principle and proves \cref{thm:main}.
\Cref{sec:excess} establishes the refinement using asymptotic density.
In \cref{sec:beyond}, we derive the planar, bounded-genus and excluded-minor consequences and prove the lower bounds from powers of paths.
We conclude with open problems in \cref{sec:open}.


\section{Preliminaries}\label{sec:prelim}

All graphs are finite, simple and undirected unless stated otherwise.
For $v\in V(G)$, let $N(v)$ be its neighborhood and $N[v]=N(v)\cup\{v\}$.
For $S\subseteq V(G)$, put $N[S]=\bigcup_{v\in S}N[v]$ and let~$G[S]$ denote the subgraph induced by~$S$.
The distance $\dist_G(u,v)$ is the length of a shortest $u$-$v$ path, or $\infty$ if no such path exists.
We write $B_G(v,r)=\{u\in V(G):\dist_G(u,v)\le r\}$.
For $x\in\R^{V(G)}$, set $x(S)=\sum_{v\in S}x_v$ and $|x|=x(V(G))$.
For nonnegative edge weights $w$, define $w(F)=\sum_{e\in F}w(e)$ and $w(G)=w(E(G))$.
\medskip

A \emph{fractional dominating set} of $G$ is a vector $x\in\R_{\ge0}^{V(G)}$ satisfying $x(N[u])\ge1$ for every $u\in V(G)$.
Its weight is $|x|$, and the \emph{fractional domination number} is
\[
\gf(G)=\min\bigl\{|x|:x\ge0,\ x(N[u])\ge1\text{ for every }u\in V(G)\bigr\}.
\]
The indicator vector of any dominating set is feasible for this linear program, so $\gf(G)\le\gamma(G)$.
\medskip

For $a>0$, a graph $G$ is \emph{$a$-sparse} if $|E(H)|\le a|V(H)|$ for every subgraph $H$ of~$G$.
A \emph{fractional orientation} assigns to each edge $e=uv$ numbers $o_e(u),o_e(v)\ge0$ with $o_e(u)+o_e(v)=1$.
Here~$o_e(u)$ is the fraction of the edge directed from $u$ to~$v$.
The fractional out-degree of $u$ is $\sum_{e\ni u}o_e(u)$.
An ordinary orientation is the special case in which all these numbers belong to $\{0,1\}$.

\begin{fact}\label{fact:sparse}
Every planar graph is $3$-sparse, and every triangle-free planar graph and every outerplanar graph is $2$-sparse.
More generally, for every integer $g\ge3$, a planar graph of girth at least $g$ is $\frac{g}{g-2}$-sparse.
\end{fact}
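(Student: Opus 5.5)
The proof uses Euler's formula together with the fact that every class in question is closed under taking subgraphs. Since $a$-sparsity is a condition on all subgraphs, it suffices to show that every graph $H$ in the relevant class satisfies $|E(H)|\le a|V(H)|$. A subgraph of a planar graph is planar, a subgraph of an outerplanar graph is outerplanar, and a subgraph of a planar graph of girth at least $g$ is planar with girth at least $g$ (or is acyclic). We may also assume $H$ is connected. If $H$ has components $H_1,\dots,H_k$ and each satisfies the bound, summing gives the bound for $H$, because the right-hand side is additive over components.

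The main step is the girth statement. Let $H$ be a connected plane graph of girth at least $g\ge3$ with $n$ vertices and $m$ edges. If $H$ is a tree, then $m=n-1<n\le\frac{g}{g-2}n$. Otherwise every face boundary contains a cycle, so every face boundary walk has length at least $g$. Each edge is counted at most twice among all face boundary walks (bridges are counted twice in the same face), so $gf\le 2m$, where $f$ is the number of faces. Euler's formula $n-m+f=2$ then gives $m\le\frac{g}{g-2}(n-2)<\frac{g}{g-2}n$. For $g=3$ this is the $3$-sparsity of planar graphs, and for $g=4$ it covers triangle-free planar graphs with ratio $2$.

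Outerplanar graphs need a separate argument, since they may contain triangles. Add a new vertex adjacent to all vertices. The result is planar, so the bound $m'\le 3(n+1)-6$ applied to it gives $m+n\le 3n-3$, that is, $m\le 2n-3<2n$. If one prefers a self-contained argument, the face count works instead: all vertices lie on the outer face, and every inner face of a maximal outerplanar graph is a triangle, which gives $m\le 2n-3$.

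The only place requiring care, and the step I expect to be the main obstacle, is the face-length inequality $gf\le 2m$ for graphs that are not $2$-connected. There face boundaries are closed walks rather than cycles, so I will state explicitly that each boundary walk of a face in a graph that is not a forest contains a cycle and therefore has length at least $g$. Small cases with $n\le 2$ satisfy $m\le 1\le n$ trivially.
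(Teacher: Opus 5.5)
Your argument is correct. The paper gives no proof of this statement: it is recorded as a standard fact and used only to supply sparsity constants for the rounding lemma and the planar-subclass corollary. Your Euler-formula count (face lengths at least $g$, so $gf\le 2m$ and $m\le\frac{g}{g-2}(n-2)$), together with the apex-vertex reduction for outerplanar graphs, is the textbook justification the authors implicitly rely on.

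The step you single out as the main obstacle can be closed in one line. Suppose the frontier $B$ of a face $F$ of a plane graph $H$ were a forest. Then $\R^2\setminus B$ is connected. $F$ is open in it, and also relatively closed, because its frontier lies in $B$. Hence $F=\R^2\setminus B$. Since $F$ is disjoint from $H$, this forces $H\subseteq B$, so $H$ would itself be a forest.

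You can also avoid boundary walks entirely by counting edge--face incidences. Each edge lies on the frontier of at most two faces, and each frontier contains a cycle and hence at least $g$ edges. This gives $gf\le 2m$ directly.
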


The following standard characterization follows from linear-programming duality and Charikar's formulation of the densest-subgraph problem~\cite{Cha00}.
It is also stated in~\cite[Theorem~1]{CHR25}.

\begin{lemma}\label{lem:orientation}
For every $a>0$, every $a$-sparse graph has a fractional orientation with out-degree at most~$a$ at each vertex.
\end{lemma}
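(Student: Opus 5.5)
The plan is to prove \cref{lem:orientation} by a flow (or LP-duality) argument on the bipartite edge–vertex incidence structure. Fix an $a$-sparse graph $G$. Build a network with a source $s$, a node for each edge $e\in E(G)$, a node for each vertex $v\in V(G)$, and a sink $t$. The arcs are as follows. There is an arc $s\to e$ of capacity $1$ for every edge $e$. For each edge $e=uv$ there are arcs $e\to u$ and $e\to v$ of infinite capacity. There is an arc $v\to t$ of capacity $a$ for every vertex $v$. A flow of value $|E(G)|$ is exactly a fractional orientation: put $o_e(u)$ equal to the flow on $e\to u$. Each edge node then splits one unit between its endpoints, and the capacity of $v\to t$ bounds the fractional out-degree of $v$ by $a$. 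It therefore suffices to show that the maximum flow equals $|E(G)|$.

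By max-flow/min-cut, I will show that every $s$-$t$ cut has capacity at least $|E(G)|$. Take a cut given by a set $S\ni s$. Let $F$ be the edge nodes in $S$ and $U$ the vertex nodes in $S$. If some $e=uv\in F$ has an endpoint outside $U$, the cut contains an infinite arc, so the bound is trivial. Otherwise every edge in $F$ has both endpoints in $U$, so $F\subseteq E(G[U])$. The cut capacity is then
\[
|E(G)\setminus F| + a|U| \;\ge\; |E(G)|-|E(G[U])|+a|U| \;\ge\; |E(G)|,
\]
where the last step uses $a$-sparsity applied to the subgraph $G[U]$, which gives $|E(G[U])|\le a|U|$. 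This holds also for $U=\emptyset$. Hence the min cut, and with it the max flow, equals $|E(G)|$. Since the capacities are real, max-flow/min-cut still applies, because this is a finite LP.

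Alternatively, one could phrase the same argument as LP duality for Charikar's densest-subgraph LP: the minimum over fractional orientations of the maximum fractional out-degree equals $\max_{H\subseteq G}|E(H)|/|V(H)|$. I would present the flow version because it is self-contained. The only step needing care is the cut analysis, namely observing that finite cuts force $F\subseteq E(G[U])$; everything else is routine.
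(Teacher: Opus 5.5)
Your flow argument is correct and is essentially the paper's approach. The paper justifies \cref{lem:orientation} only by citing LP duality for Charikar's densest-subgraph LP, but its proof of the generalization \cref{lem:excess-orientation} uses exactly your network, with capacity $|E|+1$ in place of $\infty$ on the edge-to-vertex arcs; for an $a$-sparse graph one has $\Phi_a(G)=0$, and that proof reduces to your cut computation (using $|E|+1$, as the paper does, also sidesteps any question about max-flow/min-cut with infinite capacities).
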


When $a$ is an integer, flow integrality gives an ordinary orientation with maximum out-degree at most~$a$, in agreement with Hakimi's theorem~\cite{Hakimi65}.
We use these orientations only to prove size bounds, the algorithms do not compute them.
\medskip

A graph $H$ is a \emph{minor} of a graph $G$ if it can be
obtained from $G$ by deleting vertices, deleting edges, and
contracting edges. After each contraction, loops are deleted
and parallel edges are replaced by a single edge.
A \emph{graph class} is \emph{minor-closed} if it contains
every minor of each of its members, and \emph{proper} if it
does not contain every graph.
In particular, a minor-closed class is closed under taking subgraphs.
\medskip

We use two density parameters:
\begin{align*}
a(\Cc)&=\sup\Bigl(\{0\}\cup\bigl\{|E(G)|/|V(G)|:G\in\Cc,\ V(G)\ne\emptyset\bigr\}\Bigr),\\
d^*(\Cc)&=\inf\bigl\{d\ge0:\text{there is $b\ge0$ such that $|E(G)|\le d|V(G)|+b$ for every $G\in\Cc$}\bigr\}.
\end{align*}

The second parameter is called the \emph{asymptotic density} of~$\Cc$.
We always have $d^*(\Cc)\le a(\Cc)$.
If $\Cc$ is closed under subgraphs and $a(\Cc)>0$, every member is $a(\Cc)$-sparse.
If $\Cc$ is closed under disjoint unions, the two densities agree.
Indeed, if $|E(G)|>d|V(G)|$, then the excess over~$d|V|$ in the disjoint union of $m$ copies of $G$ tends to infinity with~$m$.
The distinction matters for bounded genus: the class of graphs of Euler genus at most $g$ has asymptotic density~$3$, by \cref{lem:genus-density} and planar triangulations of unbounded order, whereas its supremum density $a$ is unbounded as $g$ increases.
The latter follows because every complete graph embeds in some surface.
The extremal bound for $K_t$-minor-free graphs gives $|E(G)|=\Oof(t\sqrt{\log t})|V(G)|$~\cite{Kos84,Tho01}, and hence $a(\Cc)<\infty$.
\medskip

Let $H$ be an edge-weighted graph, and let $\Pp$ be a partition of $V(H)$ into nonempty sets that induce connected subgraphs.
The \emph{weighted quotient} $H/\Pp$ has vertex set $\Pp$.
Distinct parts $P,Q$ are adjacent precisely when an edge of $H$ joins them, and the weight of $PQ$ is the sum of the weights of all such edges.
The underlying simple graph is a minor of~$H$, and
\[
w(H/\Pp)=\cut_w(\Pp)
\coloneqq\sum_{\substack{e\in E(H)\\e\text{ joins distinct parts of }\Pp}}w(e).
\]
Consequently, every minor-closed class containing $H$ also contains the underlying simple graph of $H/\Pp$.
\medskip

We now define the LOCAL model.
The input graph $G=(V,E)$ is the communication network.
Each vertex initially knows its unique identifier $\id(v)\in\N$, its local input, and its incident communication ports, but not $|V|$ or the maximum degree.
Ports at each vertex are numbered locally.
In each synchronous round, vertices send finite messages of arbitrary size through their ports, receive the incoming messages, and perform unrestricted finite local computation.
A randomized algorithm also gives each vertex private fair coins, independently of all other vertices.
The algorithms considered here prescribe an effective finite coin budget from each vertex's initial local input.
An $R$-round algorithm terminates at every vertex within $R$ rounds, for every input graph, every identifier assignment and every outcome of these coins.

The \emph{$r$-view} of $v$ consists of the rooted induced graph $G[B_G(v,r)]$, the port labels at both ends of its edges, and the initial local state of every vertex in the ball.
In particular, it records the full degree and port interface of boundary vertices, including ports leading outside the ball.
For a randomized execution it also records the prescribed coin strings of the vertices in the ball.
The output of an $r$-round algorithm at $v$ is determined by this view, by induction on the rounds~\cite{Linial92,Peleg00}.
Conversely, exchanging identifiers and then gathering the labelled local states lets $v$ learn its $r$-view in $r+1$ rounds.
Thus any effectively computable function of a fixed-radius $r$-view can be evaluated in $r+1$ rounds.
If $\dist_G(u,v)\le d$, the $(r+d)$-view of $v$ determines the $r$-view of~$u$.

A dominating-set algorithm outputs one bit at each vertex, and the vertices whose output is $1$ must form a dominating set.
A fractional dominating-set algorithm outputs nonnegative rational numbers $x_v$ that satisfy all domination constraints.
For a specified benchmark $\OPT\in\{\gamma,\gf\}$, a deterministic algorithm has ratio $\rho$ on $\Cc$ if its output has weight at most $\rho\cdot \OPT(G)$ for every $G\in\Cc$ and every identifier assignment.
A randomized algorithm has expected ratio $\rho$ if every outcome is feasible and its expected weight is at most $\rho\cdot \OPT(G)$.
\pagebreak

We use the following known result for the chosen minor-closed class.
To obtain the result we can use the algorithm of Czygrinow, Ha{\'n}{\'c}kowiak, Wawrzyniak and Witkowski~\cite[Theorem~1]{CHWW18} for graphs excluding a fixed topological minor: exclusion of a $K_t$ minor implies exclusion of a topological $K_t$.

\begin{fact}\label{fact:const-minor}
For every proper minor-closed class $\Cc$ there are constants $T_\Cc\ge0$ and $c_\Cc\ge1$ and a deterministic LOCAL algorithm that, on every $G\in\Cc$, computes a dominating set of size at most $c_\Cc\gamma(G)$ in $T_\Cc$ rounds.
The algorithm does not require the number of vertices or the maximum degree as part of the input.
\end{fact}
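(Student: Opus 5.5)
The plan is to reduce the fact to the cited theorem of Czygrinow, Ha{\'n}{\'c}kowiak, Wawrzyniak and Witkowski~\cite[Theorem~1]{CHWW18}. That theorem gives, for each fixed graph excluded as a topological minor, a deterministic constant-round constant-factor approximation for minimum dominating set. The work is to show that every proper minor-closed class lies inside such a topologically excluded family, and that the resulting constants depend only on~$\Cc$.

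\textbf{Step 1: $\Cc$ excludes a clique as a minor.} Since $\Cc$ is proper, there is a graph $H\notin\Cc$. Put $t=\max\{|V(H)|,1\}$. Then $H$ is a subgraph, and hence a minor, of $K_t$. Closure under minors gives $K_t\notin\Cc$. Moreover, no $G\in\Cc$ has $K_t$ as a minor, since otherwise closure under minors would put $K_t$ into~$\Cc$.

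\textbf{Step 2: pass to topological minors.} Suppose some $G\in\Cc$ contained a subdivision of $K_t$ as a subgraph. Contracting all but one edge of each subdivided path would yield $K_t$ as a minor of $G$, contradicting Step~1. Hence $\Cc$ is contained in the class of graphs excluding $K_t$ as a topological minor, and $t$ depends only on~$\Cc$.

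\textbf{Step 3: apply~\cite[Theorem~1]{CHWW18} with the excluded graph $K_t$.} This yields a deterministic LOCAL algorithm with round count $T_\Cc$ and approximation ratio $c_\Cc$, both functions of $t$ alone. Replacing $c_\Cc$ by $\max\{c_\Cc,1\}$ costs nothing, which gives the normalisation $c_\Cc\ge1$. The ratio is measured against $\gamma(G)$, exactly as required. Empty or trivial graphs are handled vacuously, since the output is then empty or a single vertex.

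The main point needing care, though not real difficulty, is the last clause of the fact: the algorithm must use no knowledge of $|V(G)|$ or of the maximum degree. I would verify from the description in~\cite{CHWW18} that its round bound is a constant depending only on the excluded topological minor. I would also check that each vertex's decision is a function of its bounded-radius view together with identifiers, without any global parameters. If some step there used a global quantity only for convenience, such as a symmetry-breaking step, I would check that it involves only local comparisons of identifiers. Such comparisons need no global knowledge.
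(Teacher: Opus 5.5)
Your proposal is correct and follows the same route as the paper, which obtains the fact directly from \cite[Theorem~1]{CHWW18} using exactly your observation that excluding $K_t$ as a minor implies excluding a topological $K_t$. Your Step~1, which derives an excluded clique from properness and minor-closure, makes explicit a step that the paper leaves implicit.
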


For planar graphs, we use the following explicit choice of constants.

\begin{theorem}[\cite{LOW08,LPW13,Waw14}]\label{thm:const-approx}
There are absolute constants $T_0$ and $c_0=52$ and a deterministic $T_0$-round LOCAL algorithm that computes, on every planar graph $G$, a dominating set $D_0$ with
$|D_0|\le c_0\gamma(G)$.
The algorithm requires neither the number of vertices nor the maximum degree as part of the input.
\end{theorem}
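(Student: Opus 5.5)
Since \cref{thm:const-approx} is a known result, my plan is to recall the algorithm of Lenzen, Pignolet and Wattenhofer~\cite{LOW08,LPW13}, check that it meets the interface conditions stated here, and take the constant $c_0=52$ from the sharpened analysis of Wawrzyniak~\cite{Waw14}.

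The algorithm has two phases, and its round count $T_0$ is an absolute constant. In the first phase, every vertex $v$ collects its $2$-view and decides whether $N(v)$ can be dominated by at most six vertices of $V\setminus\{v\}$. If it cannot, $v$ joins a set $D_A$. In the second phase, each vertex not dominated by $D_A$ selects a neighbor in its closed neighborhood that dominates the largest number of not-yet-dominated vertices. Ties are broken by identifiers, and the selected vertices form a set $D_B$. The output is $D_0=D_A\cup D_B$. Feasibility is immediate, since every vertex is either dominated by $D_A$ or selects a dominator. The algorithm reads only a bounded-radius view, so it runs in a fixed number of rounds and never uses $|V|$ or the maximum degree. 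Both interface claims of the statement therefore follow directly from the description.

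The substance is the size bound $|D_0|\le 52\gamma(G)$. Fix a minimum dominating set $D^*$. First I bound $|D_A\setminus D^*|$. Each $v\in D_A\setminus D^*$ has its neighborhood dominated by $D^*$, which lies in $V\setminus\{v\}$. By the definition of $D_A$, this forces $v$ to share neighbors with at least seven distinct vertices of $D^*$. Contracting the stars around the vertices of $D^*$ then produces a planar minor, and in that minor each such $v$ is attached to many $D^*$-vertices. An edge-counting argument via \cref{fact:sparse}, using that a planar bipartite graph is $2$-sparse, then bounds $|D_A\setminus D^*|$ by a constant times $|D^*|$. Second, I bound $|D_B|$ by charging each vertex of $D_B$ to the $D^*$-vertex dominating the vertex that selected it. This requires showing that every $u\in D^*$ absorbs only boundedly many charges once $D_A$ has removed the high-neighborhood-complexity vertices. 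The argument again uses planarity of the relevant bipartite contractions.

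I expect the main obstacle to be the second counting step with the tight constants. Obtaining some constant is routine from planarity. Obtaining exactly $52$ requires Wawrzyniak's refined analysis of how the charges to a single $u\in D^*$ interact with the regions of the planar embedding, so at that point I would invoke~\cite{Waw14} rather than reprove the bound.
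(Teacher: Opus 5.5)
Your route matches the paper's, which gives no proof and imports the statement from \cite{LOW08,LPW13,Waw14}. Your description of the two-phase algorithm is accurate, and so is your check that it reads only a bounded-radius view and hence never uses $|V|$ or the maximum degree. However, two steps of the size analysis you sketch fail as stated. You should therefore cite \cite{Waw14} for the whole bound $|D_A\cup D_B|\le 52\gamma(G)$, rather than adding your own estimate for $D_A$ to a quoted estimate for $D_B$. The constant $52$ is proved for the total output, not as such a sum.

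\emph{Bounding $D_A$.} Put $A=D_A\setminus D^*$. The stars must leave $A$ uncontracted, and then a vertex $v\in A$ need not be adjacent to many supernodes. The $D^*$-vertices that cover neighbours of $v$ lying in $A$ are not adjacent to $v$ in the minor. Already in a planar graph, $v$ can have seven neighbours in $A$ with no further common neighbours and a single neighbour in $D^*$. So the bipartite $2$-sparsity count has no degree bound to work with. What does hold is $|N(v)\cap A|+|\{m: S_m\cap N(v)\ne\emptyset\}|\ge 7$, because these two sets together cover $N(v)$ and avoid $v$. Write $e(A)$ for the edges inside $A$ and $e(A,D^*)$ for the minor's edges between $A$ and the supernodes. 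Keeping the edges inside $A$ gives $2e(A)+e(A,D^*)\ge 7|A|$. Combined with $e(A)\le 3|A|$ and $e(A)+e(A,D^*)\le 3(|A|+|D^*|)$, this yields $|A|\le 3|D^*|$.

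\emph{Bounding $D_B$.} Your claim that each $u\in D^*$ absorbs only boundedly many charges is false. Take $K_{2,k}$ with poles $m,z$, the edge $mz$, and middle vertices $v_1,\dots,v_k$. For each $i$, add:
\begin{itemize}
\item a vertex $w_i$ adjacent to $v_i$;
\item a vertex $p_i$ adjacent to $w_i$;
\item a set $F_i$ of $k$ vertices, each adjacent to both $w_i$ and $p_i$;
\item five pendant leaves at $p_i$.
\end{itemize}
This graph is planar, and every open neighbourhood is covered by at most six vertices other than its centre (the vertex $p_i$ needs exactly six), so $D_A=\emptyset$. The set $\{m,p_1,\dots,p_k\}$ is a minimum dominating set. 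Since $|N[w_i]|=k+3$, $|N[m]|=|N[z]|=k+2$ and $|N[p_i]|=k+7$, the vertices $w_i$, $p_i$ and those of $F_i$ all select $p_i$. The vertex $v_i$ selects $w_i$, so $w_i$ is selected by $v_i$ alone. The only vertex of $D^*$ dominating $v_i$ is $m$, so $m$ receives $k$ charges. Any valid charging argument for $D_B$ must therefore be amortised over $D^*$ rather than bounded per vertex.
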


Any deterministic constant-round constant-factor approximation on the chosen class can serve as the initial subroutine.
For planar graphs, this includes the later algorithms in~\cite{HSV22,HKOSV25,Waw26}.
The choice affects our round bound but not the approximation ratio.


\section{The rounding lemma}\label{sec:rounding}

In this section we recall the standard threshold-rounding argument
of~\cite{BU17,Dvo19,HKOSV25}, which turns a fractional dominating
set $x$ into a dominating set of size at most $(2a+1)|x|$
on an $a$-sparse graph.
We include the short proof for completeness, using fractional
orientations to allow any real sparsity bound $a>0$.
We then record the consequence for distributed algorithms:
rounding adds only one communication round, provided that
each vertex can decide locally whether its coordinate meets
the threshold.

\begin{lemma}[threshold rounding~\cite{BU17,Dvo19,HKOSV25}]\label{lem:rounding}
Let $a>0$, let $G=(V,E)$ be $a$-sparse, and let $x$ be a fractional dominating set of~$G$.
Set $\theta=1/(2a+1)$ and
\[
D_1=\{v\in V:x_v\ge\theta\},
\qquad
D_2=V\setminus N[D_1].
\]
Then $D_1\cup D_2$ dominates $G$ and
\[
|D_1\cup D_2|\le(2a+1)|x|.
\]
If each vertex can decide whether $x_v\ge\theta$, the set is computed in one communication round.
In particular, this holds when $a$ is a known positive rational and the entries of $x$ are rational.
\end{lemma}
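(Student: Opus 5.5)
Domination is immediate from the definitions. Every vertex of $N[D_1]$ is dominated by $D_1$. Every vertex outside $N[D_1]$ lies in $D_2$ and so dominates itself.

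The size bound is the main step. Clearly $|D_1|\le|x|/\theta=(2a+1)x(D_1)$, since each $v\in D_1$ has $x_v\ge\theta$. It therefore suffices to show
\[
|D_2|\le(2a+1)\,x(V\setminus D_1).
\]
Adding this to the bound for $D_1$ gives $(2a+1)|x|$.

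For $u\in D_2$, no vertex of $N[u]$ lies in $D_1$, so every coordinate in $N[u]$ is below $\theta$. From $x(N[u])\ge1$ we get $x(N(u))\ge1-x_u>1-\theta=2a\theta$, and also $|N(u)|\ge 1$.

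I would now charge each $u\in D_2$ to the weights of its neighbours, all of which lie outside $D_1$. Fix a fractional orientation $o$ of $G$ with out-degree at most $a$, which exists by \cref{lem:orientation}. Split $x(N(u))=\sum_{e=uv}o_e(u)x_v+\sum_{e=uv}o_e(v)x_v$. The first sum is at most $\theta\sum_{e\ni u}o_e(u)\le a\theta$, because each $x_v<\theta$ and the out-degree of $u$ is at most $a$. Hence the second sum exceeds $2a\theta-a\theta=a\theta$.

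Now sum the second term over $u\in D_2$. Regrouping by $v$, the coefficient of $x_v$ is $\sum_{e=uv,\,u\in D_2}o_e(v)$. This is at most the out-degree of $v$, so it is at most $a$. Only $v\notin D_1$ occur, so the total is at most $a\,x(V\setminus D_1)$. Comparing,
\[
a\theta|D_2|<a\,x(V\setminus D_1)\quad\text{whenever }D_2\neq\emptyset,
\]
which gives $|D_2|\le(2a+1)x(V\setminus D_1)$, as required. The main care needed is the choice of split: the out-orientation part is bounded by $u$'s own out-degree, and the in-orientation part is bounded by the neighbours' out-degrees. This is what yields the constant $2a+1$ exactly rather than something weaker.

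For the distributed claim, each vertex first decides locally whether $x_v\ge\theta$. One round of exchanging these bits lets each vertex learn whether $N[v]\cap D_1=\emptyset$, which determines membership in $D_2$. If $a$ is a known positive rational and $x_v$ is rational, then $\theta$ is rational and the comparison is an exact finite computation.
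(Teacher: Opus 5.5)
Your proof is correct and is essentially the paper's argument: the same fractional orientation from \cref{lem:orientation}, the same split of the neighbourhood weight into the part oriented away from $u$ (bounded by $u$'s out-degree) and the part $\sigma(u)$ oriented towards $u$ (bounded after regrouping by the neighbours' out-degrees); the paper merely folds $x_u<\theta$ into a single chain of inequalities instead of first deriving $x(N(u))>2a\theta$. The only slip is a typo: the bound for $D_1$ should read $|D_1|\le x(D_1)/\theta=(2a+1)x(D_1)$, not $|x|/\theta$.
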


\begin{proof}
The vertices outside $N[D_1]$ are precisely those added to $D_2$, so $D_1\cup D_2$ dominates~$G$.
Each vertex first decides whether it belongs to $D_1$ and sends that bit to its neighbors.
After one round, it also knows whether it belongs to $D_2$.

By the definition of $D_1$,
$|D_1|\le\theta^{-1}x(D_1)$.
We prove that $|D_2|\le\theta^{-1}x(V\setminus D_1)$.
There is nothing to show if $D_2=\emptyset$, so assume otherwise and fix a fractional orientation with out-degree at most~$a$, as provided by \cref{lem:orientation}.
For $u\in D_2$, every vertex of $N[u]$ has weight less than~$\theta$.
Write
\[
\sigma(u)=\sum_{e=uv\ni u}x_v o_e(v).
\]
Splitting each $x_v$ according to the two orientations of $uv$, feasibility at $u$ gives
\begin{align*}
1
\le x_u+\sum_{e=uv\ni u}x_v o_e(u)+\sigma(u)
<\theta+\theta\sum_{e\ni u}o_e(u)+\sigma(u)
\le(a+1)\theta+\sigma(u).
\end{align*}
Since $1-(a+1)\theta=a\theta$, we have $\sigma(u)>a\theta$.
Every neighbor of a vertex in $D_2$ lies outside $D_1$.
Moreover, when we sum over $u\in D_2$, the contribution charged to a fixed vertex $v$ uses distinct incident edges.
Consequently,
\begin{align*}
a\theta|D_2|
<\sum_{u\in D_2}\sigma(u)
=\sum_{v\in V\setminus D_1}x_v
   \sum_{\substack{e=vu\in E\\u\in D_2}}o_e(v)
\le\sum_{v\in V\setminus D_1}x_v\sum_{e\ni v}o_e(v)
\le a\,x(V\setminus D_1).
\end{align*}
Dividing by $a\theta$ yields the required bound for $D_2$.
As $D_1$ and $D_2$ are disjoint, adding the two bounds finishes the proof.
\end{proof}

\begin{corollary}\label{cor:gap}
For every $a>0$ and every $a$-sparse graph $G$,
\[
\gamma(G)\le(2a+1)\gf(G).
\]
In particular, $\gamma(G)\le7\gf(G)$ for every planar graph~$G$.
\end{corollary}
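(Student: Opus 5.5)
The plan is to apply \cref{lem:rounding} directly to an optimal fractional dominating set. The lemma is existential in the size bound, so no distributed considerations enter here.

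Let $G$ be $a$-sparse. If $V(G)=\emptyset$, both sides are zero and there is nothing to prove. Otherwise the dominating-set linear program in the definition of $\gf(G)$ is feasible, since the all-ones vector is feasible. Its objective $|x|$ is bounded below by $0$ on the feasible region. Hence, by standard linear-programming theory, the minimum is attained by some fractional dominating set $x^*$ with $|x^*|=\gf(G)$.

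Apply \cref{lem:rounding} to $x^*$ with the given $a$. It yields a dominating set $D_1\cup D_2$ of size at most $(2a+1)|x^*|=(2a+1)\gf(G)$. Since $\gamma(G)$ is the minimum size of a dominating set, we get $\gamma(G)\le(2a+1)\gf(G)$.

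For the planar statement, \cref{fact:sparse} shows that every planar graph is $3$-sparse. Taking $a=3$ gives $2a+1=7$.

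The only point needing care, rather than a real obstacle, is that the minimum in the definition of $\gf$ is attained. Alternatively, one can avoid attainment by applying the lemma to a feasible $x$ with $|x|\le\gf(G)+\delta$ for every $\delta>0$. Because $\gamma(G)$ is a fixed number, letting $\delta\to0$ gives the same bound.
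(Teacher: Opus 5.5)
Your proposal is correct and matches the paper's argument: the corollary follows by applying \cref{lem:rounding} to an optimal fractional dominating set, with \cref{fact:sparse} supplying $a=3$ for planar graphs. Your remark about attainment of the LP minimum, and the $\delta$-approximation alternative, adds a detail the paper leaves implicit.
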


\begin{corollary}\label{cor:det-rounding}
Let $a>0$, let $\Cc$ be a class of $a$-sparse graphs, and let $\OPT\in\{\gamma,\gf\}$.
Suppose a deterministic $R$-round algorithm computes, on every $G\in\Cc$, a fractional dominating set of weight at most $\rho\cdot\OPT(G)$.
If the threshold comparisons in \cref{lem:rounding} are locally computable, then a deterministic $(R+1)$-round algorithm computes a dominating set of size at most $(2a+1)\rho\cdot\OPT(G)$.
In particular, this holds when $a$ is a known positive rational.
\end{corollary}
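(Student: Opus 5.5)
The plan is to compose the given $R$-round fractional algorithm with the one-round threshold rounding of \cref{lem:rounding}, and then multiply the two guarantees.

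\textbf{Step 1: run the fractional algorithm.} On input $G\in\Cc$, run the deterministic $R$-round algorithm. After $R$ rounds every vertex $v$ knows its coordinate $x_v$. The vector $x$ is a fractional dominating set with $|x|\le\rho\cdot\OPT(G)$.

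\textbf{Step 2: apply threshold rounding.} Since $G$ is $a$-sparse, \cref{lem:rounding} applies. Each vertex decides locally whether $x_v\ge\theta=1/(2a+1)$; this is the hypothesis that the threshold comparisons are locally computable. It then spends one more round broadcasting its membership bit in $D_1$ to its neighbours, after which it knows whether it lies in $D_2=V\setminus N[D_1]$. The output $D_1\cup D_2$ is a dominating set, and the algorithm finishes within $R+1$ rounds. This is deterministic, because both stages are deterministic.

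\textbf{Step 3: size bound.} By \cref{lem:rounding},
\[
|D_1\cup D_2|\le(2a+1)|x|\le(2a+1)\rho\cdot\OPT(G).
\]

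\textbf{Step 4: the rational case.} For the final sentence, suppose $a$ is a known positive rational. I read the intended setting as one where the algorithm's outputs are rational, consistent with the definition that a fractional dominating-set algorithm outputs nonnegative rationals. Then $\theta$ is a known rational, and comparing two rationals exactly is a finite computation, so the comparisons are locally computable.

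There is no real obstacle here. The only point to check is that the comparison $x_v\ge\theta$ needs no knowledge beyond $v$'s own output and the known constant $a$. It does not use $|V|$ or $\Delta$, so the round bound is $R+1$ and does not depend on them.
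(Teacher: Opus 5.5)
Your proposal is correct and follows the paper's own (implicit) argument: run the $R$-round fractional algorithm, then spend one extra round on the threshold rounding of \cref{lem:rounding} to get the bound $(2a+1)|x|\le(2a+1)\rho\cdot\OPT(G)$. Your reading of the rational case is also the intended one: fractional dominating-set algorithms output rationals by definition, so a known rational $a$ makes the comparison with $\theta$ exact.
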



\section{Randomized clustering}\label{sec:cluster}

We now describe the randomized clustering step, adapting the constructions of~\cite{CH06,CHW08,LR15}.
Starting from singleton parts, we repeatedly merge parts by selecting stars at random.
Every quotient remains in the minor-closed class, so the same density bound applies at every level.
We show that the total weight of edges between parts decreases geometrically in expectation, while the radius of each part remains bounded in terms of the number of levels.
In the next section, we apply this partition to the graph obtained from an initial dominating set.
Solving the resulting cluster subproblems gives the randomized approximation scheme.
Boundary repairs are used only in the comparison with a global optimum.

Throughout this section and the next, we fix a proper minor-closed class $\Cc$ containing a graph with at least one edge, and put $a=a(\Cc)$.
Then $a\ge1/2$, and every graph in $\Cc$ is $a$-sparse.
Set $q=1-1/(8a)$, so $0<q<1$.
The input $H\in\Cc$ has nonnegative rational edge weights $w$ and distinct vertex identifiers.
Fix an integer $L\ge1$.
For every $y\in V(H)$ and $t\in\{1,\ldots,L\}$, let $\kappa_t(y)\in\{\HEAD,\TAIL\}$ be a fair coin, where
all these coins are mutually independent.
After level~$t$, the procedure maintains a partition $\Pp_t$ of $V(H)$ and a leader $\lead(P)\in P$ for each part~$P$.
The identifier of a part is its leader's identifier, and its coin at level $t$ is $\kappa_t(\lead(P))$.

\begin{algorithm}[t]
\caption{$\textsc{RandomCluster}(H,w,L)$}\label{alg:cluster}
\begin{algorithmic}[1]
\State $\Pp_0\gets\{\{y\}:y\in V(H)\}$ and $\lead(\{y\})\gets y$
\For{$t=1,\dots,L$}
  \State $Q\gets H/\Pp_{t-1}$, with edge weights summed under contraction
  \State $\id(P)\gets\id(\lead(P))$ for every $P\in\Pp_{t-1}$
  \For{every non-isolated vertex $P$ of $Q$}
    \State choose a heaviest incident edge $e_P=Ph(P)$, breaking ties by the identifier of $h(P)$
  \EndFor
  \State $A\gets\{(P,h(P)): \deg_Q(P)>0\}$
  \State for each reciprocal pair $(P,P'),(P',P)$, delete the arc whose tail has the larger identifier
  \State $\beta(P)\gets\kappa_t(\lead(P))$ for every $P\in\Pp_{t-1}$
  \State $S\gets\{(P,P')\in A:\beta(P)=\TAIL,\ \beta(P')=\HEAD\}$
  \State $\Pp_t\gets\emptyset$
  \For{every $P'$ with $\beta(P')=\HEAD$}
    \State add $P'\cup\bigcup_{(P,P')\in S}P$ to $\Pp_t$, with leader $\lead(P')$
  \EndFor
  \State add each remaining $\TAIL$ part to $\Pp_t$, retaining its leader
\EndFor
\State \Return $\Pp_L$ and its leaders
\end{algorithmic}
\end{algorithm}

At each level of \cref{alg:cluster}, every non-isolated part chooses a heaviest incident edge in the quotient graph and points to its other endpoint.
If two parts choose each other, we keep only the arc directed from the smaller identifier to the larger one.
We then read the leaders' coins for this level and select the kept arcs directed from a \TAIL\ part to a \HEAD\ part.
Each \HEAD\ part absorbs the tails of its selected incoming arcs and all other parts remain unchanged.
\medskip

The heaviest-edge selection is the pseudo-forest construction of Czygrinow, Ha{\'n}{\'c}kowiak and Wawrzyniak~\cite[Fact~2.2]{CHW08}.
Their procedure \textsc{HeavyStars} applies the Cole-Vishkin algorithm~\cite{CV86} to obtain a proper $3$-coloring in $\Oof(\log^*n)$ rounds, and then uses the coloring to choose stars of large total weight.
Here one independent coin per part replaces that coloring step.
The resulting bound holds in expectation, which suffices for the approximation analysis.

To bound the radius of the parts, set $r_t=(3^t-1)/2$. Hence we have $r_0=0$ and $r_t=3r_{t-1}+1$. The following lemma shows the standard structural properties of the clustering process.
We include the short inductive proof to make the radius bound
and the dependence on the random choices explicit.

\begin{lemma}\label{lem:structure}
For every $t\in\{0,\ldots,L\}$ and every coin outcome, the following hold.
\begin{enumerate}[label=(\alph*)]
\item\label{it:partition} The family $\Pp_t$ partitions $V(H)$. Each part contains its leader, and distinct parts have distinct leaders.
\item\label{it:radius} Every vertex of a part $P\in\Pp_t$ has distance at most $r_t$ from $\lead(P)$ in $H[P]$. Consequently, every part is connected and $H/\Pp_t$ is a simple graph in $\Cc$.
\item\label{it:measurable} The partition $\Pp_t$ and its leaders depend only on $H$, its weights and identifiers, and the coins $\kappa_s(y)$ with $s\le t$.
\end{enumerate}
\end{lemma}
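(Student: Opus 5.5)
We argue by induction on $t$, proving (a), (b) and (c) simultaneously. The base case $t=0$ is immediate: singletons partition $V(H)$, each singleton is its own leader at distance $0=r_0$, $H/\Pp_0\cong H\in\Cc$, and $\Pp_0$ depends on no coins.

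For the inductive step, assume all three statements for $t-1$. The key structural observation concerns the set $S$ of selected arcs. Every arc in $S$ goes from a \TAIL\ part to a \HEAD\ part. Each part has out-degree at most one in $A$, hence in $S$. A \HEAD\ part has no outgoing selected arc, and a \TAIL\ part has no incoming one. So the components of $S$ are stars: a \HEAD\ centre $P'$ together with \TAIL\ leaves $P$ whose unique arc points to $P'$.

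From this, (a) follows. Each \TAIL\ part is either absorbed by exactly one \HEAD\ part (the head of its unique out-arc, if that arc is selected) or retained alone. Each \HEAD\ part yields exactly one new part. The new parts are therefore disjoint and cover $V(H)$. Leaders are inherited from the \HEAD\ centres or retained \TAIL\ parts, and these are distinct old parts, so by induction the leaders are distinct and lie in their parts.

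For (b), take a vertex $v$ in a leaf $P$ absorbed by $P'$.
\begin{itemize}
\item By induction $v$ is within $r_{t-1}$ of $\lead(P)$ inside $H[P]$, hence at most $2r_{t-1}$ from any vertex of $P$.
\item The quotient edge $PP'$ comes from an edge $yz$ of $H$ with $y\in P$ and $z\in P'$.
\item The vertex $z$ is within $r_{t-1}$ of $\lead(P')$ in $H[P']$.
\end{itemize}
Chaining these gives a path inside $H[P\cup P']$ of length at most $2r_{t-1}+1+r_{t-1}=3r_{t-1}+1=r_t$. Vertices of $P'$ itself, and of retained parts, trivially satisfy the bound $r_{t-1}\le r_t$. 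Hence every part is connected. Then $H/\Pp_t$ is a weighted quotient by a connected partition. Its underlying simple graph is a minor of $H\in\Cc$, as noted in the preliminaries, so it lies in $\Cc$.

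For (c), the quotient $Q=H/\Pp_{t-1}$, its weights and the part identifiers are determined by $\Pp_{t-1}$ and its leaders. By induction these depend only on coins of levels $\le t-1$. The arcs in $A$ and the reciprocal-pair deletion are deterministic functions of $Q$ and the identifiers. The only new randomness is $\beta(P)=\kappa_t(\lead(P))$. Thus $\Pp_t$ depends only on coins $\kappa_s(y)$ with $s\le t$.

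The only step needing real care is the star-structure claim, together with the radius computation through the connecting edge; everything else is bookkeeping.
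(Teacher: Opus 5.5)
Your proof is correct and follows essentially the same route as the paper's. Both argue by simultaneous induction: the out-degree-at-most-one and \TAIL-to-\HEAD\ structure makes $S$ a family of vertex-disjoint stars, and the radius bound comes from the walk of length $2r_{t-1}+1+r_{t-1}$ through a connecting edge of $H$. Measurability follows because the level-$t$ coins are the only new randomness.
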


\begin{proof}
We argue by induction on $t$, with the singleton partition giving the base case.
At level $t$, each part has at most one outgoing arc in $A$.
A selected arc has a \TAIL\ tail and a \HEAD\ head, so no vertex can be both a tail and a head of selected arcs.
Thus $S$ is a collection of vertex-disjoint stars, directed towards their \HEAD\ centres.
Merging each star and retaining the parts incident with no selected arc gives a partition.
Every merged part inherits the leader of its centre and every unchanged part keeps its leader.
These leaders lie in their respective parts and are distinct by induction, proving~\ref{it:partition}.

For~\ref{it:radius}, consider a merged part with centre $P'$ and a leaf $P_i$.
Choose an edge $uv$ of $H$ with $u\in P_i$ and $v\in P'$.
For $y\in P_i$, the inductive hypothesis gives a walk from $y$ to $u$ through $\lead(P_i)$ of length at most $2r_{t-1}$, entirely within $P_i$.
Appending $uv$ and a path from $v$ to $\lead(P')$ gives a walk of length at most $3r_{t-1}+1=r_t$ inside the merged part.
Deleting repetitions gives the required path.
Vertices of the centre, and vertices of unchanged parts, retain the bound $r_{t-1}\le r_t$.
All parts are therefore connected, so their simple quotient is a minor of $H$ and belongs to $\Cc$.
Finally, level $t$ uses only the preceding partition, the fixed input and the level-$t$ coins, which proves~\ref{it:measurable}.
\end{proof}

The following lemma extends the heaviest-edge selection
argument of~\cite[Fact~2.2]{CHW08} from planar graphs to
$a$-sparse graphs.
We use fractional orientations to allow any real $a>0$.

\begin{lemma}\label{lem:heavy}
Let $a>0$, and let $Q$ be a simple $a$-sparse graph with nonnegative edge weights and distinct vertex identifiers.
The arc set $A$ constructed at one level of \cref{alg:cluster} has pairwise distinct underlying edges and satisfies
\[
 w(A)\ge\frac{w(Q)}{2a}.
\]
\end{lemma}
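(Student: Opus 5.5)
First I will check that the underlying edges of $A$ are pairwise distinct. Every non-isolated vertex $P$ of $Q$ contributes the single arc $(P,h(P))$. Two such arcs can share an underlying edge only if they are $(P,P')$ and $(P',P)$, that is, only if $h(P)=P'$ and $h(P')=P$. Exactly one arc of each such reciprocal pair is deleted. So after the deletion step, distinct arcs of $A$ have distinct underlying edges. In particular, $w(A)$ is the sum of $w(e_P)$ over the parts $P$ whose arc survives. I will also record that $w(A)$ is at least half of $\sum_{P}w(e_P)$, summed over all non-isolated $P$. The reason is that each deleted arc $(P,h(P))$ has a surviving partner with the same underlying edge, hence the same weight, and that partner is charged at most once.

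For the weight bound, I plan to fix a fractional orientation of $Q$ with out-degree at most $a$ at every vertex, as provided by \cref{lem:orientation}. This is available because $Q$ is $a$-sparse. Each edge $e=PP'$ is then split as $w(e)=o_e(P)w(e)+o_e(P')w(e)$. I will charge the part $o_e(P)w(e)$ to $P$. Since $e_P$ is a heaviest edge at $P$, we get $w(e)\le w(e_P)$ for every edge $e$ incident with $P$. Summing the charges at $P$ therefore gives
\[
\sum_{e\ni P}o_e(P)w(e)\le w(e_P)\sum_{e\ni P}o_e(P)\le a\,w(e_P).
\]
Isolated vertices carry no charge. Summing over all vertices yields $w(Q)\le a\sum_P w(e_P)$.

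Combining the two bounds gives $w(Q)\le a\sum_P w(e_P)\le 2a\,w(A)$, which is the claimed inequality. The only delicate step is the accounting for reciprocal pairs. If $h(P)=P'$ and $h(P')=P$, then $e_P=e_{P'}$, so this single edge appears twice in $\sum_P w(e_P)$ but only once in $A$. That double counting is precisely where the factor $2$ comes from, so the bound $\sum_P w(e_P)\le 2w(A)$ has to be justified by this charging rather than argued loosely. I expect no further obstacle, since nonnegativity of the weights handles edges of weight zero and ties.
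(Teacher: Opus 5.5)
Your proof is correct and takes essentially the same route as the paper. You argue distinctness via the reciprocal-pair deletion, prove $\sum_P w(e_P)\le 2w(A)$ by pairing each deleted arc with its surviving partner, and charge $w(Q)$ to the heaviest edges through a fractional orientation of out-degree at most $a$; you only group these charges per vertex, where the paper writes $w(e)\le o_e(P)w(e_P)+o_e(P')w(e_{P'})$ per edge before summing.
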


\begin{proof}
Let $I=\{P\in V(Q):\deg_Q(P)>0\}$ and $F=\{e_P:P\in I\}$.
Whenever both arcs $(u,v)$ and $(v,u)$ are present, we delete
one of them. The remaining arc set contains exactly one arc
for every edge of $F$.
Hence the arcs in $A$ have distinct underlying edges and $w(A)=w(F)$.
Since an edge can be chosen by at most its two endpoints,
\[
 \sum_{P\in I}w(e_P)\le2w(F)=2w(A).
\]
By \cref{lem:orientation}, choose a fractional orientation $o$ of $Q$ with out-degree at most $a$.
For every edge $e=PP'$, both $e_P$ and $e_{P'}$ have weight at least $w(e)$, so
\[
 w(e)\le o_e(P)w(e_P)+o_e(P')w(e_{P'}).
\]
Summing over the edges and using the out-degree bound gives
\[
 w(Q)\le\sum_{P\in I}w(e_P)\sum_{e\ni P}o_e(P)
 \le a\sum_{P\in I}w(e_P)
 \le2a\,w(A).\qedhere
\]
\end{proof}

For planar graphs, we may take $a=3$, giving $w(A)\ge w(Q)/6$.
This recovers the estimate of Czygrinow, Ha{\'n}{\'c}kowiak and Wawrzyniak~\cite[Fact~2.2]{CHW08}, whose proof uses the decomposition of a planar graph into three forests~\cite{NashWilliams64}.
\medskip

The next lemma gives the standard one-step estimate for randomized star contraction, as in the analysis of~\cite[Section~3]{LR15}.
We include the proof for our arc-selection rule and state the bound conditionally on the preceding levels. 
Recall that $\cut_w(\Pp)$ denotes the total weight of edges joining distinct parts of~$\Pp$.

\begin{lemma}\label{lem:one-level}
Let $\mathcal F_{t-1}$ be the information given by all coins $\kappa_s(y)$ with $s<t$.
For every $t\in\{1,\ldots,L\}$,
\[
 \E\bigl[\cut_w(\Pp_t)\mid\mathcal F_{t-1}\bigr]
 \le q\cut_w(\Pp_{t-1}).
\]
\end{lemma}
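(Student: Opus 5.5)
Conditioning on $\mathcal F_{t-1}$ fixes $\Pp_{t-1}$, its leaders, and the weighted quotient $Q=H/\Pp_{t-1}$ by \cref{lem:structure}\ref{it:measurable}. The arc set $A$ is built without reading any coin, so it is fixed as well. By \cref{lem:structure}\ref{it:radius}, the underlying simple graph of $Q$ lies in $\Cc$ and is therefore $a$-sparse. Moreover, $w(Q)=\cut_w(\Pp_{t-1})$. The only randomness left is in the level-$t$ coins $\beta(P)=\kappa_t(\lead(P))$. These are independent fair coins, one per part, because distinct parts have distinct leaders by \cref{lem:structure}\ref{it:partition}, and they are independent of $\mathcal F_{t-1}$.

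The plan is to show that every edge of $Q$ carried by a selected arc disappears from the cut, and that each arc of $A$ is selected with probability exactly $1/4$. For an arc $(P,P')\in A$ in $S$, the tail $P$ is merged into the part of $P'$, so the edge $PP'$ of $Q$ joins no two distinct parts of $\Pp_t$. Every edge of $H$ between distinct parts of $\Pp_t$ also joins distinct parts of $\Pp_{t-1}$, since $\Pp_t$ is coarser. Hence
\[
 \cut_w(\Pp_t)\le w(Q)-\sum_{(P,P')\in S}w(PP').
\]
To make this subtraction valid, I need the arcs of $S$ to have distinct underlying edges, which \cref{lem:heavy} provides for $A$. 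The event $(P,P')\in S$ is $\{\beta(P)=\TAIL,\ \beta(P')=\HEAD\}$, and since $P\ne P'$ carry independent coins, it has probability $1/4$.

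Taking conditional expectations then gives
\[
 \E\bigl[\cut_w(\Pp_t)\mid\mathcal F_{t-1}\bigr]\le w(Q)-\tfrac14 w(A)\le\Bigl(1-\frac{1}{8a}\Bigr)w(Q),
\]
where the last inequality is \cref{lem:heavy}. This equals $q\cut_w(\Pp_{t-1})$.

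The step I expect to need the most care is the bookkeeping that identifies $w(Q)$ with $\cut_w(\Pp_{t-1})$ and shows that contracting a selected arc removes its full summed weight. The weight of $PP'$ in $Q$ is the sum over all $H$-edges between $P$ and $P'$, and all of these become internal once the two parts are merged. Isolated parts cause no issue, since they have no arcs and contribute no weight. There is also a measurability point: the tie-breaking and the deletion of reciprocal arcs use only identifiers, so $A$ is $\mathcal F_{t-1}$-measurable. This is what justifies computing the selection probability $1/4$ with $A$ held fixed.
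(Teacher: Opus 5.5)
Your proposal is correct and follows the paper's proof essentially step for step. Conditioning on $\mathcal F_{t-1}$ fixes $Q$ and $A$, and distinct leaders make the level-$t$ coins independent fair coins. The distinct underlying edges from \cref{lem:heavy} give $\cut_w(\Pp_t)\le w(Q)-w(S)$, and $\E[w(S)\mid\mathcal F_{t-1}]=\tfrac14 w(A)\ge w(Q)/(8a)$ completes the argument.
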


\begin{proof}
Fix the outcomes of all coins used in levels $1,\ldots,t-1$.
By \cref{lem:structure}\ref{it:measurable}, the preceding partition, its quotient $Q$ and the arc set $A$ are now fixed, with $w(Q)=\cut_w(\Pp_{t-1})$.
Distinct parts have distinct leaders, and the level-$t$ coins are independent of all earlier coins.
Thus the variables $\beta(P)$ are still independent fair coins under this conditioning.
Each arc of $A$ is selected with probability $1/4$.
Every selected arc becomes internal to a new part, and distinct selected arcs represent disjoint sets of original edges by \cref{lem:heavy}.
Consequently,
\[
 \cut_w(\Pp_t)\le w(Q)-w(S).
\]
The quotient $Q$ belongs to $\Cc$ and is therefore $a$-sparse by \cref{lem:structure}\ref{it:radius}.
Linearity of expectation and \cref{lem:heavy} therefore give
\[
 \E[w(S)\mid\mathcal F_{t-1}]
 =\tfrac14w(A)\ge\frac{w(Q)}{8a},
\]
which proves the claim.
\end{proof}

\begin{corollary}\label{cor:cut}
The final partition satisfies
\[
 \E[\cut_w(\Pp_L)]\le q^Lw(H).
\]
\end{corollary}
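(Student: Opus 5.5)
The plan is to iterate \cref{lem:one-level} over the $L$ levels using the tower property of conditional expectation, and then evaluate the resulting bound at level $0$.

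First, I will record the base case. The initial partition $\Pp_0$ consists of singletons, so every edge of $H$ joins distinct parts. Hence $\cut_w(\Pp_0)=w(H)$, and this quantity is deterministic.

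Second, I will show by induction on $t\in\{0,\ldots,L\}$ that
\[
 \E[\cut_w(\Pp_t)]\le q^t w(H).
\]
For $t\ge1$, the tower property gives $\E[\cut_w(\Pp_t)]=\E\bigl[\E[\cut_w(\Pp_t)\mid\mathcal F_{t-1}]\bigr]$. By \cref{lem:one-level}, the inner conditional expectation is at most $q\cut_w(\Pp_{t-1})$. Monotonicity and linearity of expectation then yield
\[
 \E[\cut_w(\Pp_t)]\le q\,\E[\cut_w(\Pp_{t-1})]\le q\cdot q^{t-1}w(H)=q^t w(H),
\]
where the second inequality is the induction hypothesis. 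Taking $t=L$ gives the claim.

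I need to check that the conditioning is legitimate. By \cref{lem:structure}\ref{it:measurable}, the quantity $\cut_w(\Pp_{t-1})$ is determined by the coins of levels below $t$, so it is $\mathcal F_{t-1}$-measurable. All random variables here take finitely many values, since the coins are finite in number, so the conditional expectations are elementary averages over finitely many atoms. I expect no real obstacle in this argument.
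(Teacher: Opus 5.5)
Your proposal is correct and matches the paper's proof: use $\cut_w(\Pp_0)=w(H)$, take expectations in \cref{lem:one-level} via the tower property, and iterate $L$ times. Your added check that $\cut_w(\Pp_{t-1})$ is $\mathcal F_{t-1}$-measurable by \cref{lem:structure}\ref{it:measurable} is a harmless detail that the paper leaves implicit.
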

\begin{proof}
Initially, every edge joins different parts, so $\cut_w(\Pp_0)=w(H)$.
Taking expectations in \cref{lem:one-level} gives
$\E[\cut_w(\Pp_t)]\le q\E[\cut_w(\Pp_{t-1})]$.
Iterating this inequality $L$ times proves the bound.
\end{proof}

We next verify that each vertex can reconstruct its entire
part from a bounded neighborhood.
The argument follows the recursive reconstruction used
in~\cite[Section~4]{LR15}.
We give an explicit radius bound for our implementation,
including the information needed to determine the leaders
and the relevant coin outcomes.

\begin{lemma}\label{lem:cluster-local}
Set $\rho_t=2(3^t-1)$.
For every $t\in\{0,\ldots,L\}$ and $y\in V(H)$, the $\rho_t$-view of $y$, including the edge weights and coins, determines the entire part $P_t(y)\in\Pp_t$ containing $y$ and its leader.
In particular, \cref{alg:cluster} can be implemented in $\rho_L+1$ rounds on $H$.
\end{lemma}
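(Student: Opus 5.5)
We argue by induction on $t$, proving a slightly stronger statement: the $\rho_t$-view of $y$ determines, for every $z$ with $\dist_H(y,z)\le\rho_t-\rho_{t-1}$, not only the part $P_t(z)$ and its leader but also the adjacent parts needed at level $t+1$. The base case $t=0$ is trivial: $\rho_0=0$, and the part of $y$ is $\{y\}$ with leader $y$. For the inductive step, note the recursion $\rho_t=3\rho_{t-1}+4$. This leaves room for the following: a part of $\Pp_{t-1}$ has radius $r_{t-1}$ around its leader, so diameter at most $2r_{t-1}=3^{t-1}-1$; neighbouring parts are reached through one additional edge; and a centre's star reaches one further part.

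The key observation is what $y$ must know at level $t$. First, it must know its own part $P=P_{t-1}(y)$, its leader, and hence the coin $\beta(P)=\kappa_t(\lead(P))$. Second, it must know the arc $(P,h(P))$ out of $P$. Determining this arc requires every part $P''$ adjacent to $P$ in $Q=H/\Pp_{t-1}$, together with the summed weight of all edges between $P$ and $P''$; this reveals $h(P)$ and its leader identifier. Third, it must know whether the reciprocal arc from $h(P)$ back to $P$ exists, which requires $h(h(P))$. Finally, if $\beta(P)=\HEAD$, it must know every part $P'$ whose arc $(P',P)$ is selected. Each such $P'$ is adjacent to $P$, and deciding whether its arc survives needs the heaviest edge of $P'$, so the neighbours of $P'$ must also be known.

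All parts involved lie within distance two of $P$ in $Q$. Each is connected of diameter at most $2r_{t-1}$ by \cref{lem:structure}\ref{it:radius}, so every vertex of any involved part is within distance at most about $3(2r_{t-1}+1)$ of $y$. To reconstruct such a part $P_{t-1}(z)$ fully, we apply the inductive hypothesis at $z$, which needs the $\rho_{t-1}$-view of $z$. This is contained in the $(\dist(y,z)+\rho_{t-1})$-view of $y$, using the fact recorded in the preliminaries that if $\dist_G(u,v)\le d$, the $(r+d)$-view of $v$ determines the $r$-view of $u$. One must also reconstruct the parts adjacent to these, in order to compute heaviest edges. A cleaner way to organise this is to induct on the claim that the $\rho_t$-view determines $P_t(z)$ for all $z$ at distance at most $\rho_t-\rho_{t-1}$ from $y$. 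Here any vertex $z$ of a part adjacent to a part at distance two is within $3\cdot 3^{t-1}+O(1)$ of $y$, and $\rho_{t-1}+3^t+O(1)\le\rho_t=2\cdot3^t-2$ for the relevant ranges.

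The main obstacle is this bookkeeping of radii. One must check that the chain of dependencies ``own part, neighbouring parts, their heaviest choices, their neighbours' weights'' stays inside radius $\rho_t$. In particular, the weight of an edge of $Q$ between two parts sums over all $H$-edges joining them, and this is available because both parts are fully reconstructed. I would first try the crude estimate that every vertex used is within $3\cdot 2r_{t-1}+3=3^t$ of $y$, needing its $\rho_{t-1}$-view, and verify $3^t+\rho_{t-1}\le\rho_t$, i.e.\ $3^t+2\cdot3^{t-1}-2\le 2\cdot 3^t-2$, which holds. Since all of level $t$'s decisions are functions of this data and the level-$t$ coins of the involved leaders, the part $P_t(y)$ and its leader are determined. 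Part (ii) then follows: gathering the $\rho_L$-view takes $\rho_L+1$ rounds, after which each vertex computes its part locally.
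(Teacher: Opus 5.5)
Your overall argument is the paper's. You induct on $t$, use that old parts have diameter at most $2r_{t-1}$, and observe that $P_t(y)$ is decided by the old parts, quotient weights and level-$t$ coins within quotient distance two of $P$. All of these are read off from the level-$(t-1)$ statement applied at the vertices of $B_H(y,6r_{t-1}+3)=B_H(y,3^t)$ via view containment, and you check $\rho_{t-1}+3^t\le\rho_t$. The paper writes $\rho_t=\rho_{t-1}+8r_{t-1}+4$, which also covers the leaders' coins. As written, however, the proposal has two defects. Both are easily repaired and neither affects the radius.

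First, the strengthened hypothesis you propose to induct on is false. Already for $t=1$, take $z$ at distance $4=\rho_1-\rho_0$ from $y$ with a neighbour at distance $5$. The $4$-view of $y$ omits that edge and its weight, so it cannot determine $h(\{z\})$, and in general not $P_1(z)$. What you actually use is weaker: the $\rho_t$-view of $y$ determines the \emph{previous} parts $P_{t-1}(z)$ for $z\in B_H(y,\rho_t-\rho_{t-1})$. This follows from the lemma itself at level $t-1$ together with the view-containment fact, so no strengthening is needed.

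Second, your list of what $y$ must know omits the case $\beta(P)=\TAIL$ with $(P,h(P))$ selected. Then $P_t(y)$ is the entire star of $C=h(P)$, so $y$ must decide for every quotient neighbour $R$ of $C$ whether the arc $(R,C)$ survives and is selected. This needs $h(R)$, hence the old parts of all endpoints of edges leaving $R$. This is precisely the case that forces radius $6r_{t-1}+3$; the cases you list need only $4r_{t-1}+2$. Your crude estimate happens to cover it, and the paper treats it explicitly.

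Relatedly, only those endpoints are guaranteed to lie within $3^t$ of $y$. Whole parts adjacent to $R$ may reach distance $8r_{t-1}+3=4\cdot3^{t-1}-1$. So your remark that ``any vertex of a part adjacent to a part at distance two'' is within $3\cdot3^{t-1}+O(1)$ is false as a distance bound. It also asks for more than is needed.
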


\begin{proof}
We use induction on $t$. For $t=0$, the part is $\{y\}$.
Let $t\ge1$, write $r=r_{t-1}$, and put $P=P_{t-1}(y)$.
Every previous part has diameter at most $2r$ in $H$.
For each $z\in B_H(y,6r+3)$, its previous part and leader are determined by its $\rho_{t-1}$-view, which is contained in the $(\rho_{t-1}+6r+3)$-view of $y$.
That leader lies within distance $7r+3$ of $y$.
Hence the $\rho_t$-view reveals all these parts and their level-$t$ coins, since $\rho_t=\rho_{t-1}+8r+4$.

It remains to see that this information determines the new part of $y$.
The old part $P$ lies in $B_H(y,2r)$, each neighbor $C$ of $P$ in the quotient lies in $B_H(y,4r+1)$, and each quotient neighbor $R$ of such a $C$ lies in $B_H(y,6r+2)$.
For any such $R$, all edges leaving $R$ have their other endpoint in $B_H(y,6r+3)$.
Knowing the previous parts of these endpoints therefore determines every incident quotient edge and its weight, and hence the choice $h(R)$ whenever $R$ is non-isolated.
The same applies to $P$ and $C$.
For an incoming candidate $R\to C$, the choices~$h(R)$ and $h(C)$ determine whether the arc survives deletion of reciprocal pairs. The two leader coins then determine whether it is selected.
If $P$ is \HEAD, this identifies all parts absorbed by $P$.
If $P$ is \TAIL, it first determines whether $P$ joins its chosen neighbor $C$ and, if so, identifies every other part absorbed by $C$.
An isolated or unmatched part remains unchanged.
We have therefore recovered $P_t(y)$ and its leader.
By the view-gathering observation in \cref{sec:prelim},
each vertex can learn its $\rho_L$-view in $\rho_L+1$ rounds
and then compute its part and leader locally.
\end{proof}

\section{The randomized algorithm}\label{sec:rand}

We now use the clustering procedure to obtain randomized
algorithms for integral and fractional domination.
Both variants solve the problem independently inside each
cluster and return a feasible solution for every coin outcome.
We bound their expected weight by comparing the cluster
solutions with a global optimum and accounting for the
additional cost at the cluster boundaries.

Let $G=(V,E)\in\Cc$, and write $c=c_\Cc$ for the approximation factor of the initial algorithm in \cref{fact:const-minor}.
\Cref{alg:rand} first constructs a common random partition and then solves either the integral or the fractional dominating-set problem inside each cluster.
The class $\Cc$ and the constants of the initial algorithm are fixed throughout.

\begin{algorithm}[t]
\caption{$\textsc{RandDom}(L)$ on $G\in\Cc$}\label{alg:rand}
\begin{algorithmic}[1]
\State Compute $D_0$ using \cref{fact:const-minor}.
\State Set $\ell(v)=v$ for $v\in D_0$; for $u\notin D_0$, let $\ell(u)$ be the smallest-identifier vertex of $N(u)\cap D_0$.
\State Set $W_v=\ell^{-1}(v)$ for each $v\in D_0$.
\State Form the simple quotient $H=G/\{W_v:v\in D_0\}$ and assign weight $1$ to every edge of $H$.
\State Identify each quotient vertex $W_v$ with $v$, retaining its identifier and coins.
\State Run $\textsc{RandomCluster}(H,\mathbf1,L)$ using the coins $\kappa_1(v),\ldots,\kappa_L(v)$.
\State For each final part $P$, set $U_P=\bigcup_{v\in P}W_v$.
\For{each cluster $U$}
  \State \emph{Integral variant:} choose a minimum dominating set $D_U$ of $G[U]$.
  \State \emph{Fractional variant:} choose an optimal solution $y^U$ of $\LP(U)$.
  \State Break ties lexicographically in increasing identifier order.
\EndFor
\State \Return $y=\mathbf1_{\bigcup_U D_U}$ in the integral variant, or $y|_U=y^U$ in the fractional variant.
\end{algorithmic}
\end{algorithm}

For a cluster $U$, let $\LP(U)$ denote the standard
dominating-set linear program on $G[U]$.
In the fractional variant, we choose its lexicographically
smallest optimal solution in increasing identifier order.
This solution is rational and can be computed by
first finding the minimum total weight and then minimizing
the coordinates in increasing identifier order, while
preserving that total weight and all previously fixed
coordinates.
In the integral variant, we find a minimum dominating set
of $G[U]$ by exhaustive search.
Once each vertex knows its cluster and the induced graph,
these computations require no further communication.
By construction, every vertex $v\in D_0$ belongs to and
leads its own group $W_v$, and dominates every vertex
of that group.
In the proof of \cref{lem:per-outcome}, we use these leaders
to repair the restrictions of a global optimum to the
individual clusters.

\pagebreak
We record the basic properties of the groups and of the
clusters obtained by lifting the partition of $H$ back to $G$.

\begin{lemma}\label{lem:groups}
  \mbox{}
\begin{enumerate}[label=(\alph*)]
\item The groups $W_v$, for $v\in D_0$, partition $V$. Each contains its leader $v$, and $v\in N[u]$ for every $u\in W_v$. Thus $G[W_v]$ contains a spanning star centered at $v$.
\item The quotient $H$ is a simple graph in $\Cc$, with $k=|D_0|$ vertices and at most $ak$ edges.
\item For every final part $P$, the cluster $U_P$ is connected, and every vertex of $U_P$ is at distance at most $3r_L+1$ from $\lead(P)$ in $G[U_P]$.
\end{enumerate}
\end{lemma}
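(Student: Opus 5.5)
Each of the three parts follows directly from the definitions in \cref{alg:rand} together with \cref{lem:structure}. The only step that needs care is translating distances in $H$ into distances in~$G$ in part~(c).

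For (a), we use that $D_0$ is a dominating set. Every $u\notin D_0$ has $N(u)\cap D_0\ne\emptyset$, so $\ell(u)$ is well defined, and $\ell(v)=v$ for $v\in D_0$. Hence $\ell\colon V\to D_0$ is a function, and its fibres $W_v=\ell^{-1}(v)$ partition~$V$. Each fibre contains $v$, since $\ell(v)=v$. For $u\in W_v\setminus\{v\}$ we have $\ell(u)=v\in N(u)$, so $v$ is adjacent to every other vertex of $W_v$. This gives the spanning star centred at~$v$ in $G[W_v]$, and in particular $G[W_v]$ is connected.

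For (b), since every $G[W_v]$ is connected, the partition $\{W_v\}$ has connected parts. By the quotient discussion in \cref{sec:prelim}, the simple quotient $H$ is a minor of $G$, so $H\in\Cc$ because $\Cc$ is minor-closed. It has one vertex per group, hence $k=|D_0|$ vertices. Since $\Cc$ is closed under subgraphs and $a=a(\Cc)$, every member is $a$-sparse, which gives $|E(H)|\le ak$. (The case $k=0$ arises only for the empty graph and is trivial.)

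For (c), let $P$ be a final part with leader $p=\lead(P)$. By \cref{lem:structure}\ref{it:radius}, every $z\in P$ is joined to $p$ by a path $p=z_0,z_1,\dots,z_m=z$ in $H[P]$ with $m\le r_L$. We lift this path to $G[U_P]$. Each quotient edge $z_{i-1}z_i$ is witnessed by some edge $x_iy_i$ of $G$ with $x_i\in W_{z_{i-1}}$ and $y_i\in W_{z_i}$. Inside each group, any two vertices are at distance at most $2$ via the star centre, and each group vertex is at distance at most $1$ from its centre.

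We first bound the distance from $p$ to the centre $z$ of the last group. Going from one centre $z_{i-1}$ to the next centre $z_i$ costs at most $3$ steps: at most $1$ step from $z_{i-1}$ to $x_i$, then the edge $x_iy_i$, then at most $1$ step from $y_i$ to $z_i$. Over $m\le r_L$ quotient steps this gives distance at most $3r_L$ from $p$ to~$z$. A final vertex $u\in W_z$ is at distance at most $1$ from $z$, so every vertex of $U_P$ is within $3r_L+1$ of $p$. All vertices used lie in groups belonging to $P$, so the walk stays inside $G[U_P]$. This proves connectivity of $U_P$ and the radius bound. The bookkeeping is easiest when phrased centre-to-centre, as above, with the single extra step added only at the end.
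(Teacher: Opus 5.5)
Your proposal is correct and follows the paper's proof essentially step by step. Part (a) comes from the definition of $\ell$ together with domination by $D_0$, part (b) from connectivity of the groups and minor-closure, and part (c) from lifting a path of length at most $r_L$ in $H[P]$ using at most three $G$-steps per quotient edge through the star centres, plus one final step into the group (the implicit point that $u\in W_v\setminus\{v\}$ forces $u\notin D_0$, hence $\ell(u)\in N(u)$, is immediate from $\ell(u)=u$ on $D_0$).
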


\begin{proof}
For (a), the domination property of $D_0$ makes $\ell$ well-defined, and its definition gives $\ell(u)\in N[u]\cap D_0$ and $\ell(v)=v$ for $v\in D_0$.
For (b), each group is connected by (a), so $H$ is a simple minor of $G$ and therefore belongs to $\Cc$. Its edge bound follows from the definition of $a=a(\Cc)$.
For (c), an edge $vv'$ of $H$ comes from an edge $uu'$ of $G$ with $u\in W_v$ and $u'\in W_{v'}$.
The group stars give a walk of length at most three from $v$ to $v'$, contained in $W_v\cup W_{v'}$.
Now take $x\in U_P$ and its group leader $v$.
First move from $x$ to $v$ in at most one step, and then lift a $v$-$\lead(P)$ path of length at most $r_L$ in $H[P]$, given by \cref{lem:structure}\ref{it:radius}.
The resulting walk lies in $G[U_P]$ and has length at most $1+3r_L$. Deleting repetitions gives the required path.
\end{proof}

\medskip
Fix a coin outcome.
We call a group $W_v$ a \emph{boundary group} if an edge of $G$ joins it to a different cluster.
Let $B\subseteq D_0$ be the set of leaders of these groups, and let $\cut(\Pp_L)$ count the edges of $H$ joining distinct final parts.
Because the initial edge weights of $H$ are one, $\cut(\Pp_L)=\cut_w(\Pp_L)$.

\begin{lemma}\label{lem:boundary}
For every coin outcome we have $|B|\le2\cdot \cut(\Pp_L)$.
\end{lemma}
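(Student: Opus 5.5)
We prove the bound by charging every boundary group to a cut edge of $H$, with each cut edge receiving at most two charges (one from each of its endpoints).

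\emph{Step 1: boundary groups are endpoints of cut edges.} Let $v\in B$. By definition, some edge $uu'$ of $G$ has $u\in W_v$ and $u'$ in a different cluster. By \cref{lem:groups}(a) the groups partition $V$, so $u'$ lies in a unique group $W_{v'}$ with $v'\in D_0$. Since clusters are unions of groups, $U_P=\bigcup_{v\in P}W_v$, and $W_v$ and $W_{v'}$ lie in different clusters, we have $v'\neq v$. Hence the edge $uu'$ joins two distinct groups, and $vv'$ is an edge of the simple quotient $H$. Moreover, $v$ and $v'$ lie in different final parts of $\Pp_L$, because the clusters containing $W_v$ and $W_{v'}$ differ. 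So $vv'$ is counted by $\cut(\Pp_L)$.

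\emph{Step 2: the charging map.} Define $\phi(v)$ to be one such cut edge $vv'$ of $H$ incident with $v$, chosen arbitrarily. This gives a map $\phi$ from $B$ to the set of edges of $H$ joining distinct final parts. Each edge of $H$ has exactly two endpoints and $\phi(v)$ is incident with $v$, so every edge has at most two preimages. Therefore $|B|\le 2\cdot\cut(\Pp_L)$. Equivalently, $|B|$ is at most the number of endpoints of cut edges, which is at most $2\cdot\cut(\Pp_L)$.

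The only point needing care is Step 1: an edge of $G$ between clusters must yield an edge of $H$ between distinct final parts. This holds because clusters are exact unions of groups, the groups partition $V$, and $H$ keeps an edge between any two distinct groups joined in $G$. The same argument holds for every coin outcome.
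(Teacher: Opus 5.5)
Your proposal is correct and follows the paper's argument: each boundary leader is an endpoint of an edge of $H$ joining distinct final parts, and each such edge has only two endpoints. Your Step 1 spells out the direction ``boundary group $\Rightarrow$ endpoint of a cut edge'', which the paper states in one line, and that direction is the only one the bound needs.
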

\begin{proof}
A leader $v$ belongs to $B$ exactly when it is an endpoint of an edge of $H$ joining distinct final parts.
Each such edge has two endpoints, so $|B|\le2\cdot \cut(\Pp_L)$.
\end{proof}

The integral case of the following lemma is the
boundary-repair argument used in the proof
of~\cite[Theorem~3.4]{CHW08}.
We include the proof for both the integral and the
fractional variant.

\begin{lemma}\label{lem:per-outcome}
For every coin outcome, the following hold.
\begin{enumerate}[label=(\alph*)]
\item The integral variant returns a dominating set of $G$, and the fractional variant returns a fractional dominating set of $G$.
\item The integral output has weight at most $\gamma(G)+|B|$, and the fractional output has weight at most $\gf(G)+|B|$.
\end{enumerate}
\end{lemma}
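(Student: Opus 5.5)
Part (a) will follow from the cluster structure. The clusters $U_P$ partition $V$, since the groups partition $V$ by \cref{lem:groups}(a) and the final parts partition $V(H)$ by \cref{lem:structure}\ref{it:partition}. For $u\in U$, the set $N[u]\cap U$ is the closed neighbourhood of $u$ in $G[U]$. In the integral variant, $D_U$ dominates $G[U]$ and hence dominates $u$ in $G$. In the fractional variant, $y^U$ satisfies $y^U(N_{G[U]}[u])\ge1$, and nonnegativity of the other coordinates gives $y(N_G[u])\ge y^U(N_{G[U]}[u])\ge1$. Every coordinate is nonnegative and rational, so $y$ is a fractional dominating set.

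For part (b) I will compare each cluster solution with a repaired restriction of a global optimum. Fix a minimum dominating set $D^*$, or an optimal fractional solution $x^*$. For each cluster $U$, set $D'_U=(D^*\cap U)\cup(B\cap U)$, or $x'^U=x^*|_U+\mathbf 1_{B\cap U}$ in the fractional case. The main step is to show that these are feasible for $G[U]$. Let $u\in U$ with group leader $v$, so $u\in W_v\subseteq U$.

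\emph{Case 1: $W_v$ is not a boundary group.} Then no edge of $G$ leaves $W_v$ for another cluster, so $N_G[u]\subseteq U$. The global constraint at $u$ therefore lives entirely inside $U$, giving $D^*\cap N[u]\cap U\neq\emptyset$, or $x^*(N[u]\cap U)\ge1$.

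\emph{Case 2: $W_v$ is a boundary group.} Then $v\in B\cap U$, and $v\in N[u]$ by \cref{lem:groups}(a), so $v$ dominates $u$ inside $G[U]$ and contributes $1$ to the fractional constraint.

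Hence $D'_U$, respectively $x'^U$, is feasible. Optimality of the cluster solutions gives $|D_U|\le|D^*\cap U|+|B\cap U|$ and $|y^U|\le x^*(U)+|B\cap U|$.

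Summing over the disjoint clusters gives at most $\gamma(G)+|B|$ in the integral variant. For the integral output I will use $|\bigcup_U D_U|\le\sum_U|D_U|$, with equality by disjointness. The fractional variant similarly gives $\gf(G)+|B|$. The only delicate point is Case 1: I must check that $N_G[u]\subseteq U$ follows from the definition of a boundary group, which concerns edges leaving $W_v$, and not merely from properties of edges of $H$. That is exactly how the boundary group is defined, so I expect no real obstacle.
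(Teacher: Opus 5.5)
Your proposal is correct and follows essentially the same route as the paper. Part (a) is checked cluster by cluster. For (b), you repair the restriction of a global integral or fractional optimum to each cluster by adding the boundary leaders $B$, use the same two cases with $v\in N[u]\cap U$, and sum over the disjoint clusters. The only cosmetic difference is that you split cases on whether $W_v$ is a boundary group, whereas the paper splits on whether $N[u]\subseteq U$; the two splits are equivalent here.
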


\begin{proof}
For (a), let $U$ be the unique cluster containing $u$.
In the integral variant, $D_U$ dominates~$u$ already in $G[U]$.
In the fractional variant, $y(N[u])\ge y(N[u]\cap U)\ge1$.
Thus the union of the cluster solutions is feasible in either case.

For (b), first let $z$ be an optimal fractional dominating set of $G$ and set $z'=z+\mathbf1_B$.
We claim that $z'|_U$ is feasible for every cluster LP.
Take $u\in U$, and let $W_v\subseteq U$ be its group.
If $N[u]\subseteq U$, feasibility follows from $z(N[u])\ge1$.
Otherwise $W_v$ is a boundary group, so $v\in B$.
By \cref{lem:groups}(a), $v$ belongs to $N[u]\cap U$, and hence $z'(N[u]\cap U)\ge z'_v\ge1$.
Optimality within each cluster now gives
\[
 |y|=\sum_U|y^U|\le\sum_Uz'(U)
 =|z'|=\gf(G)+|B|.
\]
For the integral variant, take a minimum dominating set $D^*$ of $G$.
The same two cases show that $(D^*\cup B)\cap U$ dominates $G[U]$ for every cluster $U$.
Therefore
\[
 |y|=\sum_U|D_U|
 \le\sum_U|(D^*\cup B)\cap U|
 =|D^*\cup B|\le\gamma(G)+|B|.\qedhere
\]
\end{proof}

We now put these estimates together to bound the expected
weight and round complexity of \cref{alg:rand}.

\begin{theorem}\label{thm:rand}
For every integer $L\ge1$, \cref{alg:rand} runs in at most $T_\Cc+9\cdot3^L$ rounds and uses at most $L$ independent fair coins at each vertex.
The integral variant always returns a dominating set and satisfies
\[
 \E|y|\le\bigl(1+2acq^L\bigr)\gamma(G).
\]
The fractional variant always returns a fractional dominating set and satisfies
\[
 \E|y|\le\bigl(1+2a(2a+1)cq^L\bigr)\gf(G).
\]
\end{theorem}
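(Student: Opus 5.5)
The proof assembles the earlier lemmas in three steps: round complexity, feasibility, and expected weight.

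\emph{Rounds and coins.} Computing $D_0$ takes $T_\Cc$ rounds by \cref{fact:const-minor}. One further round tells each vertex its leader $\ell(u)$. Since every group lies in the closed neighborhood of its leader, and adjacent groups have leaders at distance at most $3$ in $G$, a radius-$\rho$ view in $H$ is contained in a radius-$(3\rho+1)$ view in $G$. By \cref{lem:cluster-local}, a vertex $v\in D_0$ therefore determines its final part from its $(3\rho_L+1)$-view in $G$ together with one extra step for group membership. By \cref{lem:groups}(c), the cluster $U_P$ has radius at most $3r_L+1$ around $\lead(P)$, so every vertex of $G$ can reconstruct its whole cluster $G[U]$, together with the data needed for tie-breaking, from a view of radius about $3\rho_L+2(3r_L+1)+O(1)$. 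With $\rho_L=2(3^L-1)$ and $r_L=(3^L-1)/2$, this radius is $6\cdot3^L+3\cdot3^L+O(1)$. I expect the additive constants to fit under $9\cdot3^L$, using the loose slack available for $L\ge1$, after which local computation needs no further communication. This bookkeeping is the fiddliest step. If the constants overflow, I would tighten the argument by measuring the distances to the leader directly in $G$ instead of composing the crude factor $3$ with $\rho_L$. The coin bound is immediate, since only the coins $\kappa_1(v),\dots,\kappa_L(v)$ are used.

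\emph{Feasibility.} For every coin outcome, \cref{lem:per-outcome}(a) shows that both variants return feasible solutions.

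\emph{Expected weight.} Combine \cref{lem:per-outcome}(b) with \cref{lem:boundary} to get $|y|\le\OPT(G)+2\cut(\Pp_L)$ for every outcome. Taking expectations and applying \cref{cor:cut} with unit weights gives
\[
\E|y|\le\OPT(G)+2q^L|E(H)|.
\]
By \cref{lem:groups}(b) and \cref{fact:const-minor}, $|E(H)|\le a|D_0|\le ac\gamma(G)$. This yields the integral bound $(1+2acq^L)\gamma(G)$. For the fractional variant, \cref{cor:gap} applies because $G$ is $a$-sparse, so $\gamma(G)\le(2a+1)\gf(G)$, which gives $(1+2a(2a+1)cq^L)\gf(G)$.

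The argument needs one small check: $D_0$ depends only on $G$ and the identifiers, not on the coins. Hence $H$ is a fixed graph and \cref{cor:cut} applies unconditionally.
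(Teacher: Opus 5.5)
The feasibility and expected-weight parts of your proposal are exactly the paper's argument. You combine \cref{lem:per-outcome} with \cref{lem:boundary}, apply \cref{cor:cut} on the unit-weight quotient, use $|E(H)|\le a|D_0|\le ac\gamma(G)$, and invoke \cref{cor:gap} for the fractional variant. Your remark that $D_0$, and hence $H$, is independent of the coins is what lets \cref{cor:cut} apply unconditionally; the paper uses this implicitly.

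The round count follows the paper's route: lift $H$-views to $G$ with factor $3$, apply \cref{lem:cluster-local}, then add the cluster radius. But you stop at an expectation, and your justification for why it fits is not right. There is no slack that grows with $L$. Since $3\rho_L+2(3r_L+1)=9\cdot3^L-7$, summing these two radii leaves exactly $7$ units for all additive overhead, and the paper's accounting uses every one of them:
\begin{itemize}
\item Reconstructing the $\rho$-view of a leader $v$ in $H$ requires the groups of all leaders up to $G$-distance $3\rho+3$, to see the quotient edges leaving groups at $H$-distance $\rho$.
\item Each group $W_{v'}$ is determined by the $(T_\Cc+2)$-view of $v'$, so the paper uses the $(3\rho+T_\Cc+5)$-view of $v$.
\item One more step from $u\in W_v$ to $v$ gives radius $3\rho_L+T_\Cc+6$.
\item Adding $2(3r_L+1)$ gives radius $T_\Cc+9\cdot3^L-1$.
\item Gathering this view takes one more round, for exactly $T_\Cc+9\cdot3^L$ rounds.
\end{itemize}
So you need to carry out this count explicitly rather than appeal to loose slack.

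If you want genuine room, the summation is wasteful. Once $u$ knows its final part $P$, every leader $v'\in P$ lies within $G$-distance $6r_L+1$ of $u$, because $\dist_H(v,v')\le2r_L$. Hence $U$ and $G[U]$ already lie inside the $(3\rho_L+T_\Cc+6)$-view, and $T_\Cc+6\cdot3^L+1$ rounds suffice.
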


\begin{proof}
The feasibility follows from \cref{lem:per-outcome}(a).
Writing $k=|D_0|$, we obtain from \cref{lem:boundary,cor:cut,lem:groups} that
\[
 \E|B|\le2\E[\cut(\Pp_L)]
 \le2q^L|E(H)|
 \le2akq^L
 \le2acq^L\gamma(G).
\]
Combining this with \cref{lem:per-outcome}(b) proves the integral estimate.
For the fractional estimate, use in addition $\gamma(G)\le(2a+1)\gf(G)$ from \cref{cor:gap}.

We next bound the number of rounds.
Membership in $D_0$ is determined by the $T_\Cc$-view, so $\ell(u)$ is determined by the $(T_\Cc+1)$-view of $u$, and a leader's group $W_v$ by its $(T_\Cc+2)$-view.
The edge-lifting argument in the proof of \cref{lem:groups} shows that $\dist_G(v,v')\le3\dist_H(v,v')$ for group leaders.
To reconstruct the $\rho$-view of a leader $v$ in $H$, it suffices to identify the groups of leaders at $G$-distance at most $3\rho+3$, and the edges between these groups.
This information, together with their coins, is contained in the $(3\rho+T_\Cc+5)$-view of $v$ in $G$.
By \cref{lem:cluster-local}, a vertex $u\in W_v$ can therefore identify its entire cluster $U$ from its $(3\rho_L+T_\Cc+6)$-view.
Every vertex of $U$ is at distance at most $2(3r_L+1)$ from $u$, by \cref{lem:groups}(c).
A view of radius
$2(3r_L+1)+3\rho_L+T_\Cc+6
 =T_\Cc+9\cdot3^L-1$ 
thus contains both the cluster information and the identifier-labelled graph $G[U]$ needed to apply the fixed optimization rule.
Gathering this view takes at most $T_\Cc+9\cdot3^L$ rounds.
\end{proof}

Choosing $L$ sufficiently large as a function of $\Cc$ and
$\eps$ proves \cref{thm:main}(a).
In the next section, we obtain the deterministic guarantees
by taking coordinatewise expectations of the fractional
output and applying the rounding lemma.
We collect all three guarantees in \cref{thm:minor-frac}.


\section{Derandomization}\label{sec:derand}

We now remove the randomness using Kuhn's derandomization
principle for distributed linear programs~\cite[Theorem~2.32]{Kuh05}.
Each vertex replaces its output coordinate with its expected value.
Since the domination constraints are linear, the resulting
vector is a fractional dominating set, and its weight equals
the expected weight of the randomized output.
Under our finite-randomness assumptions, each vertex can
compute this expectation exactly from a local view.
We first state the principle for fractional domination and
then apply it to \cref{alg:rand}.
Together with threshold rounding from \cref{lem:rounding},
this completes the proof of \cref{thm:main}.

\begin{theorem}\label{thm:derand}
Let $\Cc$ be a graph class and let $\OPT\in\{\gamma,\gf\}$.
Suppose that $\mathcal A$ is a randomized $R$-round LOCAL algorithm whose output at every vertex is rational. At each vertex, the algorithm uses at most an effectively prescribed finite number of independent fair coins, determined by its initial local input.
Assume that, for every $G\in\Cc$, every identifier assignment and every outcome of the coins, the output $y$ is a fractional dominating set, and that $\E|y|\le\rho\cdot\OPT(G)$.
Then the following hold.
\begin{enumerate}[label=(\alph*)]
\item In $R+1$ deterministic rounds, every vertex $v$ can compute $x_v=\E[y_v]$. The resulting vector~$x$ is a fractional dominating set with $|x|\le\rho\cdot\OPT(G)$.
\item If every graph in $\Cc$ is $a$-sparse and comparisons of rational numbers with $1/(2a+1)$ can be performed exactly, then a deterministic $(R+2)$-round algorithm outputs a dominating set of size at most $(2a+1)\rho\cdot\OPT(G)$.
\end{enumerate}
\end{theorem}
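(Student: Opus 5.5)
Part (a) is a direct averaging argument, implemented locally. Fix $G\in\Cc$ and an identifier assignment. Since $\mathcal A$ runs in $R$ rounds, the output $y_v$ is a function of the $R$-view of $v$. This view includes the prescribed coin strings of the vertices in $B_G(v,R)$. The ball is finite, and each vertex's coin budget $k_u$ is an effectively computable function of its initial local input. So the relevant randomness is a finite family of $K=\sum_{u\in B_G(v,R)}k_u$ fair bits, and every outcome has probability $2^{-K}$.

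The algorithm for $v$ has two steps. First, $v$ gathers its deterministic $R$-view (the view without coins) in $R+1$ rounds, using the view-gathering observation in \cref{sec:prelim}. Second, $v$ enumerates all $2^K$ coin assignments to the ball, simulates $\mathcal A$ on each resulting $R$-view, and averages the rational outputs. This gives exactly
\[
x_v=\E[y_v]=2^{-K}\sum_{\omega}y_v(\omega),
\]
which is rational. The step I expect to need the most care is this locality and exactness claim. Coins of vertices outside $B_G(v,R)$ do not affect $y_v$, and the distribution of coins inside the ball is uniform and independent of the graph. Both facts follow from the definition of the $r$-view and the model's independence assumption, together with the fact that the budgets are read from initial inputs already contained in the deterministic view.

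Feasibility holds because every outcome satisfies $y(N[u])\ge1$ and $y\ge0$. Taking expectations and using linearity gives $x(N[u])=\E[y(N[u])]\ge1$ and $x\ge0$. For the weight, $|x|=\sum_v\E[y_v]=\E|y|\le\rho\cdot\OPT(G)$, again by linearity over the finite vertex set.

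For part (b), apply \cref{cor:det-rounding} (equivalently \cref{lem:rounding}) to the deterministic $(R+1)$-round algorithm from (a). Its output $x$ is a rational fractional dominating set of weight at most $\rho\cdot\OPT(G)$ on an $a$-sparse graph. By assumption each vertex can decide $x_v\ge1/(2a+1)$ exactly, so one further round produces $D_1\cup D_2$. This set dominates $G$ and has size at most $(2a+1)|x|\le(2a+1)\rho\cdot\OPT(G)$, for a total of $R+2$ rounds.
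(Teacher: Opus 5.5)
Your proposal is correct and follows the paper's proof essentially step by step. As in the paper, you gather the coin-free $R$-view in $R+1$ rounds, read the coin budgets of the ball off the initial inputs, and average the simulated outputs exactly over all coin assignments. Feasibility and the weight bound then follow by linearity of expectation, and one further round of threshold rounding via \cref{lem:rounding} gives part (b).
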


\begin{proof}
We recall Kuhn's argument~\cite[Theorem~2.32]{Kuh05},
making the exact computation of each expectation explicit.

For (a), fix a vertex $v$ and let $\mathcal V_v$ be its
$R$-view with the random strings omitted.
This view includes the initial local inputs and therefore
determines the prescribed coin budget at every vertex in
the view.
Let $m$ be the sum of these budgets, padding unused portions
of the random strings if necessary.
For each assignment $c\in\{0,1\}^m$, simulate the local rule
of $\mathcal A$ and denote its output at $v$ by
$f(\mathcal V_v,c)$.
Since coins outside this view cannot affect $y_v$, we have
\begin{align*}
x_v:=2^{-m}\sum_{c\in\{0,1\}^m}f(\mathcal V_v,c)
=\E[y_v].
\end{align*}
This is a finite average of rational numbers, so $v$ can
compute it exactly after learning $\mathcal V_v$ in $R+1$
rounds.

Every output of $\mathcal A$ is a fractional dominating set.
Hence $x\ge0$, and linearity of expectation gives
$x(N[u])=\E[y(N[u])]\ge1$ for every vertex $u$.
It also gives $|x|=\E|y|\le\rho\cdot\OPT(G)$.
Thus~$x$ has the required feasibility and weight.

For (b), apply \cref{lem:rounding}.
This takes one additional round and returns a dominating
set of size at most $(2a+1)|x|$, as claimed.
\end{proof}


Combining \cref{thm:rand,thm:derand} yields the following theorem
and completes the proof of our main result, \cref{thm:main}; for $\eps>1$ it
suffices to run the algorithm with accuracy~$1$.

\begin{theorem}\label{thm:minor-frac}\label{thm:frac}
Let $\Cc$ be a proper minor-closed class, put $a=a(\Cc)$, and let $\eps\in(0,1]$.
If $a=0$, both domination problems can be solved exactly in zero rounds by assigning value~$1$ to every vertex.
Otherwise put
\[
c=c_\Cc,\qquad q=1-\frac1{8a},\qquad \kappa_a=\frac{\ln3}{\ln(1/q)}.
\]
Then the following algorithms exist.
\begin{enumerate}[label=(\alph*)]
\item A randomized LOCAL algorithm runs in $T_\Cc+O_{a,c,\eps}(1)$ rounds, always outputs a dominating set, and has expected output size at most $(1+\eps)\gamma(G)$ on every $G\in\Cc$.
\item A deterministic LOCAL algorithm runs in $T_\Cc+O\bigl((2a(2a+1)c/\eps)^{\kappa_a}\bigr)$ rounds and outputs a rational fractional dominating set of weight at most $(1+\eps)\gf(G)$ on every $G\in\Cc$.
\item A deterministic LOCAL algorithm runs in $T_\Cc+O_{a,c,\eps}(1)$ rounds and outputs a dominating set of size at most $(2a+1+\eps)\gf(G)$ on every $G\in\Cc$.
\end{enumerate}
None of these algorithms requires the number of vertices or the maximum degree as part of the input.
\end{theorem}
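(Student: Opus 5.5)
The plan is to instantiate \cref{thm:rand} with a suitable $L$ and then apply \cref{thm:derand}. First I dispose of $a=0$. Then every graph in $\Cc$ is edgeless, so $\gamma(G)=\gf(G)=|V(G)|$, and the all-ones vector is optimal in both problems. For $a>0$, every member of $\Cc$ has an edge-free or edge-containing structure covered by the standing assumption of \cref{sec:cluster}, so $a\ge1/2$ and $0<q<1$.

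For (a), I choose the smallest integer $L\ge1$ with $2acq^L\le\eps$, that is, $L=\lceil\ln(2ac/\eps)/\ln(1/q)\rceil$ (or $L=1$ if this is nonpositive). The integral variant of \cref{alg:rand} then always returns a dominating set with $\E|y|\le(1+\eps)\gamma(G)$. It runs in $T_\Cc+9\cdot3^L$ rounds, which is $T_\Cc+O_{a,c,\eps}(1)$.

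For (b), I choose $L$ minimal with $2a(2a+1)cq^L\le\eps$. Then $3^L\le 3\cdot 3^{\ln(2a(2a+1)c/\eps)/\ln(1/q)}=3\,(2a(2a+1)c/\eps)^{\kappa_a}$, since $3^{\log_{1/q}X}=X^{\ln3/\ln(1/q)}$. The fractional variant is feasible for every coin outcome and has rational outputs, since lexicographically smallest LP optima of rational programs are rational. Each vertex uses at most $L$ coins, a budget fixed by the known constants and hence effectively prescribed. \Cref{thm:derand}(a) with $\OPT=\gf$ and $\rho=1+\eps$ then yields a deterministic rational fractional dominating set of weight at most $(1+\eps)\gf(G)$ in $T_\Cc+9\cdot3^L+1$ rounds, which matches the claimed bound.

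For (c), the main obstacle is the exact threshold comparison when $a$ is irrational, or not known as a rational. I will pick a rational $a'\ge a$ with $2a'+1\le 2a+1+\eps/3$. Every graph in $\Cc$ is $a$-sparse and hence $a'$-sparse, so \cref{lem:rounding} applies with the rational threshold $1/(2a'+1)$, and comparisons with the rational $x_v$ are exact. I then run (b) with accuracy $\eps'=\eps/(3(2a+2))$ and apply \cref{thm:derand}(b), or equivalently \cref{cor:det-rounding}. The resulting dominating set has size at most
\[(2a'+1)(1+\eps')\gf(G)\le(2a+1+\eps/3)(1+\eps')\gf(G)\le(2a+1+\eps)\gf(G)\]
for $\eps\le1$, again in $T_\Cc+O_{a,c,\eps}(1)$ rounds. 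Finally, none of the steps uses $|V|$ or $\Delta$: \cref{fact:const-minor} does not need them, $L$ depends only on $a,c,\eps$, and all later steps are local computations on bounded views.
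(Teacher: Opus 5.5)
Your proposal is correct and follows essentially the paper's proof. It uses the same choice of $L$ with $2a(2a+1)cq^L\le\eps$, giving $3^L\le3(2a(2a+1)c/\eps)^{\kappa_a}$, the same derandomization of the fractional variant via \cref{thm:derand}(a), and the same device of a rational $a'$ slightly above $a$ to make the threshold comparison exact. The only differences are harmless constants: a separately chosen, smaller $L$ for (a), and your particular $a'$ and $\eps'$.

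One sentence should be cleaned up: the justification of $a\ge1/2$ is garbled. State simply that $a>0$ forces some member of $\Cc$ to have an edge, so $K_2\in\Cc$ by minor closure.
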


\begin{proof}
The case $a=0$ is immediate, so assume $a>0$.
Minor closure then gives $K_2\in\Cc$, whence $a\ge1/2$ and $q\in(0,1)$.
Choose
\[
L=\left\lceil\frac{\ln(2a(2a+1)c/\eps)}{\ln(1/q)}\right\rceil.
\]
Since $a\ge1/2$, $c\ge1$, and $\eps\le1$, we have $L\ge1$.
Moreover,
\[
2a(2a+1)cq^L\le\eps,
\qquad
3^L\le3\bigl(2a(2a+1)c/\eps\bigr)^{\kappa_a}.
\]
By \cref{thm:rand}, the integral variant of \cref{alg:rand} always returns a dominating set and has expected output size at most $(1+2acq^L)\gamma(G)\le(1+\eps)\gamma(G)$.
This proves~(a).

The fractional variant always returns a fractional dominating set $y$ with $\E|y|\le(1+\eps)\gf(G)$ in at most $T_\Cc+9\cdot3^L$ rounds.
Its cluster solutions are rational, and its randomness uses at most $L$ fair coins per vertex.
Thus \cref{thm:derand}(a) computes the vector $x_v=\E[y_v]$ in at most
\[
T_\Cc+9\cdot3^L+1
\le T_\Cc+27\bigl(2a(2a+1)c/\eps\bigr)^{\kappa_a}+1
\]
rounds.
This proves~(b).

For~(c), choose a rational $a'$ with $a<a'<a+\eps/4$ and apply~(b) with accuracy $\eta=\eps/(2(2a'+1))$.
Every graph in $\Cc$ is $a'$-sparse, and the vector from~(b) is rational.
Rounding at the rational threshold $1/(2a'+1)$ by \cref{lem:rounding} takes one additional round and gives
\[
|D|\le(2a'+1)(1+\eta)\gf(G)
=\bigl(2a'+1+\eps/2\bigr)\gf(G)
\le(2a+1+\eps)\gf(G).
\]
The choice of $a'$ avoids requiring exact comparisons with the possibly unspecified real number~$a$.
All parameters are fixed for the class and the requested accuracy, so none of the algorithms needs the graph order or maximum degree.
\end{proof}

For planar graphs this gives $a=3$, $q=23/24$, and the factor $7+\eps$ in \cref{cor:planar-intro}.
\medskip

\begin{remark}\label{rem:parameters}
The theorem is stated for each fixed class and each fixed accuracy.
The required integer number of clustering levels and rational rounding threshold can therefore be built into the algorithm.
If, for every positive rational $\delta$, one can compute a rational $a_\delta$ with $a(\Cc)\le a_\delta<a(\Cc)+\delta$, these parameters can be chosen uniformly from a positive rational input~$\eps$.
In particular, the algorithms are uniform in~$\eps$ when the density is a specified rational number, as it is for planar graphs with $a=3$.
The existence of the algorithms for an arbitrary fixed class does not assume that its density can be computed from a description of the class.
\end{remark}

\section{Rounding with an additive excess}\label{sec:excess}

The bound in \cref{thm:minor-frac}(c) depends on the largest edge density in the class.
Suppose instead that $|E(G)|\le d|V(G)|+b$ throughout the class, where $b$ is a fixed constant.
We show that threshold rounding then costs a factor of $2d+1$ and an additive error depending on~$b$.
The additive-error reduction of Bonamy et al.~\cite[full version, Proposition~D.2]{BDGPSW26} then gives a multiplicative guarantee against $\gamma(G)$, with the asymptotic density in place of the supremum density.
For bounded genus, this lets us use the linear coefficient~$3$ in Euler's inequality.

For $d>0$, define $\Phi_d(G)=\max_{S\subseteq V(G)}\bigl(|E(G[S])|-d|S|\bigr)$.
The empty set shows that $\Phi_d(G)\ge0$, and $\Phi_d(G)=0$ exactly when $G$ is $d$-sparse.
More generally, if every subgraph $G'$ of~$G$ satisfies $|E(G')|\le d|V(G')|+b$, then $\Phi_d(G)\le b$.

The following lemma extends \cref{lem:orientation} and follows from a standard max-flow min-cut argument.
It also follows from Hajek's characterization of balanced allocations~\cite[Theorem~2]{Haj90}.
We include the proof for completeness.

\begin{lemma}\label{lem:excess-orientation}
For every graph $G$ and every $d>0$, there is a fractional orientation $o$ whose total out-degree excess above $d$ satisfies
\begin{align*}
\sum_{u\in V(G)}\max\{0,\mathrm{out}_o(u)-d\}\le\Phi_d(G),
\end{align*}
where $\mathrm{out}_o(u)=\sum_{e\ni u}o_e(u)$.
\end{lemma}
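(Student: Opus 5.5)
We model fractional orientations as flows in a bipartite network and apply max-flow min-cut; the problem reduces to a minimum-cost flow, or equivalently to a capacitated transportation problem with overflow. Build a network with a source $s$, one node for each edge $e\in E(G)$, one node for each vertex $u\in V(G)$, and a sink $t$. Put an arc $s\to e$ of capacity $1$ for every edge $e$. For each endpoint $u$ of $e$, put an arc $e\to u$ of infinite capacity. Finally put an arc $u\to t$ of capacity $d$ for every vertex $u$. A flow routing amount $o_e(u)$ from $e$ to $u$ corresponds to a fractional orientation exactly when every arc $s\to e$ is saturated. In that case the load arriving at $u$ is $\mathrm{out}_o(u)$.

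Augment the network with an overflow arc $u\to t$ of unlimited capacity at each vertex, but charge cost $1$ per unit on it. A fractional orientation with total excess at most $E$ then corresponds to a saturating flow whose overflow amount is at most $E$. Equivalently, in the original network with capacity $d$ on each arc $u\to t$, the maximum flow value $F$ satisfies
\[
\min_o\sum_u\max\{0,\mathrm{out}_o(u)-d\}=|E(G)|-F.
\]
To justify this, take a maximum flow and send the unrouted part of each edge $e$ arbitrarily to one endpoint as overflow. Conversely, given an orientation, cap each vertex's load at $d$ by discarding the surplus, which leaves a flow of value $|E(G)|-\text{excess}$. So it suffices to show $F\ge|E(G)|-\Phi_d(G)$.

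By max-flow min-cut, $F$ equals the minimum capacity of an $s$-$t$ cut. Consider a finite cut, and let $S$ be the set of vertex nodes on the source side. Every edge node on the source side must have both endpoints in $S$, since otherwise an infinite arc crosses the cut. The minimum capacity for a fixed $S$ is therefore obtained by placing exactly the edges of $G[S]$ on the source side. The cut then consists of the arcs $s\to e$ for $e\notin E(G[S])$ together with the arcs $u\to t$ for $u\in S$, giving capacity
\[
|E(G)|-|E(G[S])|+d|S|.
\]
Minimizing over $S$ gives $F=|E(G)|-\max_S\bigl(|E(G[S])|-d|S|\bigr)=|E(G)|-\Phi_d(G)$, which yields the claimed bound.

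The only delicate point is the equivalence between minimum total excess and $|E(G)|-F$. One direction uses the fact that capping loads at $d$ keeps the flow feasible. The other uses the fact that unrouted edge mass may be assigned arbitrarily, each unit increasing the excess by at most one unit. Both are straightforward, and the main step is the cut computation. Alternatively, the result can be cited from Hajek's theorem~\cite{Haj90}.
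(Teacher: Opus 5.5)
Your proof is correct and follows essentially the same route as the paper: the same source--edge--vertex--sink network, the same cut analysis (edge nodes on the source side must lie in $E(G[S])$, giving capacity $|E(G)|-|E(G[S])|+d|S|$), and the same conversion of a maximum flow into an orientation by assigning the unrouted part of each edge arbitrarily to an endpoint. The differences are cosmetic. You use infinite rather than $(|E|+1)$-capacity middle arcs, and you prove the exact equality $F=|E(G)|-\Phi_d(G)$ together with the converse capping direction, whereas the lemma only needs $F\ge|E(G)|-\Phi_d(G)$ and the flow-to-orientation direction.
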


\begin{proof}
Write $G=(V,E)$.
Construct a flow network with a source $s$, a sink $t$, one node for each edge of $G$, and one node for each vertex of $G$.
For each edge $e=uv$, add an arc $s\to e$ of capacity~$1$ and arcs $e\to u$ and $e\to v$ of capacity $|E|+1$.
For each vertex $u$, add an arc $u\to t$ of capacity~$d$.

We show that every $s$--$t$ cut has capacity at least $|E|-\Phi_d(G)$.
A cut of capacity greater than~$|E|$ already satisfies this bound, since $\Phi_d(G)\ge0$.
For any other cut, let $S\subseteq V$ and $Y\subseteq E$ be the sets of vertex nodes and edge nodes on the source side.
No arc from an edge node to an endpoint crosses this cut, so $Y\subseteq E(G[S])$.
The cut therefore has capacity
\begin{align*}
|E\setminus Y|+d|S|
&\ge |E|-\bigl(|E(G[S])|-d|S|\bigr)
\ge |E|-\Phi_d(G).
\end{align*}
By the max-flow min-cut theorem, there is a flow $f$ of value $|f|\ge |E|-\Phi_d(G)$.

For each edge $e=uv$, assign the amounts $f(e\to u)$ and $f(e\to v)$ to its endpoints.
Their sum is $f(s\to e)\le1$.
Assign the remaining amount $1-f(s\to e)$ to either endpoint.
This defines a fractional orientation~$o$.
The amounts assigned by the flow contribute at most $d$ at each vertex, so only the remaining amounts can contribute to the excess.
Consequently,
\begin{align*}
\sum_{u\in V}\max\{0,\mathrm{out}_o(u)-d\}
&\le \sum_{e\in E}\bigl(1-f(s\to e)\bigr)
=|E|-|f|
\le\Phi_d(G),
\end{align*}
as required.
\end{proof}

We now extend \cref{lem:rounding} by allowing nonzero out-degree excess.
The proof follows the same rounding argument, using \cref{lem:excess-orientation} to bound the additional cost.
When $G$ is $d$-sparse, we have $\Phi_d(G)=0$ and recover the earlier bound.

\begin{lemma}[rounding with excess]\label{lem:rounding-excess}
Let $d>0$, let $G=(V,E)$ be a graph, and let $x$ be a fractional dominating set of~$G$.
Set $\theta=1/(2d+1)$, $D_1=\{v\in V:x_v\ge\theta\}$, and $D_2=V\setminus N[D_1]$.
Then $D_1\cup D_2$ is a dominating set and
\begin{align*}
|D_1\cup D_2|\le(2d+1)|x|+\frac2d\Phi_d(G).
\end{align*}
If each vertex can compare its value $x_v$ exactly with $\theta$, the set is computable in one LOCAL round.
\end{lemma}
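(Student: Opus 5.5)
We follow the proof of \cref{lem:rounding} step by step, replacing the orientation from \cref{lem:orientation} by the one from \cref{lem:excess-orientation}. Domination of $D_1\cup D_2$ and the one-round implementation are exactly as before. The bound $|D_1|\le\theta^{-1}x(D_1)=(2d+1)x(D_1)$ is also unchanged. The whole task is to show
\[
|D_2|\le(2d+1)\,x(V\setminus D_1)+\frac2d\Phi_d(G).
\]

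Fix a fractional orientation $o$ as in \cref{lem:excess-orientation}. Write $\mathrm{out}(u)=\mathrm{out}_o(u)$ and $\delta(u)=\max\{0,\mathrm{out}(u)-d\}$, so that $\sum_u\delta(u)\le\Phi_d(G)$. For $u\in D_2$, define $\sigma(u)$ as in the proof of \cref{lem:rounding}. Since every vertex of $N[u]$ has weight below $\theta$, feasibility at $u$ gives
\[
1<\theta+\theta\,\mathrm{out}(u)+\sigma(u)\le(d+1)\theta+\theta\delta(u)+\sigma(u).
\]
Because $1-(d+1)\theta=d\theta$, this yields $\sigma(u)>d\theta-\theta\delta(u)$.

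We then sum over $u\in D_2$ and use the same charging as before. Each $v\notin D_1$ receives $x_v$ times the orientation mass of distinct edges, so
\[
\sum_{u\in D_2}\sigma(u)\le\sum_{v\notin D_1}x_v\,\mathrm{out}(v)\le d\,x(V\setminus D_1)+\sum_{v\notin D_1}x_v\delta(v).
\]
Since $x_v<\theta$ for $v\notin D_1$, the last sum is at most $\theta\,\Phi_d(G)$. On the left side, the penalties contribute at most $\theta\sum_{u\in D_2}\delta(u)\le\theta\,\Phi_d(G)$. Combining the two estimates gives
\[
d\theta|D_2|<d\,x(V\setminus D_1)+2\theta\,\Phi_d(G).
\]
Dividing by $d\theta$ gives $|D_2|\le(2d+1)x(V\setminus D_1)+\frac{2}{d}\Phi_d(G)$. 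If $D_2=\emptyset$ the bound is trivial, and in all other cases the strict inequality gives what we need. Adding this to the bound on $D_1$ finishes the proof.

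The only delicate point is the accounting of the excess. It enters twice: once at the dominated vertex $u\in D_2$, where the out-degree appears in the feasibility bound, and once at the charged vertex $v\notin D_1$, where $x_v\,\mathrm{out}(v)$ is summed. Both occurrences must be absorbed using the single global bound $\sum_u\delta(u)\le\Phi_d(G)$, each time multiplied by a factor of at most $\theta$. This double use is exactly where the factor $2/d$ comes from. We must also check that vertices of $D_2$ and of $V\setminus D_1$ may overlap without harm, since each occurrence is bounded separately by $\theta\Phi_d(G)$.
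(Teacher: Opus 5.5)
Your proposal is correct and follows the paper's proof essentially step for step. It uses the same orientation from \cref{lem:excess-orientation}, the same per-vertex bound $\sigma(u)>d\theta-\theta\delta_u$ for $u\in D_2$, and the same charging to $V\setminus D_1$, where the excess is absorbed once on each side at cost $\theta\Phi_d(G)$; dividing by $d\theta$ then yields the term $2\Phi_d(G)/d$.
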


\begin{proof}
The set $D_1\cup D_2$ dominates $G$, exactly as in \cref{lem:rounding}.
Under the stated comparison assumption, membership is decided in one round.
Also, $|D_1|\le(2d+1)x(D_1)$.
If $D_2=\emptyset$, this proves the claim, so assume $D_2\ne\emptyset$.

Choose the orientation from \cref{lem:excess-orientation}.
Write $\delta_u=\max\{0,\mathrm{out}_o(u)-d\}$ and
$\sigma(u)=\sum_{e=uv\ni u}x_v o_e(v)$.
Then $\mathrm{out}_o(u)\le d+\delta_u$ and $\sum_u\delta_u\le\Phi_d(G)$.
For $u\in D_2$, every vertex in $N[u]$ has value less than $\theta$.
Feasibility therefore gives
\begin{align*}
1
\le x_u+\sum_{e=uv\ni u}x_v o_e(u)+\sigma(u)
<\theta+\theta\mathrm{out}_o(u)+\sigma(u)
&\le(d+1)\theta+\theta\delta_u+\sigma(u).
\end{align*}
Thus $\sigma(u)>d\theta-\theta\delta_u$.
Summing over $D_2$ and collecting terms at their other endpoints, all of which lie outside $D_1$, yields
\begin{align*}
d\theta|D_2|-\theta\sum_{u\in D_2}\delta_u
<\sum_{v\in V\setminus D_1}x_v\mathrm{out}_o(v)
\le d\,x(V\setminus D_1)
   +\theta\sum_{v\in V\setminus D_1}\delta_v.
\end{align*}
Each sum of excesses is at most $\Phi_d(G)$.
Dividing by $d\theta$ gives
$|D_2|<(2d+1)x(V\setminus D_1)+2\Phi_d(G)/d$.
Adding the bound for $D_1$ proves the result.
\end{proof}

For this threshold rule, some allowance for excess is necessary on general graphs.
Indeed, on~$K_8$ the vector $x\equiv1/8$ is feasible and has weight~$1$.
With $d=3$, the threshold is $1/7$, so $D_1=\emptyset$ and $D_2=V(K_8)$.
The output has $8$ vertices, exceeding $7|x|$, while $\Phi_3(K_8)=28-24=4$.

We now combine \cref{lem:rounding-excess} with our fractional approximation and the additive-error reduction of~\cite[full version, Proposition~D.2]{BDGPSW26}.
This gives the bound in terms of asymptotic density stated in \cref{thm:asym-intro}.

\begin{corollary}\label{cor:asym}
For every proper minor-closed class $\Cc$ and every $\eps\in(0,1]$, there is a deterministic $O_{\Cc,\eps}(1)$-round LOCAL algorithm that, on every $G\in\Cc$, computes a dominating set of size at most $(2d^*(\Cc)+1+\eps)\gamma(G)$.
\end{corollary}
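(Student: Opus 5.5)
Fix $\eps\in(0,1]$ and write $d^*=d^*(\Cc)$. First choose a rational $d$ with $d^*<d\le d^*+\eps/8$ and $d>0$. By the definition of $d^*$ there is a constant $b\ge0$ with $|E(G)|\le d|V(G)|+b$ for every $G\in\Cc$. Since $\Cc$ is closed under subgraphs, this bound also holds for every subgraph, so $\Phi_d(G)\le b$ for all $G\in\Cc$. Then run the deterministic algorithm of \cref{thm:minor-frac}(b) with accuracy $\eta=\eps/(4(2d+1))$. This gives a rational fractional dominating set $x$ with $|x|\le(1+\eta)\gf(G)\le(1+\eta)\gamma(G)$. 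Next apply \cref{lem:rounding-excess} at the rational threshold $1/(2d+1)$, which takes one round. We obtain a dominating set $D$ with
\[
|D|\le(2d+1)(1+\eta)\gamma(G)+\tfrac{2b}{d}\le\bigl(2d^*+1+\tfrac{\eps}{2}\bigr)\gamma(G)+C,
\]
where $C=2b/d$ depends only on $\Cc$ and $\eps$. If $d^*=0$, choosing $d=\eps/8$ still works.

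The remaining task is to remove the additive constant $C$. For this I follow the additive-error reduction of \cite[full version, Proposition~D.2]{BDGPSW26}. The additive term is paid once per application of the rounding, so I apply the whole procedure separately on each connected component $K$ of $G$. Here $\Phi_d$ and the fractional bound are both taken on $K$, and $\gamma(G)=\sum_K\gamma(K)$. The algorithm is local per component because all our procedures act componentwise.

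Fix a threshold $\Delta=\lceil 6C/\eps\rceil$. Each vertex gathers its $(\Delta+1)$-view. If that view shows that its component $K$ has diameter at most $\Delta$, the vertices of $K$ compute a minimum dominating set of $K$ exactly, breaking ties by identifiers. This costs $\gamma(K)$. Otherwise $\operatorname{diam}(K)>\Delta$, and the component runs the rounding algorithm above. The fact $\gamma(K)\ge(\operatorname{diam}(K)+1)/3$ then gives $C\le\frac{3C}{\Delta}\gamma(K)\le\frac{\eps}{2}\gamma(K)$. So on such a component the output has size at most $(2d^*+1+\eps)\gamma(K)$. Summing over components gives the claim. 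The round count is $O_{\Cc,\eps}(1)$, since $\Delta$, $L$ and the threshold are all fixed by $\Cc$ and $\eps$.

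The step I expect to be the main obstacle is the case split itself. A vertex must decide consistently, from a bounded view, whether its component is small. Every vertex sees the entire component exactly when the component has diameter at most $\Delta$; a vertex of eccentricity at most $\Delta$ sees its whole component, and can detect this because boundary degrees are recorded in the view. The difficulty is a component of large diameter in which some vertices see only part of it. Those vertices must fall into the rounding branch, and vertices in the same component must never mix the two branches. To handle this, I would make the decision depend on whether the $(2\Delta+1)$-view contains the full component of diameter at most $\Delta$. This is a component-level property that every vertex evaluates identically: if the diameter is at most $\Delta$, every vertex's $(2\Delta+1)$-view contains the whole component; if it exceeds $\Delta$, the true diameter is certifiably larger. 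It then remains to check that the per-component application of \cref{lem:rounding-excess} is valid, namely that the threshold rule on $G$ restricted to $K$ coincides with the rule on $K$. This holds since $N[\cdot]$ stays within components.
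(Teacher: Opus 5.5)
Your proposal is correct and follows the paper's proof. The ingredients are the same: a rational $d$ just above $d^*(\Cc)$, the bound $\Phi_d\le b$, \cref{thm:minor-frac}(b) with accuracy $\eps/(4(2d+1))$, \cref{lem:rounding-excess} at the rational threshold, and a componentwise removal of the additive term. The only difference is that you carry out that additive-error reduction explicitly (exact solutions on components of diameter at most $\Delta$, otherwise absorbing $2b/d$ via $\gamma(K)\ge(\operatorname{diam}(K)+1)/3$, with the consistent decision given by the component-level predicate $\operatorname{diam}(K)\le\Delta$), whereas the paper imports it from \cite[full version, Proposition~D.2]{BDGPSW26}. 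Your aside $d=\eps/8$ for $d^*=0$ need not be rational and is unnecessary, since your general choice of a rational $d\in(0,\eps/8]$ already covers that case.
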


\begin{proof}
Choose a rational $d$ with $d^*(\Cc)<d<d^*(\Cc)+\eps/4$.
By the definition of $d^*(\Cc)$, there is a constant $b\ge0$ such that $|E(G)|\le d|V(G)|+b$ for every $G\in\Cc$.
Since $\Cc$ is closed under taking subgraphs, $\Phi_d(G)\le b$ for every $G\in\Cc$.

Use the fractional algorithm of \cref{thm:minor-frac}(b) with accuracy $\eps'=\eps/(4(2d+1))$.
Its output~$x$ is rational and has weight at most $(1+\eps')\gf(G)$.
The threshold $1/(2d+1)$ is rational, so \cref{lem:rounding-excess} applies in one additional round and returns a dominating set $D$ with
\begin{align*}
|D|
\le(2d+1)(1+\eps')\gf(G)+\frac{2b}{d}
\le(2d+1)(1+\eps')\gamma(G)+\frac{2b}{d}.
\end{align*}

The algorithm uses no graph-wide input parameters.
Its execution on a connected component~$K$ is therefore identical to its execution on $K$ alone with the same identifiers.
As $K\in\Cc$, the same guarantee holds separately on every component.

Apply the additive-error reduction of~\cite[full version, Proposition~D.2]{BDGPSW26} componentwise, with
$\alpha=(2d+1)(1+\eps')$, $\beta=2b/d$, and error parameter $\eps/4$.
If $b=0$, no reduction is needed.
In either case, the resulting dominating set $D'$ satisfies
\begin{align*}
|D'|
&\le\bigl((2d+1)(1+\eps')+\eps/4\bigr)\gamma(G)
=(2d+1+\eps/2)\gamma(G)
\le(2d^*(\Cc)+1+\eps)\gamma(G),
\end{align*}
where we sum the componentwise guarantees and use the additivity of $\gamma$.
All parameters depend only on $\Cc$ and $\eps$, so the number of rounds is $O_{\Cc,\eps}(1)$.
\end{proof}

\section{Sharp bounds and further consequences}\label{sec:beyond}

We now apply the general theorems to several familiar graph classes.
The uniform density bound in \cref{thm:minor-frac}(c) gives guarantees against the fractional optimum, including the planar bound~$7+\eps$.
For bounded Euler genus, the asymptotic-density bound removes the dependence of the approximation ratio on the genus, with the guarantee measured against the integral optimum.

\subsection{Planar graphs}\label{sec:planar}

Planar graphs have uniform density $3$, so \cref{thm:minor-frac}(c) gives the factor $7+\eps$ directly.
Using the initial approximation from \cref{thm:const-approx} also gives an explicit bound on the number of rounds.

\begin{corollary}\label{cor:planar}
For every $\eps\in(0,1]$, there is a deterministic LOCAL algorithm that returns, on every planar graph $G$, a dominating set of size at most $(7+\eps)\gf(G)$ in
\[
T_0+O\bigl((c_0/\eps)^\kappa\bigr)
\quad\text{rounds},\qquad
\kappa=\frac{\ln3}{\ln(24/23)}<25.9,
\]
where $c_0=52$ and $T_0$ are as in \cref{thm:const-approx}.
The algorithm requires no global parameter other than $\eps$, and no planar embedding.
\end{corollary}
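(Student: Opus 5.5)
The plan is to instantiate \cref{thm:minor-frac}(c) with the class $\Cc$ of planar graphs, but to replace the generic initial algorithm of \cref{fact:const-minor} by the explicit one from \cref{thm:const-approx}, so $c=c_0=52$ and $T_\Cc=T_0$. Since planar graphs are $3$-sparse (\cref{fact:sparse}) and triangulations show $a(\Cc)=3$, we get $q=1-1/24=23/24$ and $\kappa_a=\ln3/\ln(24/23)=\kappa$. A direct numerical check gives $\ln(24/23)\approx 0.04256$, so $\kappa\approx 25.81<25.9$. Because $a=3$ is an integer, the rounding threshold $1/7$ is rational and known. We can therefore skip the perturbation to a rational $a'$ in the proof of \cref{thm:minor-frac}(c) and apply \cref{cor:det-rounding} directly with $a=3$.

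Concretely, I run the fractional variant of \cref{alg:rand} with $D_0$ from \cref{thm:const-approx}. Then I derandomize it by \cref{thm:derand}(a), and round it with \cref{lem:rounding} at threshold $1/7$. By \cref{thm:rand}, the expected weight is at most $(1+2\cdot3\cdot7\cdot c_0q^L)\gf(G)=(1+42c_0q^L)\gf(G)$. I want the final size $7(1+\eta)\gf(G)\le(7+\eps)\gf(G)$, so I take $\eta=\eps/7$ and choose
\[
L=\left\lceil\frac{\ln(294c_0/\eps)}{\ln(24/23)}\right\rceil ,
\]
which gives $42c_0q^L\le\eps/7$. Then
\[
3^L\le 3\,(294c_0/\eps)^{\kappa}=3\cdot294^{\kappa}(c_0/\eps)^{\kappa}.
\]
Since $294^\kappa$ is an absolute constant, the round bound $T_0+9\cdot3^L+2$ from \cref{thm:rand,thm:derand} is $T_0+O((c_0/\eps)^\kappa)$.

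For the remaining claims, the algorithm needs only $\eps$, since $L$ and $\theta$ are computed from $\eps$ and the fixed constants. It does not need $n$ or $\Delta$, because none of the ingredients does. It also needs no embedding: \cref{thm:const-approx} requires none, and the clustering and cluster LPs only use the graph itself. Planarity enters the analysis only through minor-closure (quotients stay planar) and through $3$-sparsity.

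The only step requiring care is the bookkeeping showing that the constant-factor blowup hidden in the $O(\cdot)$ depends on neither $\eps$ nor the graph. The point is that the factor $2a(2a+1)=42$ and the rescaling $\eps/7$ only multiply $c_0/\eps$ by an absolute constant, and this constant survives raising to the power $\kappa$ as an absolute constant.
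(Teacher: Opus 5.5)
Your proposal is correct and follows the paper's route. The paper obtains this corollary by specializing \cref{thm:minor-frac} to planar graphs with $a=3$, $q=23/24$, and the explicit initial algorithm of \cref{thm:const-approx} (so $c=c_0$, $T_\Cc=T_0$); your bookkeeping with $L=\lceil\ln(294c_0/\eps)/\ln(24/23)\rceil$ and $3^L\le3\cdot294^{\kappa}(c_0/\eps)^{\kappa}$ gives the stated round bound, and skipping the rational perturbation $a'$ because $a=3$ is already rational is exactly what \cref{rem:parameters} allows, changing only the absolute constant inside the $O(\cdot)$.
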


\subsection{Subclasses of planar graphs}\label{sec:subclasses}
For a planar subclass, we may run the fractional algorithm of \cref{thm:minor-frac}(b) on the whole class of planar graphs and use a better sparsity bound only in the final rounding step.
This distinction matters for triangle-free graphs and graphs of prescribed girth: these classes are closed under taking subgraphs, but need not be minor-closed.

\begin{corollary}\label{cor:subclasses}
Let $\eps\in(0,1]$, and let $\Cc$ be a class of $a$-sparse planar graphs. Assume that comparisons of rational numbers with $1/(2a+1)$ can be performed exactly.
A deterministic LOCAL algorithm computes, on every $G\in\Cc$, a dominating set of size at most $(2a+1+\eps)\gf(G)$ in $T_0+O_{\eps,a}(1)$ rounds.
In particular, the approximation ratios are
\begin{itemize}[nosep]
\item $5+\eps$ for triangle-free planar graphs, including bipartite planar graphs;
\item $5+\eps$ for outerplanar graphs;
\item $(3g-2)/(g-2)+\eps$ for planar graphs of girth at least $g\ge4$, giving $13/3+\eps$ for girth at least $5$ and $4+\eps$ for girth at least $6$.
\end{itemize}
\end{corollary}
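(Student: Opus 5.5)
The plan is to separate the two ingredients, exactly as the preceding paragraph suggests: compute a good fractional solution using the whole planar class, then round using the better sparsity bound of $\Cc$. Since $\Cc$ consists of planar graphs, every $G\in\Cc$ lies in the proper minor-closed class of planar graphs. For that class $a=3$, and we may use $c=c_0=52$ and $T_\Cc=T_0$ from \cref{thm:const-approx}. Applying \cref{thm:minor-frac}(b) with accuracy $\eta=\eps/(2a+1)$ gives a deterministic algorithm. It outputs a rational fractional dominating set $x$ with $|x|\le(1+\eta)\gf(G)$ in $T_0+O((84c_0/\eta)^{\kappa})$ rounds, which is $T_0+O_{\eps,a}(1)$. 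The fractional step never uses the sparsity of $\Cc$. This matters because $\Cc$ need not be minor-closed, whereas the clustering analysis needs closure of quotients.

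Next I apply \cref{lem:rounding} to $x$ on $G$, with threshold $\theta=1/(2a+1)$. This is valid because $G$ is $a$-sparse by hypothesis, and the rounding lemma needs only sparsity of $G$ itself, not minor-closure. By assumption each vertex can compare its rational value $x_v$ with $\theta$ exactly, so rounding takes one extra round. It yields a dominating set of size at most $(2a+1)(1+\eta)\gf(G)=(2a+1+\eps)\gf(G)$. One could alternatively state this via \cref{cor:det-rounding} with $\rho=1+\eta$, $\OPT=\gf$.

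For the listed cases I invoke \cref{fact:sparse}. Triangle-free planar graphs, which include bipartite planar graphs, and outerplanar graphs are $2$-sparse, so $a=2$ is rational and the ratio is $5+\eps$. Planar graphs of girth at least $g$ are $\frac{g}{g-2}$-sparse, so $2a+1=\frac{2g+g-2}{g-2}=\frac{3g-2}{g-2}$. This gives $13/3$ for $g=5$ and $16/4=4$ for $g=6$. Each such $a$ is a rational number fixed by the class, so the exact-comparison hypothesis holds automatically.

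There is no genuine obstacle here; the only points needing care are bookkeeping. First, the fractional algorithm must be the one for all planar graphs, so that its correctness does not depend on $\Cc$ being minor-closed. Second, the accuracy must be rescaled so that the product $(2a+1)(1+\eta)$ stays within $2a+1+\eps$.
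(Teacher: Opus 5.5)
Your proposal is correct and follows the paper's proof essentially verbatim: run the fractional algorithm of \cref{thm:minor-frac}(b) on the class of all planar graphs with accuracy $\eps/(2a+1)$, round with \cref{lem:rounding} using the sparsity constant $a$ of the subclass, and read off the listed cases from \cref{fact:sparse}. One small slip: the constant in your round bound should be $2\cdot3\cdot7\,c_0=42c_0$ rather than $84c_0$, which does not matter inside the $O(\cdot)$.
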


\begin{proof}
Use \cref{thm:minor-frac}(b) on the class of planar graphs with accuracy $\eps/(2a+1)$ and apply \cref{lem:rounding} with the sparsity constant $a$ of the subclass. The output has size at most
\[
(2a+1)\left(1+\frac{\eps}{2a+1}\right)\gf(G)
=(2a+1+\eps)\gf(G).
\]
The listed bounds follow from \cref{fact:sparse}; for girth at least $g$, the sparsity constant is $a=g/(g-2)$.
\end{proof}

For triangle-free planar graphs, this improves Wawrzyniak's ratio $6$~\cite{Waw25}. For these graphs, the square-grid lower bound~\cite[Section~4]{BCGW21}, together with our $5+\eps$ guarantee, determines the infimum of the deterministic constant-round approximation ratios as~$5$.

For approximation measured against $\gamma(G)$, our $5+\eps$ ratio does not improve the exact factor $5$ of Bonamy et al.~\cite{BCGW21}.
Bonamy et al.~\cite{BCGW21} also prove the matching lower bound $5-\eps$ for outerplanar graphs.

The girth-$5$ and girth-$6$ guarantees also improve the ratio $6$, which applies to both classes because they are triangle-free. For earlier work on planar graphs without $4$-cycles, see Alipour and Jafari~\cite{AJ20}. We do not obtain matching lower bounds for girth at least $5$ or $6$.

\subsection{Bounded Euler genus}\label{sec:genus}

The \emph{Euler genus} of an orientable surface with $h$ handles is $2h$, for a non-orientable surface with $k$ cross-caps it is $k$.
A graph has Euler genus at most $g$ if it embeds in a surface of Euler genus at most~$g$.
For each fixed $g$, these graphs form a minor-closed class~\cite{MT01}. We use the standard Euler edge bound for graphs on
surfaces~\cite[p.~928]{MW18} and its immediate consequence
for $\Phi_3(G)$.

\begin{lemma}\label{lem:genus-density}
Every graph $G$ of Euler genus at most $g\ge0$ with at least three vertices satisfies
\[
|E(G)|\le 3|V(G)|-6+3g.
\]
For every graph of Euler genus at most $g$, including graphs with fewer than three vertices,
\[
\Phi_3(G)\le3\max\{g-2,0\}.
\]
\end{lemma}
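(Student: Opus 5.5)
The first inequality is the standard Euler edge bound for surfaces, so I would reduce to the connected case and apply Euler's formula. Take $G$ with $n\ge3$ vertices and Euler genus at most $g$. Adding edges between components can only increase $|E(G)|$; connecting components inside one surface preserves embeddability, since joining two components by an edge drawn through a common face is possible after placing them in the same face. So it suffices to treat connected $G$. Further, we may assume that $G$ is edge-maximal among simple graphs embeddable in a fixed surface $\Sigma$ of Euler genus $g'\le g$.

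For a cellular embedding, Euler's formula gives $n-m+f=2-g'$. I would first argue that a maximal embedded simple graph with $n\ge3$ has a cellular embedding in which every face has boundary walk length at least $3$. Then $2m\ge3f$, and hence $m\le3n-6+3g'\le 3n-6+3g$. If the embedding is not cellular, the Euler characteristic inequality $n-m+f\ge2-g'$ still holds, and the same bound follows. This is the standard result cited from~\cite{MW18}, so in the paper I would essentially quote it rather than reprove it. The main care point is the small face-length issue: faces of length $1$ or $2$ are excluded by simplicity once $n\ge3$, since a face bounded by a single edge walked twice occurs only for $K_2$.

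For the $\Phi_3$ bound, I would use the definition $\Phi_3(G)=\max_S(|E(G[S])|-3|S|)$. Every induced subgraph $G[S]$ also has Euler genus at most $g$, by minor-closure, or simply by restricting the embedding. If $|S|\ge3$, the first part gives
\[
|E(G[S])|-3|S|\le 3g-6=3(g-2).
\]
If $|S|\le2$, then $|E(G[S])|\le1\le 3|S|$ when $|S|\ge1$, and the difference is $0$ for $S=\emptyset$; in these cases the difference is at most $0$. Taking the maximum over all $S$, and recalling $\Phi_3\ge0$ from the empty set, yields $\Phi_3(G)\le\max\{3(g-2),0\}=3\max\{g-2,0\}$.

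The only step requiring genuine attention is the first inequality for disconnected or non-cellular embeddings. I would handle it by citing the standard bound as the paper does, with the connectivity reduction above as a backup. The $\Phi_3$ consequence is routine.
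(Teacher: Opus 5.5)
Your proposal is correct and matches the paper's approach: the paper gives no separate proof, but cites the standard Euler edge bound \cite{MW18} and obtains the $\Phi_3$ bound by applying that edge bound to the induced subgraphs $G[S]$ with $|S|\ge3$ (the cases $|S|\le2$ contribute at most $0$), exactly as you do. One small aside: an edge-maximal graph need not embed cellularly in $\Sigma$ (for example, $K_3$ on the torus), but this does not matter, because your fallback inequality $n-m+f\ge2-g'$ together with $2m\ge3f$ covers the non-cellular case.
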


\begin{theorem}\label{thm:genus}
For every $g\ge0$ and $\eps\in(0,1]$, there is a deterministic LOCAL algorithm that runs in $O_{g,\eps}(1)$ rounds and, on every graph $G$ of Euler genus at most $g$, computes a dominating set of size at most $(7+\eps)\gamma(G)$.
If $g\le2$, the stronger bound $(7+\eps)\gf(G)$ holds.
\end{theorem}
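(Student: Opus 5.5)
The plan splits into the general bound, which comes from \cref{cor:asym}, and the case $g\le2$, which comes from \cref{thm:minor-frac}(c).

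For the first claim, we apply \cref{cor:asym} to the class $\Cc_g$ of graphs of Euler genus at most $g$. This class is proper minor-closed; it is proper because, for instance, $K_n$ with $n$ large relative to $g$ is not in it. It therefore suffices to show $d^*(\Cc_g)\le 3$. By \cref{lem:genus-density}, every $G\in\Cc_g$ satisfies
\[
|E(G)|\le 3|V(G)|+3\max\{g-2,0\}.
\]
For graphs with fewer than three vertices this holds trivially, since they have at most one edge. Taking $d=3$ and $b=3\max\{g-2,0\}$ in the definition of asymptotic density gives $d^*(\Cc_g)\le3$. Then \cref{cor:asym} yields a deterministic $O_{g,\eps}(1)$-round algorithm with output size at most $(2\cdot3+1+\eps)\gamma(G)=(7+\eps)\gamma(G)$. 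I will note that the proof of \cref{cor:asym} in fact works for any rational $d>d^*$ with an admissible $b$. Equality $d^*=3$ is not needed; the upper bound suffices.

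For $g\le2$, the point is that the additive term vanishes. The second bound of \cref{lem:genus-density} gives $\Phi_3(G)\le 3\max\{g-2,0\}=0$, so every graph in $\Cc_g$ is $3$-sparse. Equivalently, $|E(G')|\le 3|V(G')|$ for every subgraph $G'$, because $\Cc_g$ is closed under subgraphs. Hence $a(\Cc_g)\le3$. Since $\Cc_g$ contains all planar graphs, in fact $a(\Cc_g)=3$. Applying \cref{thm:minor-frac}(c) to $\Cc_g$ then gives size at most $(2a+1+\eps)\gf(G)=(7+\eps)\gf(G)$ in $T_{\Cc_g}+O_\eps(1)$ rounds.

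Alternatively, one can use \cref{thm:minor-frac}(b) to obtain a rational $(1+\eps/7)$-approximate fractional solution and round at the rational threshold $1/7$ with \cref{lem:rounding}. This avoids any question of exact comparisons, since $a=3$ is rational.

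The only real obstacle is checking the edge bound uniformly. Small and disconnected graphs are covered by the $\Phi_3$ form of \cref{lem:genus-density}, and the rest is a direct combination of earlier results. The genus-$\le2$ case also relies on the constant $-6+3g\le0$, which makes the supremum density exactly $3$ rather than merely asymptotically $3$.
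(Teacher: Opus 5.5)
Your proposal is correct and follows essentially the same route as the paper: \cref{cor:asym} with the density bound from \cref{lem:genus-density} gives the general $(7+\eps)\gamma(G)$ guarantee, and for $g\le2$ the vanishing excess gives $3$-sparsity, hence $a(\Cc_g)=3$, and \cref{thm:minor-frac}(c) gives the $(7+\eps)\gf(G)$ bound. Your remark that only $d^*(\Cc_g)\le3$ is needed is accurate; the paper also states the equality $d^*(\Cc_g)=3$ via planar triangulations, but that direction plays no role in the upper bound.
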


\begin{proof}
Let $\Cc_g$ be the proper minor-closed class of graphs of Euler genus at most~$g$.
By \cref{lem:genus-density} and planar triangulations of unbounded order, $d^*(\Cc_g)=3$.
Applying \cref{cor:asym} gives the bound $(7+\eps)\gamma(G)$.
If $g\le2$, then \cref{lem:genus-density} implies that every graph in $\Cc_g$ is $3$-sparse.
Since $\Cc_g$ contains all planar graphs, $a(\Cc_g)=3$, so \cref{thm:minor-frac}(c) gives the stronger bound $(7+\eps)\gf(G)$.
\end{proof}

\begin{corollary}\label{cor:genus-optimal}
For every fixed $g\ge0$, the infimum of the approximation ratios achievable by deterministic constant-round LOCAL algorithms for minimum dominating set on graphs of Euler genus at most $g$ is~$7$.
\end{corollary}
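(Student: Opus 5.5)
The infimum claim has two halves: an upper bound showing every ratio above $7$ is achievable, and a lower bound showing no ratio below $7$ is achievable. I will establish them separately.

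\emph{Upper bound.} Fix $g\ge0$ and any $\rho>7$. Put $\eps=\min\{\rho-7,1\}$. Then \cref{thm:genus} gives a deterministic $O_{g,\eps}(1)$-round LOCAL algorithm whose output on every graph of Euler genus at most $g$ has size at most
\[
(7+\eps)\gamma(G)\le\rho\,\gamma(G).
\]
Hence every ratio strictly greater than $7$ is achievable, so the infimum is at most $7$. Here the ratio is measured against the integral optimum $\gamma$, as in \cref{cor:sharp-intro}. For $g\le2$ the guarantee is even against $\gf$, but this is not needed.

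\emph{Lower bound.} Every planar graph has Euler genus $0\le g$, so the class of graphs of Euler genus at most $g$ contains all planar graphs. Any deterministic constant-round algorithm with ratio $\rho$ on this class therefore also has ratio at most $\rho$ on planar graphs. By the lower bound of Hilke, Lenzen and Suomela~\cite{HLS14}, no deterministic constant-round LOCAL algorithm achieves a ratio below $7$ on planar graphs. Thus no ratio below $7$ is achievable for any fixed $g$, and the infimum is at least $7$.

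The only point needing care is to check that the two halves use the same benchmark and model. The construction in~\cite{HLS14} is measured against $\gamma$ and concerns deterministic algorithms with a constant number of rounds, which matches the $\gamma$-guarantee of \cref{thm:genus}. Since a lower bound proved on a subclass transfers directly to any superclass, this check is the main obstacle, and it reduces to citing the planar statement (cf.\ \cref{cor:planar-intro}). Combining the two halves gives an infimum of exactly $7$.
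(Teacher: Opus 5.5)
Your proof is correct and follows the paper's argument. The upper bound comes from the $(7+\eps)\gamma(G)$ guarantee of \cref{thm:genus}, and the lower bound transfers from planar graphs via Hilke, Lenzen and Suomela~\cite{HLS14}; the paper's own \cref{thm:path-lb} with $r=3$ would serve equally well there, since $P_n^3$ is planar.
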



\paragraph{Comparison.}
Amiri, Schmid and Siebertz gave a constant-round constant-factor algorithm for bounded-genus graphs~\cite{ASS16}. Its journal version has ratio $24g+\Oof(1)$~\cite{ASS19}.
For orientable surfaces, Czygrinow et al.~\cite{CHWW19} gave a constant-round algorithm with bounded messages whose guarantee includes an additive genus term. The LOCAL modification described by Bonamy et al.~\cite{BGPW25} gives ratio $91+\eps$.
Bonamy et al.~\cite{BGPW25,BDGPSW26} extended genus-independent guarantees to arbitrary fixed Euler genus.
Their transfer theorem converts a planar $\alpha$-approximation into a $(3\alpha+1)$-approximation~\cite[Theorem~4.2 and Corollary~4.3]{BDGPSW26}.
Together with the planar $(11+\eps)$-approximation of~\cite{HKOSV25}, this gave $34+\eps$ and their small-genus bound was $4\sqrt{3g/2}+14+\eps$.
Wawrzyniak's planar $(8+\eps)$-approximation subsequently gave $25+\eps$~\cite[Corollary~7.1]{Waw26}.
Applying the transfer theorem to \cref{cor:planar-intro} would give $22+\eps$. The required uniformity follows from~\cite[full version, Proposition~C.2]{BDGPSW26}.
The direct argument above gives $7+\eps$ for every fixed Euler genus, and \cref{cor:genus-optimal} shows that its limiting constant is best possible.
By comparison, \cref{thm:minor-frac}(c) alone gives $2a(\Cc_g)+1+\eps$, whose leading term grows with~$g$.

\subsection{Excluded minors and lower bounds from path powers}\label{sec:minors}

We next derive approximation guarantees for excluded-minor
classes and graphs of bounded treewidth and pathwidth.
To show that several of these guarantees are sharp, we first
prove a lower bound on powers of paths.
For each $r\ge1$, these graphs have pathwidth at most~$r$
and exclude $K_{r+2}$ as a minor, yet no deterministic
constant-round algorithm achieves an approximation ratio
below $2r+1$ on this family.
We then combine this lower bound with the relevant
edge-density estimates.
\medskip

For integers $r,n\ge1$, the \emph{$r$-th power} $P_n^r$ of the path has vertex set $\{1,\ldots,n\}$, with distinct vertices $i,j$ adjacent exactly when $|i-j|\le r$.

\begin{lemma}\label{lem:path-power}
Let $r\ge1$ and $n\ge r+1$ be integers.
\begin{enumerate}[label=(\alph*)]
\item $\gamma(P_n^r)\le\lceil n/(2r+1)\rceil$.
\item Both the treewidth and the pathwidth of $P_n^r$ are $r$. Consequently, $P_n^r$ has no $K_{r+2}$ minor.
\item For every $n\ge1$, the graph $P_n^2$ is outerplanar and $P_n^3$ is planar.
\end{enumerate}
\end{lemma}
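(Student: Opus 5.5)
For (a), I will give an explicit dominating set. Each vertex $i$ dominates the interval $\{i-r,\dots,i+r\}\cap\{1,\dots,n\}$, which has $2r+1$ elements in the interior. Set $m=\lceil n/(2r+1)\rceil$ and take $D=\{\min(r+1+(2r+1)j,\,n): 0\le j<m\}$. The $j$-th chosen vertex (before truncation) covers the block $\{(2r+1)j+1,\dots,(2r+1)(j+1)\}$, so these $m$ blocks cover $\{1,\dots,n\}$. If the last centre exceeds $n$ and we replace it by $n$, then it still covers the part of its block lying in $[1,n]$, because that part lies within distance $r$ of $n$. Hence $D$ dominates $P_n^r$ and $\gamma(P_n^r)\le m$.

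For (b), I will prove the upper and lower bounds on the width separately. For the upper bound, take the path decomposition with bags $B_i=\{i,\dots,i+r\}$ for $1\le i\le n-r$, which is nonempty because $n\ge r+1$. Every edge $ij$ with $|i-j|\le r$ lies in the bag starting at $\min(i,j)$, clamped to $n-r$ if necessary. Each vertex appears in a contiguous range of bags. The bags have size $r+1$, so the pathwidth is at most $r$, and therefore so is the treewidth. For the lower bound, the vertices $\{1,\dots,r+1\}$ form a clique $K_{r+1}$, and treewidth is at least clique size minus one, so the treewidth is at least $r$. Since treewidth is at most pathwidth, both equal $r$. The graph $K_{r+2}$ has treewidth $r+1$, and treewidth is minor-monotone, so $P_n^r$ has no $K_{r+2}$ minor.

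For (c), I will give embeddings. For $P_n^2$, place $1,\dots,n$ in this order on a convex curve, alternately on the upper and lower sides, that is, as a zigzag. The edges $\{i,i+1\}$ and $\{i,i+2\}$ then form a triangulated strip, namely the fan-like polygon triangulation with outer cycle $1,3,5,\dots,6,4,2$ and chords $i(i+1)$. All vertices lie on the outer face, so $P_n^2$ is outerplanar. For $P_n^3$, I will add the edges $\{i,i+3\}$ to this outerplanar drawing. These edges join odd and even indices, which lie on opposite sides of the strip. The plan is to route every $\{i,i+3\}$ edge through the outer face: odd-to-even edges around one end, or alternatively by noting that $P_n^3$ is a maximal planar ``stacked strip'' in which each vertex $i+3$ is added inside the face $i,i+1,i+2$.

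The cleaner argument for (c) is inductive. $P_{k+1}^3$ arises from $P_k^3$ by adding vertex $k+1$ adjacent to $k-2,k-1,k$. These three vertices form a triangle, and I will maintain the invariant that this triangle $\{k-2,k-1,k\}$ bounds a face of the embedding. Adding $k+1$ inside that face keeps the graph planar and creates a new face $\{k-1,k,k+1\}$, so the invariant is preserved. The same induction with faces replaced by outer-face edges $\{k-1,k\}$ gives outerplanarity of $P_n^2$. Small $n$ are trivial base cases. The main obstacle is stating this face invariant precisely; everything else is routine.
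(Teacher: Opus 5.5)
Your proof is correct and follows the paper's argument in all three parts: the same truncated set $\min\{r+1+(2r+1)j,n\}$ for (a); the same bags $\{i,\dots,i+r\}$, the clique on $\{1,\dots,r+1\}$ and minor-monotonicity of treewidth for (b); and the same inductive invariants for (c), namely the edge $\{k-1,k\}$ on the outer face for $P_n^2$ and the triangle $\{k-2,k-1,k\}$ bounding a face for $P_n^3$. Drop the tentative idea of routing all edges $\{i,i+3\}$ through the outer face of the zigzag drawing, because it cannot work: already $\{1,4\}$ and $\{2,5\}$ have interleaving endpoints on the outer cycle $1,3,5,4,2$ of $P_5^2$. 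Rely only on the induction.
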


\begin{proof}
(a) Put $q=\lceil n/(2r+1)\rceil$ and select the vertices $\min\{r+1+j(2r+1),n\}$ for $0\le j<q$.
Their closed neighborhoods cover the path in consecutive intervals, so they form a dominating set.

(b) The bags $\{i,\ldots,i+r\}$, for $1\le i\le n-r$, form a path decomposition of width~$r$.
The first $r+1$ vertices induce a clique, so the treewidth is at least~$r$.
Thus both widths equal~$r$.
Since treewidth is minor-monotone and $\operatorname{tw}(K_{r+2})=r+1$, there is no $K_{r+2}$ minor.

(c) The assertions are immediate for $n\le3$.
To obtain an outerplanar drawing of $P_k^2$ from one of $P_{k-1}^2$, place $k$ in the outer face next to the boundary edge $\{k-2,k-1\}$ and join it to both endpoints.
The edge $\{k-1,k\}$ is on the new outer boundary, so the construction continues inductively from the triangle $P_3^2$.
For $P_n^3$, start with a plane drawing of $K_4$ in which $2,3,4$ bound a face.
At step $k\ge5$, place $k$ inside the face bounded by $k-3,k-2,k-1$ and join it to these three vertices.
The triangle $k-2,k-1,k$ then bounds a face, allowing the next step.
\end{proof}

The following theorem rules out every deterministic constant-round
approximation ratio below $2r+1$ on $r$-th powers of paths,
even when all vertices know the order of the graph.

\begin{theorem}\label{thm:path-lb}
Let $r\ge1$ and $T\ge0$ be integers, let $\eps>0$, and let $\mathcal A$ be a deterministic $T$-round LOCAL algorithm.
There are an integer $n$ and an assignment of the identifiers $1,\ldots,n$ to $P_n^r$ such that the output of $\mathcal A$ either fails to dominate $P_n^r$ or has more than $(2r+1-\eps)\gamma(P_n^r)$ vertices.
The statement remains true when every vertex is given~$n$.
\end{theorem}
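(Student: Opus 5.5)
The plan is an indistinguishability argument in the style of Naor and Stockmeyer. Hypergraph Ramsey will force the algorithm to behave identically on long increasing stretches of identifiers. Repeated applications will then cover almost the whole path by such stretches. Fix $r$, $T$, $\eps$ and $\mathcal A$, and put $k=2rT+1$.

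\textbf{Canonical instance and local rule.} Port numbers at each vertex $i$ of $P_n^r$ are fixed canonically, listing neighbors in increasing position. Hence the $T$-view of an interior vertex $i$ (with $rT<i\le n-rT$) is the fixed graph $P^r_{k}$ on positions $i-rT,\ldots,i+rT$, together with the identifiers placed there and the value $n$. Fix $n=N$ in advance, with $N$ to be chosen. If these $k$ identifiers increase along the path, the output of $i$ is a function $f_N(Y)\in\{0,1\}$ of the $k$-subset $Y\subseteq\{1,\ldots,N\}$ of identifiers in its window. Since $N$ is fixed before anything else, knowledge of $n$ is absorbed into $f_N$; this is why the statement survives when $n$ is given.

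\textbf{Greedy Ramsey blocks.} Choose a block length $m$ with $m>2rT+2r+1$, large enough that $(m-2rT)/m>1-\eps/(4r+2)$. Let $R=R_k(m)$ be the hypergraph Ramsey number: every $2$-colouring of the $k$-subsets of an $R$-set has a monochromatic $m$-subset. $R$ depends only on $r,T,\eps$, not on $N$.
\begin{enumerate}
\item Starting from $\{1,\ldots,N\}$, repeatedly extract an $f_N$-monochromatic $m$-set $Y_j$ from the remaining identifiers while at least $R$ remain.
\item Place the blocks $Y_1,Y_2,\ldots$ consecutively along the path, each in increasing order.
\item Put the fewer than $R$ leftover identifiers, in any order, at the end.
\end{enumerate}
This is an assignment of $\{1,\ldots,N\}$ to $P_N^r$.

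\textbf{Each block is uniform.} In block $Y_j$, every vertex whose whole $T$-window lies in the block sees an increasing $k$-subset of $Y_j$. By monochromaticity, all these vertices output the same bit $c_j$. If some $c_j=0$, take the middle vertex of the block. Its closed neighbourhood lies within distance $r$, so inside the window-interior stretch because $m>2rT+2r+1$, and it is undominated; $\mathcal A$ fails. Otherwise every block contributes at least $m-2rT$ selected vertices, so the output has at least
\[
\Bigl\lfloor\frac{N-R}{m}\Bigr\rfloor(m-2rT)\ge\Bigl(1-\frac{\eps}{4r+2}\Bigr)(N-R-m)
\]
vertices. By \cref{lem:path-power}(a), $\gamma(P_N^r)\le\lceil N/(2r+1)\rceil$. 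Letting $N\to\infty$ with $m,R$ fixed, the ratio exceeds $2r+1-\eps$ for large $N$.

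\textbf{Main obstacle.} The delicate point is the bookkeeping that makes the Ramsey argument compatible with identifiers being exactly $\{1,\ldots,n\}$ and with $n$ being known. A single Ramsey set would cover only a negligible fraction of the path. This is why I use the greedy iteration, in which $R$ is independent of $N$. I also need to check that the view, including the port interface of boundary vertices of the ball, really depends only on the relative order of the identifiers in the window. The canonical port numbering, together with the fact that interior views are isomorphic as rooted port-labelled graphs, should handle this.
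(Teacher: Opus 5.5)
Your proposal is correct and follows essentially the same route as the paper. Both arguments use a Ramsey colouring of increasingly ordered windows by the output bit, and greedy extraction of monochromatic blocks with a Ramsey number independent of $n$, so that knowledge of $n$ is harmless. Both then use the undominated middle vertex of a $0$-block, and the count against $\gamma(P_n^r)\le\lceil n/(2r+1)\rceil$.

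The one detail to tighten is the one you flagged. The view records the degrees of the boundary vertices of the ball, so ``interior'' must mean $r(T+1)<i\le n-r(T+1)$ rather than $rT<i\le n-rT$. The paper builds this in by padding the window to radius $s=r(T+1)$. The correction only affects the first and last blocks, and you can discard them without changing the limit.
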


\begin{proof}
We use the Ramsey argument of~\cite{CHW08,HLS14} on a line.
Set $s=r(T+1)$ and $m=2s+1$.
Where the model has port numbers, fix them by ordering neighbors from left to right along the path.
For each $n\ge m$ and each $m$-element subset $Z\subseteq\{1,\ldots,n\}$, assign the elements of $Z$ in increasing order to positions $1,\ldots,m$ of $P_n^r$, and assign all remaining identifiers to the other positions arbitrarily.
Let $\chi_n(Z)$ be the output of $\mathcal A$ at position $s+1$, giving every vertex the correct input~$n$ if this input is allowed.
This defines a two-coloring of the $m$-element subsets.
The color is independent of the completion: every vertex within distance $T$ of the centre, together with all of its neighbors, lies among positions $1,\ldots,m$.
Moreover, every such vertex has its full degree $2r$ and the same port pattern as an interior vertex of a longer path power.
Thus the color describes the output for this ordered local pattern wherever it occurs away from the endpoints.

Fix an integer $M\ge2s+2r+1$.
Let $N\ge M$ be a Ramsey number such that every two-coloring of the $m$-element subsets of an $N$-element set has a monochromatic $M$-element subset.
For each $n\ge N$, apply this statement repeatedly to the coloring $\chi_n$, removing a monochromatic $M$-element set at each step until fewer than $N$ identifiers remain.
We obtain disjoint sets $Z_1,\ldots,Z_k$ with $k\ge(n-N)/M$.
Assign the identifiers of $Z_j$ in increasing order to the $j$-th consecutive block of $M$ vertices of $P_n^r$.
Place the remaining identifiers at the end, in any order.

Consider a vertex at position $i$ within block $j$, where $s<i\le M-s$.
Its padded local pattern consists of $m$ identifiers from $Z_j$ in increasing order, so its output is the common color $\chi_j$ of these subsets.
If $\chi_j=0$, the vertex at position $s+r+1$ in the block and all of its neighbors output zero.
The output would then fail to dominate the graph.
Consequently, every valid output contains at least $M-2s$ vertices from each full block.
Using \cref{lem:path-power}(a), its approximation ratio is at least
\[
\frac{k(M-2s)}{n/(2r+1)+1}
\ge
\frac{(n-N)(M-2s)/M}{n/(2r+1)+1}.
\]
For fixed $M$ and $N$, the right-hand side tends to $(2r+1)(1-2s/M)$ as $n\to\infty$.
Choose $M$ large enough that this limit exceeds $2r+1-\eps/2$, and then choose $n$ large enough that the displayed ratio exceeds $2r+1-\eps$.
\end{proof}

For $r=3$, the path powers are planar and have pathwidth three.
Thus \cref{thm:path-lb} recovers the planar lower bound $7-\eps$ of Hilke, Lenzen and Suomela~\cite{HLS14} on this more restricted family.
For $r=2$, it recovers the outerplanar lower bound $5-\eps$ of~\cite{BCGW21}.

\begin{corollary}\label{cor:minors}
For every $\eps>0$, there are deterministic constant-round LOCAL algorithms with the following approximation guarantees, all measured against $\gf(G)$:
\begin{enumerate}[label=(\alph*)]
\item $2t-3+\eps$ on $K_t$-minor-free graphs, for integers $3\le t\le9$;
\item $2a_t+1+\eps$ on $K_t$-minor-free graphs, for every integer $t\ge3$, where $a_t$ is the supremum of $|E(G)|/|V(G)|$ over nonempty $K_t$-minor-free graphs and
\[
a_t=(\alpha+o(1))t\sqrt{\ln t},\qquad \alpha=0.319\ldots
\]
as $t\to\infty$~\cite{Tho01};
\item $2r+1+\eps$ on graphs of treewidth at most $r$, for every integer $r\ge0$;
\item $t+2+\eps$ on $K_{2,t}$-minor-free graphs, for every integer $t\ge2$;
\item $7+\eps$ on $K_5$-minor-free graphs and on $K_{3,3}$-minor-free graphs.
\end{enumerate}
For ordinary minimum dominating set approximation, the infimum of the achievable deterministic constant-round ratios is $2t-3$ in (a), $2r+1$ for both treewidth and pathwidth at most $r$, and $7$ for each class in (e).
\end{corollary}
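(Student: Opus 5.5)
Every upper bound in the statement comes from \cref{thm:minor-frac}(c), applied to the relevant proper minor-closed class once we know its supremum density $a(\Cc)$. Every lower bound comes from \cref{thm:path-lb} or \cref{cor:optimal}, using the fact that $\gf\le\gamma$ turns a guarantee against $\gf$ into one against $\gamma$. So the proof is mainly a list of density facts. When $a(\Cc)$ is not rational, \cref{thm:minor-frac}(c) already handles this by passing to a rational $a'$ slightly above $a(\Cc)$.

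The items are handled as follows.
\begin{itemize}
\item Item (b) is \cref{thm:minor-frac}(c) with $a=a_t$, together with Thomason's asymptotics~\cite{Tho01}.
\item For (a), I will cite the known extremal results that every $K_t$-minor-free graph on $n\ge t-1$ vertices has at most $(t-2)n-\binom{t-1}{2}$ edges for $3\le t\le9$ (Mader for $t\le7$, Jørgensen for $t=8$, Song--Thomas for $t=9$). On fewer vertices the trivial bound $\binom n2\le\frac{t-2}{2}n$ suffices. Hence $a_t\le t-2$. The matching lower bound comes from $P_n^{t-2}$, which is $K_t$-minor-free by \cref{lem:path-power}(b) and has $(t-2)n-\binom{t-1}{2}$ edges, so its density tends to $t-2$. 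This gives $2a_t+1=2t-3$.
\item For (c), I use that every graph of treewidth at most $r$ is a subgraph of an $r$-tree. Such a graph has at most $rn-\binom{r+1}2$ edges for $n\ge r$, and at most $\binom n2\le rn$ otherwise, so $a\le r$. The case $r=0$ gives edgeless graphs, which \cref{thm:minor-frac} handles directly. Pathwidth at most $r$ is a subclass, and the path powers show density tending to $r$.
\item For (d), I will use the Chudnovsky--Reed--Seymour bound $|E|\le\frac{t+1}{2}(n-1)$ for $K_{2,t}$-minor-free graphs, giving $a\le(t+1)/2$ and hence $2a+1\le t+2$. For $t=2$ I would instead use outerplanar-type bounds, or simply note that the cited bound covers it.
\item For (e), Wagner's theorem gives $|E|\le3n-6$ for $K_5$-minor-free graphs. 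For $K_{3,3}$-minor-free graphs, the decomposition into planar pieces and $K_5$ via $2$-sums gives $|E|\le3n-5$. I need the sharper statement $a\le3$, which I would check as $3n-5\le3n$. Both classes contain all planar graphs, so $a=3$.
\end{itemize}

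For the lower bounds on ordinary minimum dominating set, three families suffice.
\begin{itemize}
\item $K_t$-minor-free graphs contain all $P_n^{t-2}$, so \cref{thm:path-lb} with $r=t-2$ excludes ratios below $2t-3$.
\item Treewidth and pathwidth at most $r$ contain $P_n^r$, which excludes ratios below $2r+1$.
\item The classes in (e) contain planar graphs, so \cref{thm:path-lb} with $r=3$, or~\cite{HLS14}, excludes ratios below $7$.
\end{itemize}
Combining each with the corresponding $(\cdot+\eps)$ upper bound measured against $\gf\le\gamma$ gives the stated infima.

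The main obstacle is not the algorithmic part, which is immediate, but getting the exact extremal densities right. In particular, I must confirm that the supremum density $a_t$ equals $t-2$ exactly, not merely asymptotically, for all small graphs in the class when $t\le9$. I must also confirm the correct constant for $K_{2,t}$ and for $K_{3,3}$. If any of these sharp supremum bounds failed only by an additive constant, I would fall back on \cref{cor:asym}, which yields the same ratio against $\gamma$ via $d^*(\Cc)$. That fallback suffices for the infimum claims, though not for the $\gf$ guarantee.
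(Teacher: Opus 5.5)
Your proposal is correct and is essentially the paper's proof. You apply \cref{thm:minor-frac}(c) with the extremal density bound for each class and take the lower bounds from \cref{thm:path-lb} and \cite{HLS14}. Using the $k$-tree edge count instead of degeneracy for treewidth, and citing the Wagner $2$-sum decomposition instead of proving $|E(G)|\le3n-5$ by induction for $K_{3,3}$-minor-free graphs, are immaterial variations.

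One factual correction concerns $t=8,9$. The J{\o}rgensen and Song--Thomas bounds are $6n-20$ and $7n-27$ for $n\ge t$, not $(t-2)n-\binom{t-1}{2}$. The $(K_{2,2,2,2,2},5)$-cockades and $K_{2,2,2,3,3}$ exceed the latter by one edge. These correct bounds still give $a_t\le t-2$ immediately, so your worry and the fallback to \cref{cor:asym} are unnecessary.

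Also, the $r=0$ infimum lies outside the range of \cref{thm:path-lb}; it is trivially $1$ and should be noted separately.
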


\begin{proof}
All the classes are proper minor-closed classes, so it suffices to apply \cref{thm:minor-frac}(c) with the appropriate density bound.

(a) For $3\le t\le7$, Mader's theorem~\cite{Mader68} gives $|E(G)|\le(t-2)|V(G)|$ for $K_t$-minor-free graphs; the cases $t\le5$ are due to Dirac, and $t=6$ was obtained independently by Gy\H{o}ri.
For $t=8$, J{\o}rgensen's theorem gives $|E(G)|\le6|V(G)|-20$ when $|V(G)|\ge8$~\cite{Jor94}; for $t=9$, Song and Thomas give $|E(G)|\le7|V(G)|-27$ when $|V(G)|\ge9$~\cite{ST06}.
These inequalities include the exceptional graphs in the two extremal theorems.
If $n<t$, the elementary bound $|E(G)|\le n(n-1)/2\le(t-2)n$ suffices.
Thus $a_t\le t-2$ throughout the stated range.

(b) Apply the definition of $a_t$ and Thomason's asymptotic formula~\cite{Tho01}.

(c) Every nonempty graph of treewidth at most $r$ has a vertex of degree at most $r$, and the same is true of each of its subgraphs.
Deleting such vertices successively gives $|E(G)|\le r|V(G)|$.

(d) Chudnovsky, Reed and Seymour~\cite{CRS11} prove that every nonempty $K_{2,t}$-minor-free graph satisfies
\[
|E(G)|\le\frac{t+1}{2}(|V(G)|-1).
\]
The empty graph is harmless, and the required sparsity bound is $(t+1)/2$.

(e) The $K_5$-minor-free case is contained in (a).
For $K_{3,3}$-minor-free graphs, we prove $|E(G)|\le3n-5$ for $n\ge2$ by induction on~$n$.
The assertion is immediate for $n\le3$.
An isolated vertex can be deleted.
If $G$ is disconnected or has a cut vertex and has no isolated vertices, write it as the union of two smaller graphs $G_1,G_2$ with no common edge, with $n_1,n_2\ge2$, and with $n_1+n_2\le n+1$.
Induction gives
\[
|E(G)|\le(3n_1-5)+(3n_2-5)\le3n-7\le3n-5.
\]
Now suppose that $G$ is $2$-connected and has a separating pair $\{u,v\}$.
Split the components of $G-\{u,v\}$ into two nonempty groups, and let $G_1,G_2$ be the corresponding sides, assigning the edge $uv$, if present, to $G_1$ only.
Then $n_1,n_2\ge3$ and $n_1+n_2=n+2$.
By $2$-connectivity, each side contains a $u$-$v$ path with an internal vertex and with all its internal vertices on that side.
Contracting such a path on the opposite side shows that $G_i+uv$ is a minor of $G$.
Hence
\[
|E(G)|\le|E(G_1+uv)|+|E(G_2+uv)|-1
\le3(n+2)-11=3n-5.
\]
Finally, if $G$ is $3$-connected, Hall's theorem~\cite{Hall43}, together with the forbidden-minor characterisation of planar graphs, implies that $G$ is planar or isomorphic to $K_5$.
In either case $|E(G)|\le3n-5$.

By \cref{lem:path-power}(b), $P_n^{t-2}$ is $K_t$-minor-free for $t\ge3$, and $P_n^r$ has both pathwidth and treewidth~$r$ for $r\ge1$, provided $n$ is sufficiently large.
The lower bounds in \cref{thm:path-lb} therefore give the claimed infima for (a) and (c).
For $r=0$, every vertex of an edgeless graph must be selected, so the optimum ratio is~$1$.
The upper bound for pathwidth follows from the treewidth bound.
Finally, both classes in (e) contain all planar graphs, so their infimum is at least $7$ by~\cite{HLS14}.
\end{proof}

Some of the smallest cases already have exact optimal ratios: forests admit ratio $3$, and $K_{2,t}$-minor-free graphs admit ratio $2t-1$ for $t=2,3$~\cite[Table~1 and Theorem~4.4]{BGPW25b}.
Thus the new corollary does not improve these cases. In particular, it gives only $4+\eps$ when $t=2$.
For every $t\ge3$, the class of $K_{2,t}$-minor-free graphs contains all outerplanar graphs, so its infimum is at least $5$~\cite{BCGW21}.
For $t=2$, the path powers $P_n^2$ with $n\ge4$ are unavailable because they contain a $4$-cycle.
Ordinary paths instead give the lower bound $3-\eps$ by \cref{thm:path-lb}, matching the known ratio~$3$.
\Cref{tab:beyond} shows the resulting bounds and the corresponding lower bounds.


\section{Conclusion and open problems}\label{sec:open}

Our upper bounds follow by combining and adapting established
techniques: cluster-and-solve algorithms~\cite{CH06,CHW08},
randomized star contraction~\cite{LR15}, Kuhn's derandomization
principle for distributed linear programs~\cite{Kuh05}, and
local threshold rounding~\cite{BU17,Dvo19,HKOSV25}, and the additive-error reduction of~\cite[full version, Proposition~D.2]{BDGPSW26}.
The main contribution is the approximation guarantees obtained
from this combination.
We analyse the fractional cluster solutions directly against
the global fractional optimum and extend the rounding analysis
to account for an additive density excess.

For every proper minor-closed class $\Cc$ and every $\eps>0$,
this yields a deterministic $(1+\eps)$-approximation for
fractional dominating set in a number of rounds depending
only on $\Cc$ and $\eps$.
Threshold rounding gives a dominating set of size at most
$(2a(\Cc)+1+\eps)\gf(G)$.
Accounting for density excess gives the further bound
$(2d^*(\Cc)+1+\eps)\gamma(G)$.

Together with the lower bounds from powers of paths, these
guarantees determine the infimum of the deterministic
constant-round approximation ratios for planar graphs,
graphs of Euler genus at most any fixed $g$, graphs of
treewidth or pathwidth at most any fixed $r\ge1$, and
$K_t$-minor-free graphs with $3\le t\le9$.
In particular, the planar infimum is~$7$.
Our $5+\eps$ upper bound and the square-grid lower bound~\cite[Section~4]{BCGW21} also determine the infimum for triangle-free planar graphs as $5$.
\medskip

The following questions remain open.

\begin{enumerate}[label=\arabic*.]
\item \emph{Excluded minors.} For $K_t$-minor-free graphs with $t\ge10$, the infimum lies between $2t-3$ and $2a_t+1$, where $a_t=\Theta(t\sqrt{\log t})$ is the supremum edge density. Is the lower bound $2t-3$ attainable as an infimum? Since the class is closed under disjoint unions, $d^*=a_t$, so the asymptotic-density refinement gives no further improvement. For $K_{2,t}$-minor-free graphs with $t\ge4$, the corresponding interval is $[5,\min\{t+2,50\}]$.
\item \emph{The exact ratio.} Does a deterministic constant-round LOCAL algorithm achieve ratio $7$ on every planar graph? Our round bound grows as $\eps\to0$, while the lower bound of~\cite{HLS14} excludes only ratios strictly below~$7$.
\item \emph{Round complexity and probability.} For a fixed proper minor-closed class, what is the optimal dependence on $\eps$ of the round complexity for an expected $(1+\eps)$-approximation, or for a deterministic fractional $(1+\eps)$-approximation? Under what lower bound on $\gamma(G)$ in terms of $n=|V(G)|$ can a constant-round algorithm achieve ratio $1+\eps$ with failure probability at most $n^{-c}$, for a prescribed constant $c>0$?
\item \emph{Efficient deterministic algorithms.} Can the exponent in our polynomial round bounds be substantially reduced, already for a deterministic $(7+\eps)$-approximation on planar graphs? Can the same guarantees be obtained with local computation polynomial in the size of each collected neighborhood?
\item \emph{Bounded messages.} Is a constant-round $(7+\eps)$-approximation possible in the CONGEST model, with $\Oof(\log n)$ bits per edge in each round?
\item \emph{Anonymous networks.} What is the optimal deterministic constant-round ratio on planar networks with port numbers but without unique identifiers? Our algorithms use identifiers to break ties. The identifier-based lower bound also applies to this weaker model.
\item \emph{Planar subclasses.} For each integer $g\ge5$, determine the infimum of the deterministic constant-round ratios on planar graphs of girth at least $g$; see \cref{cor:subclasses}.
\end{enumerate}

{\raggedright
\IfFileExists{plainurl.bst}{\bibliographystyle{plainurl}}{\bibliographystyle{plain}}
\bibliography{refs}
}

\end{document}